\numberwithin{equation}{section}
\numberwithin{figure}{section}
\theoremstyle{plain}
\newtheorem{thm}{\protect\theoremname}[section]
\theoremstyle{remark}
\newtheorem{rem}[thm]{\protect\remarkname}
\theoremstyle{plain}
\newtheorem{lem}[thm]{\protect\lemmaname}
\newenvironment{proof}[1][\protect\proofname]{\par
	\normalfont\topsep6\p@\@plus6\p@\relax
	\trivlist
	\itemindent\parindent
	\item[\hskip\labelsep\scshape #1]\ignorespaces
}{%
	\endtrivlist\@endpefalse
}
\providecommand{\proofname}{Proof}
\providecommand{\lemmaname}{Lemma}
\providecommand{\remarkname}{Remark}
\providecommand{\theoremname}{Theorem}
\newcommand{\dd}{\mathrm{d}}
\begin{document}
\title[Asymptotic recovery of the net magnetisation of a bounded sample]{Magnetisation moment of a bounded 3D sample: \\asymptotic recovery
from planar measurements on a large disk}
\author{Dmitry Ponomarev $^{1,2}$}

\begin{abstract}
Inverse magnetisation problem consists in inferring information about a magnetic source from measurements of its magnetic field.
Unlike a general magnetisation distribution, the total magnetisation (net moment) of the source is a quantity that theoretically can be uniquely determined from the field. At the same time, it is often the most useful quantity for practical applications (on large and small scales) such as detection of a magnetic anomaly in magnetic prospection problem or finding the overall strength and mean direction of the magnetisation distribution of a magnetised rock sample.
It is known that the net moment components can be explicitly estimated using the so-called Helbig's integrals which involve integration of the magnetic field data on the plane against simple polynomials. Evaluation of these integrals requires knowledge of the magnetic field data on a large region or the use of ad hoc methods to compensate for the lack thereof.
In this paper, we derive higher-order analogs of Helbig's integrals which permit estimation of total magnetisation components in terms of measurement data available on a smaller region. 
Motivated by a concrete experimental setup for analysing remanent magnetisation of rock samples with a scanning microscope, we also extend Helbig's integrals to the situation when knowledge of only one field component is necessary. Moreover, apart from derivation of these novel formulas, we rigorously prove their accuracy.
The presented approach, based on an appropriate splitting in the Fourier domain and estimates of oscillatory integrals (involving both small and large parameters), elucidates the derivation of asymptotic formulas for the net moment components to an arbitrary order, a possibility that was previously unclear.
The obtained results are illustrated numerically and their robustness with respect to the noise is discussed.

\end{abstract}

\maketitle

\section{Introduction\label{sec:intro}}

\footnotetext[1]{FACTAS team, Centre Inria d'Universit{\'e} C{\^o}te d'Azur, France}
\footnotetext[2]{Contact: dmitry.ponomarev@inria.fr} 

Constant advances in magnetometry allow measurements of magnetic fields
of very low intensities with high spatial resolution. In particular,
this opens new horizons in paleomagnetic contexts. Ancient rocks and
meteorites possess remanent magnetisation and thus might preserve
valuable records of a past magnetic activity on Earth and other planets,
asteroids and satellites. Extraction of this relict magnetic information
is a lucrative but challenging task. Deducing magnetisation of a geosample
hinges on effective processing of the measurements of the magnetic
field available in a nearest neighbourhood of the sample since the
informative part of the field further away is very weak and significantly
deteriorated by noise. In particular set-ups of scanning SQUID (Superconducting
Quantum Interference Device) magnetometer or QDM (Quantum Diamond
Microscope), measurements are available in a planar area above the
sample, in a close vicinity of it, and such measurements typically
feature only one component of the magnetic field. This is in contrast
with more common settings that deal with magnetic fields of higher
intensity and hence could, on a methodological level, rely on the
classical dipolar approximation of a sample valid in a far-away region. 

In the present work, we are concerned with recovery of the total 
magnetisation (the so-called net moment) of a sample rather than dealing with reconstruction of the magnetisation distribution. While both inverse problems are known to be ill-posed (as an inverse source problem for elliptic partial-differential equations) due to the lack of continuous dependence of their solution on the input data (magnetic field measurements), the problem of reconstruction of the magnetisation distribution 
additionally lacks uniqueness of the solution in view of
presence of invisible (or ``silent'') sources, i.e., magnetisations that do not produce magnetic field, see \cite{BarHarLimSafWei2013}.  However, as it was shown in \cite{BarChevHarLebLimMar2019} for planar (thin-plate) magnetisation distributions, and as follows from the general Helmholtz decomposition \cite{BarLebNem2023}, compactly supported invisible sources do not contribute to the net moment. This statement fixes the non-uniqueness issue and makes the problem of net moment recovery a feasible task.
In theory, this problem is even solvable in a closed form when measurements
are available on the entire plane above the sample. In reality, however,
the measurements are very limited and corrupted by the presence of
noise which dominates the signal in distant regions. Therefore, we
arrive at the problem of estimating the net moment of a sample from
a magnetic field component available on some portion of the plane
in proximity of the sample. We shall assume that this portion of the plane is sufficiently large, for otherwise the instability of the problem, due to the already mentioned inherent ill-posedness, will be even more severe. On the other hand, this assumption on the large size of the measurement area allows us to obtain explicit ready-to-use formulas.


We do not intend here to provide neither physical nor mathematical
description of the problem in any detailed fashion. Instead, we refer
the reader to the set of previous publications \cite{BarHarLimSafWei2013,LimWeiBarHarSaf2013,BarChevHarLebLimMar2019,BarGuiHarNorSaf2020,LimWei2016,FuLimVolTrub2020,WeiLimFonBau2007,LimWei2009}
and briefly introduce basic concepts that allow us to be more specific
in describing our main result and comparing it with relevant works.

We should mention that the present setting is somewhat classical for problems in magnetometry or gravimetry (i.e., detection of underground anomalous region from surveillance data), with a particularity here that only one component is available for direct measurements. The obtained results hence should be useful beyond the laboratory context of rock paleomagnetism, and the proposed methodology is clearly extendable to other particular set-ups.

We assume that the magnetic sample is described by a compactly supported
vector distribution
\[
\vec{\mathcal{M}}\left(\vec{x}\right)\equiv\left(\mathcal{M}_{1}\left(x_{1},x_{2},x_{3}\right),\mathcal{M}_{2}\left(x_{1},x_{2},x_{3}\right),\mathcal{M}_{3}\left(x_{1},x_{2},x_{3}\right)\right)^{T},\hspace{1em}\text{supp \ensuremath{\vec{\mathcal{M}}}}\subset Q,
\]
with some bounded set $Q\subset\mathbb{R}^{3}$. 

The relation between the unknown magnetisation distribution $\vec{\mathcal{M}}$ and the vertical component of the produced magnetic field is 
\begin{equation}
B_3\left(\vec{x}\right)=\dfrac{\partial}{\partial x_3}\iiint_{Q}\dfrac{1}{4\pi\left|\vec{x}-\vec{t}\right|}\nabla\cdot\vec{\mathcal{M}}\left(\vec{t}\right)\dd^{3}t,\hspace{1em}\vec{x}\in\mathbb{R}^{3}\backslash \overline{Q}.\label{eq:B3_divM}
\end{equation}
This latter quantity is experimentally measured on a portion of the horizontal plane at height $x_3=h$ for some constant $h>0$ such that this plane does not intersect $Q$, and it can be equivalently written as

{\small{}
\begin{eqnarray}
B_{3}\left(\mathbb{\mathbf{x}},h\right) & = & \iiint_{Q}\dfrac{3\left(h-t_{3}\right)\left[\mathcal{M}_{1}\left(\mathbf{t},t_{3}\right)\left(x_{1}-t_{1}\right)+\mathcal{M}_{2}\left(\mathbf{t},t_{3}\right)\left(x_{2}-t_{2}\right)\right]+\mathcal{M}_{3}\left(\mathbf{t},t_{3}\right)\left(2\left(h-t_{3}\right)^{2}-\left|\mathbf{x}-\mathbf{t}\right|^{2}\right)}{4\pi\left(\left|\mathbf{x}-\mathbf{t}\right|^{2}+\left(h-t_{3}\right)^{2}\right)^{5/2}}\dd^{3}t.\label{eq:B3_3D}
\end{eqnarray}
}Here and onwards, we employ bold symbols to denote $\mathbb{R}^{2}$ vectors,
e.g., $\mathbf{x}\equiv\left(x_{1},x_{2}\right)^{T}$ while using $\vec{x}\equiv\left(x_1,x_2,x_3\right)^{T}$ for $\mathbb{R}^3$ vectors. 
When $\vec{\mathcal{M}}$ is a distribution, the integral on the right-hand side of (\ref{eq:B3_3D}) should be understood as a sum of three terms, each is the duality pairing of a compactly supported scalar distribution $\mathcal{M}_{1}$, $\mathcal{M}_{2}$ or $\mathcal{M}_{3}$ with the corresponding smooth function on $\mathbb{R}^3$ (note that since the measurement plane does not intersect $Q$, the denominator is bounded away from zero and thus no singularities arise).

The geometry of the described setting is schematically shown in Figure \ref{fig:magn_geom}. This corresponds, up to a truncation of rectangular magnetic field map, to 
an experimental set-up of the Paleomagnetism lab at EAPS (Earth, Atmospheric and Planetary Sciences) department, MIT (Massachusetts Institute of Technology), involving a SQUID magnetometer, see \cite{WeiLimFonBau2007}. Moreover, with an extra preprocessing step of the field data, this also extends to the QDM magnetometer set-up used in the Paleomagnetism lab at Harvard University \cite{FuLimVolTrub2020}.

\begin{figure}
\includegraphics[scale=0.4]{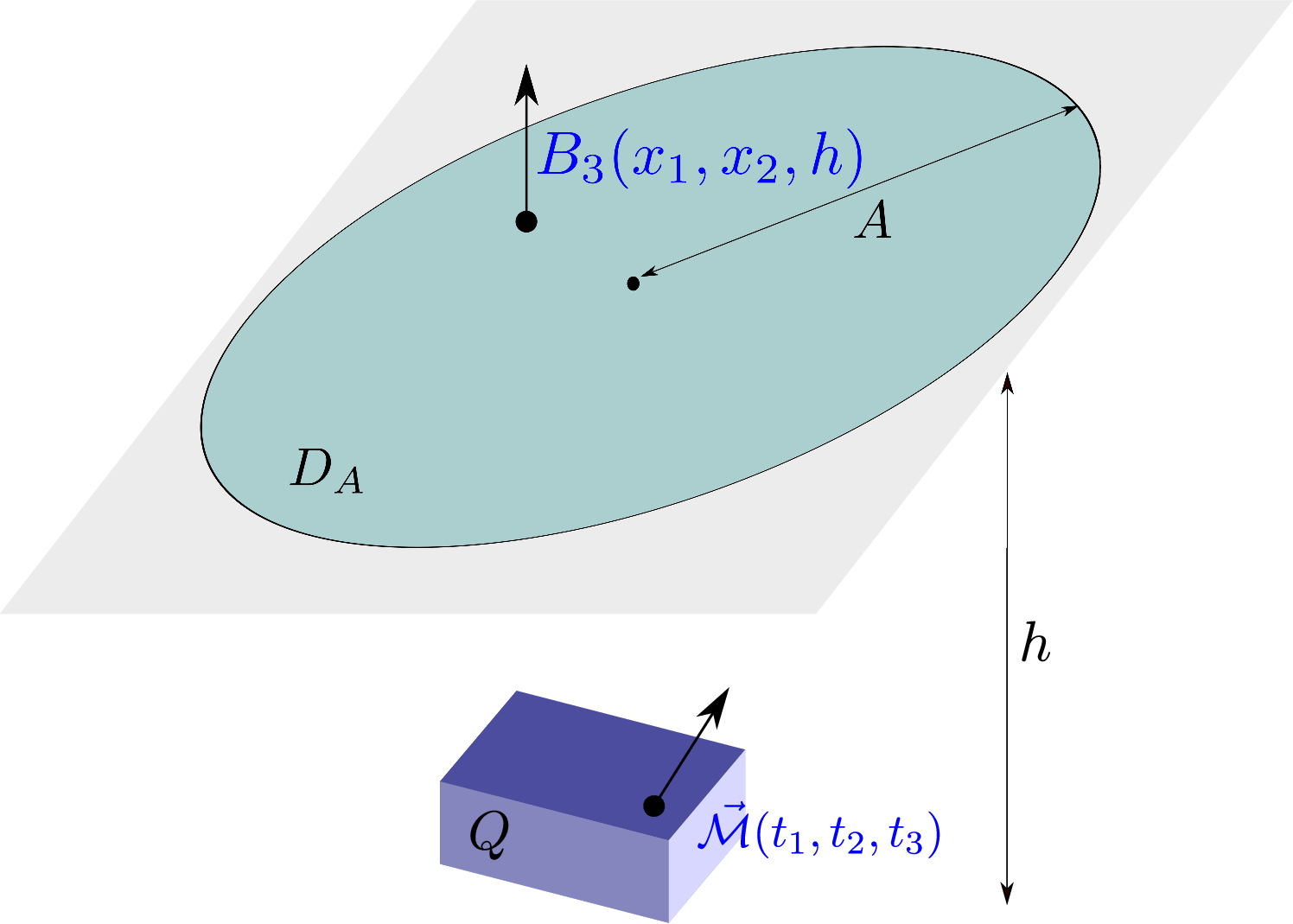}\caption{\label{fig:magn_geom} Schematic illustration of geometry of the problem arising from experimental set-up in the Paleomagnetism lab at EAPS, MIT (USA). $B_3$ is a vertical component of the magnetic field measured over the horizontal disk $D_A$ at some distance $h$ from a sample $Q$ with the magnetisation distribution $\vec{\mathcal{M}}$.}

\end{figure}

A quantity of basic physical interest is the net magnetisation moment:
\begin{equation}
\vec{m}\equiv\left(m_{1},m_{2},m_{2}\right)^{T}:=\iiint_{Q}\vec{\mathcal{M}}\left(\vec{x}\right)\dd^{3}x.\label{eq:m_def}
\end{equation}
This is a constant $\mathbb{R}^3$ vector equal to the total magnetisation of the sample which can be determined uniquely, unlike the complete magnetisation distribution $\vec{\mathcal{M}}$ which essentially enters the expression for  $B_3$ through $\nabla\cdot\vec{\mathcal{M}}$ and hence can be changed by any divergence-free vector field without altering values of $B_3$, see (\ref{eq:B3_divM}) and discussions in \cite{BarHarLimSafWei2013}. In what follows, we may interchangeably use slightly different names for $\vec{m}$: net moment, net magnetisation, zeroth-order algebraic
moment of a magnetisation distribution. When $\vec{\mathcal{M}}$ is not a regular function but a distribution (generalised function), the integral in (\ref{eq:m_def}) should be understood componentwise as the duality pairing of a compactly supported distribution with the constant function $1$ on $\mathbb{R}^3$: $m_j=\left\langle\mathcal{M}_j,1\right\rangle$,  $j\in\left\{1, 2, 3\right\}$. 

The present paper is dedicated to the explicit asymptotic estimation of the net magnetisation moment $\vec{m}$ from measurements of the magnetic field $B_3$ over a large planar area. In particular, as the measurement area, we have taken the horizontal disk $D_A$ of radius $A$ located at height $h$ above the magnetic sample. We formally derive and prove that each of the quantities $m_1$, $m_2$ and $m_3$ can be estimated by means of integration of the measured magnetic field data $B_3$ on $D_A$ against an appropriate function. We obtain several of such estimates depending on the asymptotic order (up to order 5 for $m_1$, $m_2$, and up to order 4 for $m_3$), though this process could be continued to obtain even higher order estimates. As it will be discussed, higher-order estimates do not make those of lower order redundant because of different level of sensitivity to noise which is unavoidable in any realistic setting.

It should be mentioned that even though recovery of the net moment
of a sample, our primary concern, is an important practical problem
on its own right, it can also serve, under appropriate assumptions,
as a preliminary step for the full magnetisation inversion (i.e., finding a magnetisation distribution $\mathcal{\vec{M}}$ that would be consistent with the measured data $B_3$ according to (\ref{eq:B3_3D})). Indeed, while it is unrealistic to retrieve three   generally independent magnetisation components (functions
or, more generally, distributions), $\mathcal{M}_{1}$, $\mathcal{M}_{2}$,
$\mathcal{M}_{3}$, from the partial knowledge of only one function $B_{3}$
(see (\ref{eq:B3_3D})) without additional assumptions, the problem simplifies significantly for
a class of samples which are unidirectionally magnetised (i.e., when magnetisation direction does not change throughout the sample but its magnitude does). Since quantifying the net moment of a sample implies a definite magnetisation direction, it is thus an essential element in this complete reconstruction procedure, see \cite{LimWeiBarHarSaf2013}. 

Note that the obtained estimates may be used either directly on the data when the original measurement area is sufficiently large compared to the localisation of the magnetisation in the sample and the lift-off distance (scanning height $h$), or after a preliminary application of the field extrapolation, see \cite{Pon2024}.

We stress that our analytical expressions for the asymptotic estimates are original, with only partial analogs in other works when translated into the current context.
According to \cite{Phil2005}, for extended finite magnetic sources, net moment estimates in terms of the measured field have first appeared in Helbig's work \cite{Hel1963} and later rediscovered by Clark and Schmidt, see, e.g., \cite{Schm-Cl1998} in context of magnetic prospection and detection of potential field anomalies, in particular. When truncated to a finite measurement area, these estimates correspond to our first-order asymptotic formulas for tangential components of net moment, but without a mention of the order of the asymptotic convergence, let alone a possibility of obtaining their more precise higher-order counterparts. Note that, in our context, since only $B_3$ component of the field is measured, those previous estimates do not provide means to estimate the normal net moment component $m_3$. Indeed, an estimate of $m_3$ was provided only in terms of either $B_1$ or $B_2$ components. When measurements of these two field components are not available, one can potentially argue that it is still possible to use the Helbig's formulas after reconstructing the missing components of the field from the knowledge of $B_3$ alone using, for example, \cite{Gun1975} and \cite[Ch. 12]{Bla1996}. However, this is possible only to accomplish in a stable way only if $B_3$ is known on the entire plane. In view of this, we believe that an explicit estimate of $m_3$ from knowledge of $B_3$ alone on a finite region for the first time was proposed only in \cite{Thesis, BarLebLimPon2017, BarChevLebHarLimPon2018}. It should be noted that results in \cite{BarChevLebHarLimPon2018} are also applicable for measurement regions different from disk, namely, for those of rectangular and diamond shapes.
We take advantage of this comparison of results to mention and to correct a couple of issues in \cite{BarLebLimPon2017}. First of all, there is a typo in the $m_3$ formula obvious when compared to (\ref{eq:fin_estim_m3_ord2}) or \cite[Thm 3.1.1]{Thesis}: in eq. \cite[Eq. (5)]{BarLebLimPon2017}, the factor $A$ should be in the numerator rather than the denominator. Moreover, the numerical illustration of the $m_3$ estimate, in case of noisy data, showed a divergent behaviour with the growth of the measurement disk, and this linearly growing trend was proposed to be removed by an appropriate postprocessing. It turns out that this undesirable behaviour was merely a consequence of the numerical implementation (Simpson's quadrature rule should have been avoided due to the low regularity of the noisy field) and not the asymptotic formula itself, as demonstrated in the present work; the same is also true for higher-order estimates of $m_3$. 
The drawback of finiteness of the measurement region for the application of Helbig's integral formulas has been recognised a while ago and its influence has been analysed, see, e.g., \cite{McKeFosHil2012}. One possibility to deal with this was to choose and correct a suitable integration window so that the consistency of Helbig's integrals is respected (since some integrals of the field data must be identically zero), see \cite{Phil2005}. Another approach consists in replacing the missing field data by the field of a fictitious dipole whose location and moment are estimated iteratively from appropriate integrals of the available data \cite{AndPed1979, CarPed2008}. A similar iterative strategy \cite{MedSil1995} is based on the multipolar expansion of order 2 (involving quadrupole moment tensor). In a recent work \cite{LimWei2023}, a multipolar approach of arbitrary order has been used to fit the measured data furnishing the net moment as first coefficients of the multipolar expansion. This shows good results but requires a potentially delicate parameter tuning (such as choice of the origin and the order of the expansion) as a part of the regularisation process.  
A totally different method which does not theoretically require a large measurement region is that based on a bounded extremal problem for finding best-possible linear estimators \cite{BarChevHarLebLimMar2019}. These estimators are auxiliary functions whose integrals against the measured data furnish the components of the net moment. Finding such estimators in a particular functional class requires solving an integro-differential equation which is a numerically laborous task. While formally applicable for any size and position of the measurement region with respect to the magnetic source, a practical sensitivity of that approach to perturbations of the measured data must increase enormously when the measurement area is not sufficiently large or well-located with respect to the source. Moreover, this approach is restricted to planar and regular (square-integrable) sources at a known height (depth). 

The present approach is applicable for one-component field measurements. It has high accuracy already for relatively small measurement area due to the higher-order asymptotic estimates (which, unlike Helbig's integrals, include also estimates for $m_3$ from $B_3$). The approach is easy to implement numerically due to the explicit (closed-form) nature of the estimates. It is applicable for bounded volumetric (and planar, in particular case) magnetic sources which are not necessarily regular (they may be even compactly supported distributions, not just square-integrable functions). Moreover, the approach does not require the knowledge of the depth of the source (as all the estimates are seen to be independent of the height parameter $h$  which must only satisfy a certain \textit{a priori} assumption). It is worth stressing that applicability of the approach to volumetric samples is not a purely academic generalisation of that for planar (thin-slab) magnetic distributions:  it is important for practical situations such as QDM microscopy (when the lift-off distance is too small, contributions from deeper sources are not negligible) but also in macroscopic contexts such as the already mentioned magnetic prospection problem.

The structure of the paper is as follows. Section \ref{sec:main} presents the main results of the paper formulated as Theorem \ref{thm:main} and discusses the limitations of their applicability. Section \ref{sec:proof}
has a twofold goal. First, it is meant to show how one idea based
on straightforward Fourier analysis can yield the simplest version of asymptotic
net moment estimates for both tangential and normal components. Second,
that section illustrates that, by means of a careful asymptotic analysis,
the explicit estimates can be not only proved rigorously but also
extended to higher orders. Hence, this material exactly constitutes
the proof of Theorem \ref{thm:main}. Then, in Section \ref{sec:numerics},
we illustrate the results numerically on the case where the magnetisation
has a singular support (a collection
of dipoles is modelled by magnetisation distribution that is a sum of Dirac delta functions) and deal with some practical aspects of the obtained estimates.
Finally, we conclude with Section \ref{sec:concl} summarising the
work, discussing the obtained results and outlining potential further research directions.

\section{Main results}\label{sec:main}
Let $\mathcal{E}^{\prime}\left(\mathbb{R}^{3}\right)$ be the space of compactly supported distributions on $\mathbb{R}^{3}$, i.e., the linear functionals on smooth functions $C^\infty\left(\mathbb{R}^3\right)$, see \cite[Sect. 6.1]{Strichartz}.

Denote $D_{A}:=\left\{ \mathbf{x}\in\mathbb{R}^{2}:\,\left|\mathbf{x}\right|<A\right\} $,
the disk of radius $A$ centreed at the origin $\mathbf{x}=\mathbf{0}$.

The main result of this work is summarised in the following theorem.
\begin{thm}
\label{thm:main} Assume that $\vec{\mathcal{M}}\in\left[\mathcal{E}^\prime\left(\mathbb{R}^3\right)\right]^3$ with $\text{supp }\vec{\mathcal{M}}\subset Q$ for a bounded set $Q\subset\mathbb{R}^{3}$. Suppose that the values of the vertical component of the magnetic field $B_3$ (related to $\vec{\mathcal{M}}$ by means of (\ref{eq:B3_3D})) are known on the horizontal disk $D_{A}\times\left\{x_3=h\right\}$ that does not intersect $Q$ and whose radius $A$ is
sufficiently large so that the following inequality holds:
\begin{equation}
\sup_{\vec{t}\in Q,\;\mathbf{x}\in\mathbb{R}^{2}\backslash D_{A}}\left|\frac{t_{1}^{2}+t_{2}^{2}+\left(h-t_{3}\right)^{2}}{x_{1}^{2}+x_{2}^{2}}-2\frac{x_{1}t_{1}+x_{2}t_{2}}{x_{1}^{2}+x_{2}^{2}}\right|<1.\label{eq:cond_asympt_thm}
\end{equation}
Then, the components of the net moment vector (\ref{eq:m_def}) can be asymptotically
estimated with different orders of accuracy as follows.\\
First-order estimates:
\begin{align}
m_{j} & =2\iint_{D_{A}}x_{j}B_{3}\left(\mathbf{x},h\right)\dd^{2}x+\mathcal{O}\left(\dfrac{1}{A}\right),\hspace{1em}\hspace{1em}j\in\left\{ 1,2\right\} .\label{eq:fin_estim_m1m2_ord1}
\end{align}
Second-order estimates:
\begin{align}
m_{j} & =2\iint_{D_{A}}\left(1+\dfrac{4x_{j}^{2}}{3A^{2}}\right)x_{j}B_{3}\left(\mathbf{x},h\right)\dd^{2}x+\mathcal{O}\left(\dfrac{1}{A^{2}}\right),\hspace{1em}\hspace{1em}j\in\left\{ 1,2\right\} ,\label{eq:fin_estim_m1m2_ord2}
\end{align}
\begin{equation}
m_{3}=2A\iint_{D_{A}}B_{3}\left(\mathbf{x},h\right)\dd^{2}x+\mathcal{O}\left(\dfrac{1}{A^{2}}\right).\label{eq:fin_estim_m3_ord2}
\end{equation}
Third-order estimates:
\begin{equation}
m_{j}=\frac{2}{5}\iint_{D_{A}}\left[5+24\left(\frac{x_{j}}{A}\right)^{4}\right]x_{j}B_{3}\left(\mathbf{x},h\right)\dd^{2}x+\mathcal{O}\left(\frac{1}{A^{3}}\right),\hspace{1em}\hspace{1em}j\in\left\{ 1,2\right\} ,\label{eq:fin_estim_m1m2_ord3}
\end{equation}
\begin{equation}
m_{3}=\frac{A}{4}\iint_{D_{A}}\left[5+40\left(\frac{x_{j}}{A}\right)^{4}-128\left(\frac{x_{j}}{A}\right)^{6}\right]B_{3}\left(\mathbf{x},h\right)\dd^{2}x+\mathcal{O}\left(\frac{1}{A^{3}}\right),\hspace{1em}\hspace{1em}j\in\left\{ 1,2\right\} .\label{eq:fin_estim_m3_ord3}
\end{equation}
Fourth-order estimates:
\begin{equation}
m_{j}=\frac{2}{105}\iint_{D_{A}}\left[105-2016\left(\frac{x_{j}}{A}\right)^{4}+19200\left(\frac{x_{j}}{A}\right)^{6}-22400\left(\frac{x_{j}}{A}\right)^{8}\right]x_{j}B_{3}\left(\mathbf{x},h\right)\dd^{2}x+\mathcal{O}\left(\frac{1}{A^{4}}\right),\hspace{1em}\hspace{1em}j\in\left\{ 1,2\right\} ,\label{eq:fin_estim_m1m2_ord4}
\end{equation}
\begin{equation}
m_{3}=\frac{A}{24}\iint_{D_{A}}\left[35+1792\left(\frac{x_{j}}{A}\right)^{6}-3200\left(\frac{x_{j}}{A}\right)^{8}\right]B_{3}\left(\mathbf{x},h\right)\dd^{2}x+\mathcal{O}\left(\frac{1}{A^{4}}\right),\hspace{1em}\hspace{1em}j\in\left\{ 1,2\right\} .\label{eq:fin_estim_m3_ord4}
\end{equation}
Fifth-order estimates:
\begin{equation}
m_{j}=\frac{2}{231}\iint_{D_{A}}\left[231-52800\left(\frac{x_{j}}{A}\right)^{6}+246400\left(\frac{x_{j}}{A}\right)^{8}-225792\left(\frac{x_{j}}{A}\right)^{10}\right]x_{j}B_{3}\left(\mathbf{x},h\right)\dd^{2}x+\mathcal{O}\left(\frac{1}{A^{5}}\right),\hspace{1em}\hspace{1em}j\in\left\{ 1,2\right\} .
\label{eq:fin_estim_m1m2_ord5}
\end{equation}
\end{thm}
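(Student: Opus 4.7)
My plan is to swap the order of integration in $\iint_{D_A}w(\mathbf{x})B_3(\mathbf{x},h)\,d^2x$, analyse each resulting inner disk integral in polar coordinates by expanding the kernel as a binomial series that is uniformly convergent on the tail region thanks to (\ref{eq:cond_asympt_thm}), and then pick the weights $w$ so that only the net-moment contribution survives at the prescribed order. First, Fubini (justified by compactness of $Q$ and integrability on $D_A$) gives
\begin{equation*}
\iint_{D_A}w(\mathbf{x})B_3(\mathbf{x},h)\,d^2x=\sum_{j=1}^{3}\iiint_{Q}\mathcal{M}_j(\vec{t})\,\mathcal{J}_j(\vec{t},h,A)\,d^3t,\qquad \mathcal{J}_j:=\iint_{D_A}w\,K_j(\mathbf{x}-\mathbf{t},h-t_3)\,d^2x,
\end{equation*}
where $K_1,K_2,K_3$ are the three scalar kernels read off from (\ref{eq:B3_3D}).

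Next, for weights such as $w=x_j$ for which the full-plane integral $\iint_{\mathbb{R}^2}wK_j$ converges, I split $D_A=\mathbb{R}^2\setminus(\mathbb{R}^2\setminus D_A)$; a translation $\mathbf{y}=\mathbf{x}-\mathbf{t}$ together with elementary polar radial integrals evaluates the full-plane piece exactly (yielding $m_j/2$ when $w=x_j$ is paired with $K_j$, and $0$ otherwise, which already gives (\ref{eq:fin_estim_m1m2_ord1})). For heavier weights like $x_j^3,\,x_j^5,\ldots$ the full-plane integral diverges; there I would instead work directly on $D_A$ by rescaling $\mathbf{x}=A\mathbf{u}$ and expanding the integrand in $1/A$ on the unit disk. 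On the tail region, condition (\ref{eq:cond_asympt_thm}) provides $|\epsilon|\le q<1$ uniformly for $\vec{t}\in Q$ (by compactness), so that the binomial series
\begin{equation*}
\bigl(|\mathbf{x}-\mathbf{t}|^2+(h-t_3)^2\bigr)^{-5/2}=\rho^{-5}\sum_{n=0}^{\infty}\binom{-5/2}{n}\epsilon^n,\qquad \rho:=|\mathbf{x}|,
\end{equation*}
converges absolutely and can be integrated term by term. Doing the angular $\phi$-integral first, Fourier orthogonality on $[0,2\pi]$ annihilates most contributions and retains only those whose angular modes match those of the polynomial weight $w$; the subsequent radial integration reduces to $\int_A^{\infty}\rho^{1-k}\,d\rho=A^{2-k}/(k-2)$, producing an asymptotic expansion $\iint_{D_A}wB_3=c_0 m_j+c_1\,A^{-1}+c_2\,A^{-2}+\cdots$ in which each $c_p$ for $p\ge 1$ is a linear combination of algebraic moments of $\vec{\mathcal{M}}$ of order $\ge 1$ (such as $\iiint_Q t_i^{\alpha}(h-t_3)^{\beta}\mathcal{M}_j\,d^3t$). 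The polynomial weights in (\ref{eq:fin_estim_m1m2_ord2})--(\ref{eq:fin_estim_m1m2_ord5}) are then uniquely determined by imposing $c_1=\cdots=c_{p-1}=0$, and their rational numerical constants ($4/3,\,24/5,\,19200/105,\,\ldots$) arise from values of $\int_0^{2\pi}\cos^{2k}\phi\,d\phi$ weighted by the binomial coefficients $\binom{-5/2}{n}$.

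The main obstacle is the combinatorial bookkeeping needed to cancel the intermediate-order moments of $\vec{\mathcal{M}}$ so that only $m_j$ survives up to the desired accuracy; concretely this means solving a small linear system at each order in $1/A$, and verifying that the resulting polynomial coefficients match those announced in the theorem. A secondary technical point is uniform control of the binomial remainder in $\vec{t}$: from $q<1$ one obtains a tail bound $O(q^{N+1})$, and the radial integration then converts it into the claimed error $\mathcal{O}(A^{-p})$.
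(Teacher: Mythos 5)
Your plan is sound and, in outline, reaches the theorem by a genuinely different and more elementary route than the paper. The paper works on the Fourier side: it Taylor-expands $\hat{B}_{3}\left(k_{1},0,h\right)$ at $k_{1}=0^{+}$ (via Paley--Wiener), splits the transform into $D_{A}$ and exterior parts, and must evaluate the exterior integrals of $\sin\left(2\pi k_{1}x_{1}\right)$ and $\cos\left(2\pi k_{1}x_{1}\right)$ against the far-field expansion in closed form through Bessel and Struve functions (Lemmas \ref{lem:int_J1_x_odd}--\ref{lem:int_J1_var}) before differentiating at $k_{1}=0^{+}$, including a delicate analyticity argument for high-order derivatives in Step 2. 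Because you test $B_{3}$ directly against fixed polynomial weights in physical space, your angular integrals (your ``Fourier orthogonality'' is exactly the paper's Lemma \ref{lem:trig_ints}) and radial integrals $\int_{A}^{\infty}\rho^{1-k}d\rho$ are elementary, and the entire Bessel--Struve apparatus is bypassed. The shared skeleton is otherwise the same: the binomial far-field expansion of the kernel under (\ref{eq:cond_asympt_thm}) (the paper's (\ref{eq:B3_asympt})--(\ref{eq:B3_asympt_def})), parity cancellations, and a final elimination of the unknown far-field coefficients by a linear system; note that this system is rank-deficient (cf. (\ref{eq:T_lin_dep1}), (\ref{eq:T_lin_dep2})), so ``uniquely determined'' should be weakened to ``a consistent choice exists''.

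Three points of your sketch need repair, though none is fatal. First, ``rescaling $\mathbf{x}=A\mathbf{u}$ and expanding on the unit disk'' is not literally valid: the binomial series diverges for $\left|\mathbf{u}\right|\lesssim1/A$, where (\ref{eq:cond_asympt}) fails. The fix is to freeze a radius $A_{0}$, split $D_{A}=D_{A_{0}}\cup\left(D_{A}\backslash D_{A_{0}}\right)$, and expand only on the annulus; the inner disk then contributes an $A$-independent constant to $\iint_{D_{A}}x_{1}^{2n+1}B_{3}\,d^{2}x$, which enters the normalized identity at order $A^{-2n}$. These constants are the paper's $d_{2n+1}$ of (\ref{eq:d3_def})--(\ref{eq:d11_def}); you never need their explicit moment form (boundedness and $A$-independence suffice, so your claim that every $c_{p}$ is a moment combination is stronger than necessary and harder to prove in $x$-space), but they are precisely what forbids reusing the $x_{1}^{3}$ identity beyond second order and forces the jumps in weight degree visible in (\ref{eq:fin_estim_m1m2_ord4})--(\ref{eq:fin_estim_m1m2_ord5}). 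Second, for $w=x_{j}$ the $\mathcal{M}_{3}$-kernel piece is only conditionally convergent ($x_{1}K_{3}\sim-x_{1}/\left(4\pi\left|\mathbf{x}\right|^{3}\right)$), so the translation $\mathbf{y}=\mathbf{x}-\mathbf{t}$ in your exact full-plane evaluation must be justified by comparing integrals over expanding disks (the symmetric-difference contribution vanishes as $R\rightarrow\infty$). Third, your proposal is silent on $m_{3}$: there the full-plane integral vanishes, and $m_{3}$ appears instead as the coefficient $a_{0}=-m_{3}/\left(4\pi\right)$ of the leading $1/A$ tail term, so the even-weight analogue of your scheme recovers (\ref{eq:fin_estim_m3_ord2})--(\ref{eq:fin_estim_m3_ord4}) but with the net moment extracted from the $A^{-1}$ coefficient rather than from a constant term. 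With these amendments your approach constitutes a legitimate, arguably simpler, derivation of the same identities the paper obtains by Fourier matching.
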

\begin{rem}
\label{rem:about_units} Here and onwards (except numerics in Section
\ref{sec:numerics}), for the sake of simplicity, we have assumed
the system of physical units such that the constant of magnetic permeability
of vacuum $\mu_{0}$ is $1$. In general, the right-hand sides of
expression (\ref{eq:B3_3D}) should have the
factor $\mu_{0}$, and, in Si units, $\mu_{0}=4\pi\cdot10^{-7}$ N
/ A$^{2}$. Consequently, the right-hand sides of all the formulas
(\ref{eq:fin_estim_m1m2_ord1})--(\ref{eq:fin_estim_m1m2_ord5})
should, in principle, have the factor $1/\mu_{0}$.
\end{rem}
\begin{rem}\label{rem:assumpt}
It is not difficult to see that condition (\ref{eq:cond_asympt_thm}) can be replaced with
\begin{equation}\label{eq:cond_asympt_thm_alt}
\underset{{\vec{t}\in Q,\,\,\mathbf{x}\in\mathbb{R}^{2}\backslash D_{A}}}{\text{inf}}
\frac{\left(x_{1}-t_{1}\right)^{2}+\left(x_{2}-t_{2}\right)^{2}}{2\left(t_{1}^{2}+t_{2}^{2}\right)+\left(h-t_{3}\right)^{2}}>1.
\end{equation}
Indeed, (\ref{eq:cond_asympt_thm}) is obtained from (see (\ref{eq:cond_asympt}))
\[
-1<\frac{t_{1}^{2}+t_{2}^{2}+\left(h-t_{3}\right)^{2}}{x_{1}^{2}+x_{2}^{2}}-2\frac{x_{1}t_{1}+x_{2}t_{2}}{x_{1}^{2}+x_{2}^{2}}<1,\hspace{3em}\vec{t}\in Q,\,\,\mathbf{x}\in\mathbb{R}^{2}\backslash D_{A}.
\]
Here, the validity of the left inequality is trivial since 
\[
\left(x_{1}-t_{1}\right)^{2}+\left(x_{2}-t_{2}\right)^{2}+\left(h-t_{3}\right)^{2}>0, \hspace{3em}\vec{t}\in Q,\,\,\mathbf{x}\in\mathbb{R}^{2}\backslash D_{A},
\]
whereas the right inequality can be rewritten as
\[
x_{1}^{2}+x_{2}^{2}+2\left(x_{1}t_{1}+x_{2}t_{2}\right)+t_{1}^{2}+t_{2}^{2}>2\left(t_{1}^{2}+t_{2}^{2}\right)+\left(h-t_{3}\right)^{2},\hspace{3em}\vec{t}\in Q,\,\,\mathbf{x}\in\mathbb{R}^{2}\backslash D_{A}.
\]
By the symmetry of the area $\mathbb{R}^{2}\backslash D_{A}$, we can change the signs in front of $x_1$ and $x_2$. Recognising the complete square on the left-hand side and dividing over the positive expression from the right-hand side, we arrive exactly at the fraction appearing in (\ref{eq:cond_asympt_thm_alt}) and it only remains to take the infimum.

Finally, let us point out a simplification occurring in the common setting (e.g., when $Q$ is a minimal rectangular parallelepiped or a ball containing the magnetisation support located under the measurement device). Namely, if $Q$ is a bounded connected set whose lowest points lie at $x_3=0$ plane and whose horizontal projection $Q_{12}$ contains the centre $\mathbf{x}=\mathbf{0}$ of the measurement disk $D_A$, then inequality (\ref{eq:cond_asympt_thm_alt}) can be ensured by imposing a stricter but geometrically simpler condition
\begin{equation}\label{eq:cond_asympt_thm_alt_smpl}
\frac{2\,\left[\text{dist}\left(\partial D_{A},Q_{12}\right)\right]^{2}}{\left(\text{diam }Q_{12}\right)^{2}+2h^{2}}>1,
\end{equation}
where 
$\text{diam }Q_{12}$ 
denotes the diameter of $Q_{12}$ and $\text{dist}\left(\partial D_{A},Q_{12}\right)$ is the distance between the boundary $\partial D_{A}$ of the measurement disk and $Q_{12}$. Indeed, by the above assumption on the location of $Q$ and choice of the origin, we have $t_1^2+t_2^2\leq \left(\text{diam }Q_{12}\right)^2/4$, $\left(h-t_3\right)^2\leq h^{2}$ for $\vec{t}\in Q$, and hence the inequality
\begin{align*}
\underset{{\vec{t}\in Q,\,\,\mathbf{x}\in\mathbb{R}^{2}\backslash D_{A}}}{\text{inf}}
\frac{\left(x_{1}-t_{1}\right)^{2}+\left(x_{2}-t_{2}\right)^{2}}{2\left(t_{1}^{2}+t_{2}^{2}\right)+\left(h-t_{3}\right)^{2}}&\geq \frac{1}{\left(\text{diam }Q_{12}\right)^{2}/2+h^2}\,\,\underset{{\vec{t}\in Q,\,\,\mathbf{x}\in\mathbb{R}^{2}\backslash D_{A}}}{\text{inf}}
\left[\left(x_{1}-t_{1}\right)^{2}+\left(x_{2}-t_{2}\right)^{2}\right]\\
&=\frac{2\,\left[\text{dist}\left(\partial D_{A},Q_{12}\right)\right]^{2}}{\left(\text{diam }Q_{12}\right)^{2}+2h^{2}}
\end{align*}
means that (\ref{eq:cond_asympt_thm_alt_smpl}) would imply (\ref{eq:cond_asympt_thm_alt}).
\end{rem}

Note that condition (\ref{eq:cond_asympt_thm_alt_smpl}) specific to our asymptotic approach is actually quite natural and generally consistent with the geometric setting in which a good (reasonably stable) inversion can be expected since the Poisson transformation lying at the heart of the integral operator in (\ref{eq:B3_3D}) is known to rapidly spread away the source information, see also \cite{McKeFosHil2012}.

The derivation and the proof of the main estimates is lengthy and consists of a combination of simple techniques such as Fourier transforms, series expansions (at the origin and at infinity) and appropriate integral splitting. One of the main difficulties that some integrals involve both large and small parameters and hence obtaining uniform estimates by standard means of analysis of oscillatory and decaying integrals fails. Instead, we are able to benefit from circular geometry of the measurement area by identifying integral representations of known special (Bessel type) functions and rely on their differential and integral identities and asymptotics \cite[Sect. 10--11]{NIST} to conclude with the desired estimates.

\medskip

\section{Proof of Theorem \ref{thm:main}\label{sec:proof}}

\subsection{Some notation and preparatory transformations}

Before we proceed with deriving rigorously formulas (\ref{eq:fin_estim_m1m2_ord1})--(\ref{eq:fin_estim_m1m2_ord5})
and thus proving Theorem \ref{thm:main}, let us introduce some useful
notations. We shall use the following notational shortcut for the integral of
the magnetisation distribution against monomials 

\begin{equation}
\left\langle x_{1}^{j_{1}}x_{2}^{j_{2}}x_{3}^{j_{3}}M_{n}\right\rangle :=\iiint_{Q}x_{1}^{j_{1}}x_{2}^{j_{2}}x_{3}^{j_{3}}\mathcal{M}_{n}\left(\vec{x}\right)\dd^{3}x,\hspace{1em}\hspace{1em}n\in\left\{ 1,2,3\right\} ,\hspace{1em}j_{1},j_{2},j_{3}\in\mathbb{N}_{0},\label{eq:alg_moments_3D}
\end{equation}
where we denoted $\mathbb{N}_{0}:=\left\{ 0,1,2,\ldots\right\} $,
the set of natural numbers with zero. Following this convention,
for the sake of brevity, we may also write, for example,
\[
\left\langle \left(h-x_{3}\right)M_{n}\right\rangle :=hm_{n}-\left\langle x_{3}M_{n}\right\rangle =\iiint_{Q}\left(h-x_{3}\right)\mathcal{M}_{n}\left(\vec{x}\right)\dd^{3}x,\hspace{1em}\hspace{1em}n\in\left\{ 1,2,3\right\}.
\]
As before, we note that, when magnetisation $\vec{\mathcal{M}}$  is not a function but a compactly supported distribution,
the integrals above should be understood as the duality pairings between $\mathcal{M}_n\in\mathcal{E}^{\prime}\left(\mathbb{R}^3\right)$ and $x_{1}^{j_{1}}x_{2}^{j_{2}}x_{3}^{j_{3}}\in C^{\infty}\left(\mathbb{R}^3\right)$:
$\iiint_{Q}x_{1}^{j_{1}}x_{2}^{j_{2}}x_{3}^{j_{3}}\mathcal{M}_{n}\left(\vec{x}\right)\dd^{3}x\equiv \left\langle \mathcal{M}_{n}, x_{1}^{j_{1}}x_{2}^{j_{2}}x_{3}^{j_{3}}\right\rangle$ with $j_{1}$, $j_{2}$, $j_{3}\in\mathbb{N}_{0}$, $n\in\left\{ 1,2,3\right\}$. 

Let us $\widehat{\,\cdot\,}$ denote the two-dimensional Fourier transform
which, by our convention, is defined as 
\[
\hat{f}\left(\mathbf{k}\right)\equiv\mathcal{F}\left[f\right]\left(\mathbf{k}\right):=\iint_{\mathbb{R}^{2}}e^{2\pi i\mathbf{k}\cdot\mathbf{x}}f\left(\mathbf{x}\right)\dd^{2}x,
\]
where $i=\sqrt{-1}$ stands for the imaginary unit, and $\mathbf{k}\cdot\mathbf{x}=k_{1}x_{1}+k_{2}x_{2}$
is the Euclidean inner product. With this definition, the differentiation
and convolution properties of Fourier transform have the form
\begin{equation}
\mathcal{F}\left[\partial_{x_{j}}f\right]\left(\mathbf{k}\right)=-2\pi ik_{j}\hat{f}\left(\mathbf{k}\right),\hspace{1em}\mathcal{F}\left[x_{j}f\right]\left(\mathbf{k}\right)=\frac{1}{2\pi i}\partial_{k_{j}}\hat{f}\left(\mathbf{k}\right),\hspace{1em}\hspace{1em}j\in\left\{ 1,2\right\} ,\label{eq:FT_der}
\end{equation}
\begin{equation}
\mathcal{F}\left[f\star g\right]\left(\mathbf{k}\right):=\iint_{\mathbb{R}^{2}}e^{2\pi i\mathbf{k}\cdot\mathbf{x}}\iint_{\mathbb{R}^{2}}f\left(\mathbf{x}-\mathbf{t}\right)g\left(\mathbf{t}\right)\dd^2 t\, \dd^2 x=\hat{f}\left(\mathbf{k}\right)\hat{g}\left(\mathbf{k}\right).\label{eq:FT_conv}
\end{equation}
We also note that the Fourier transform of the two-dimensional Poisson
kernel is well-known (see, e.g., \cite[Sect. 4.2]{Strichartz}), that
is, for any $H>0$, we have
\begin{equation}
\mathcal{F}\left[\frac{H}{2\pi\left(\left|\mathbf{x}\right|^{2}+H^{2}\right)^{3/2}}\right]\left(\mathbf{k}\right)=e^{-2\pi H\left|\mathbf{k}\right|},\hspace{1em}\mathbf{k}\in\mathbb{R}^{2}.\label{eq:FT_Poisson}
\end{equation}

Let us now rewrite (\ref{eq:B3_3D}) as
\begin{align}
B_{3}\left(\mathbf{x},h\right)= & -\frac{1}{4\pi}\iiint_{Q}\left[\left(\mathcal{M}_{1}\left(\mathbf{t},t_{3}\right)\dfrac{\partial}{\partial x_{1}}+\mathcal{M}_{2}\left(\mathbf{t},t_{3}\right)\dfrac{\partial}{\partial x_{2}}\right)\frac{h-t_{3}}{\left(\left|\mathbf{x}-\mathbf{t}\right|^{2}+\left(h-t_{3}\right)^{2}\right)^{3/2}}\right.\label{eq:B3_alt}\\
 & \left.+\mathcal{M}_{3}\left(\mathbf{t},t_{3}\right)\left(\frac{\partial}{\partial x_{3}}\frac{x_{3}-t_{3}}{\left(\left|\mathbf{x}-\mathbf{t}\right|^{2}+\left(x_{3}-t_{3}\right)^{2}\right)^{3/2}}\right)\Biggr|_{x_{3}=h}\right]\dd^{3}t.\nonumber 
\end{align}
Taking Fourier transform of (\ref{eq:B3_alt}) in both $x_{1}$ and $x_{2}$
variables, we use (\ref{eq:FT_der}), (\ref{eq:FT_conv}) and employ
(\ref{eq:FT_Poisson}) twice: with $H:=h-t_{3}$ in the first line
of (\ref{eq:B3_alt}), and with $H:=x_{3}-t_{3}$ in the second one.
We thus arrive at
\begin{equation}
\hat{B}_{3}\left(\mathbf{k},h\right)=\pi\int_{Q_{3}}e^{-2\pi\left(h-t_{3}\right)\left|\mathbf{k}\right|}\left(ik_{1}\hat{\mathcal{M}}_{1}\left(\mathbf{k},t_{3}\right)+ik_{2}\hat{\mathcal{M}}_{2}\left(\mathbf{k},t_{3}\right)+\left|\mathbf{k}\right|\hat{\mathcal{M}}_{3}\left(\mathbf{k},t_{3}\right)\right)\dd t_{3},\label{eq:B3_hat}
\end{equation}
where $Q_{3}$ denotes the vertical projection of the set $Q$.

We note that even though (\ref{eq:fin_estim_m1m2_ord1})--(\ref{eq:fin_estim_m1m2_ord2}),
(\ref{eq:fin_estim_m1m2_ord3}), (\ref{eq:fin_estim_m1m2_ord4}),
(\ref{eq:fin_estim_m1m2_ord5}) give estimates for tangential net
moment components $m_{1}$, $m_{2}$, we shall restrict ourselves
to dealing only with $m_{1}$. The situation with $m_{2}$ is completely
analogous.

First, we are going to illustrate our strategy of the derivation of
asymptotic estimates of the net magnetisation moment. Here, we shall
be only concerned with the low-order formulas of Theorem \ref{thm:main}
and we shall omit a rigorous justification step. Then, we shall proceed
with formal justification and extension of the result to higher orders.

\subsection{Illustration of the basic idea of the derivation of the net moment
estimates\label{subsec:proof_illustr}}

We are going to focus on deriving (\ref{eq:fin_estim_m1m2_ord1}).

We take $k_{2}=0$ in expression (\ref{eq:B3_hat}) to obtain
\begin{equation}
\hat{B}_{3}\left(k_{1},0,h\right)=\pi\int_{Q_{3}}e^{-2\pi\left(h-t_{3}\right)\left|k_{1}\right|}\left(ik_{1}\hat{\mathcal{M}}_{1}\left(k_{1},0,t_{3}\right)+\left|k_{1}\right|\hat{\mathcal{M}}_{3}\left(k_{1},0,t_{3}\right)\right)\dd t_{3}.\label{eq:B3_hat_k1}
\end{equation}

Since the magnetisation distribution $\vec{\mathcal{M}}$ is compactly
supported, the Fourier transforms $\hat{\mathcal{M}}_{j}\left(\mathbf{k},t_{3}\right)$,
$j\in\left\{ 1,2,3\right\} $, are entire functions in $k_{1}$, $k_{2}\in\mathbb{C}$
for each $t_{3}\in Q_{3}$, according to the Paley-Wiener theory (see,
e.g., \cite[Thm 4.1]{SteWei2016}). In particular, power-series expansion
of $\hat{\mathcal{M}}_{j}\left(k_{1},0,t_{3}\right)$, $j\in\left\{ 1,2,3\right\} $,
about the origin $k_{1}=0$ of the complex plane $\left(\text{Re }k_{1},\,\text{Im }k_{1}\right)$
gives
\begin{align*}
\hat{\mathcal{M}}_{j}\left(k_{1},0,t_{3}\right)= & \hat{\mathcal{M}}_{j}\left(\mathbf{0},t_{3}\right)+\partial_{k_{1}}\hat{\mathcal{M}}_{j}\left(\mathbf{0},t_{3}\right)k_{1}+\dfrac{1}{2}\partial_{k_{1}}^{2}\hat{\mathcal{M}}_{j}\left(\mathbf{0},t_{3}\right)k_{1}^{2}+\mathcal{O}\left(\left|k_{1}\right|^{3}\right),\hspace{1em}j\in\left\{ 1,2,3\right\} .
\end{align*}
Combining this expansion with the straightforward identities
\[
m_{j}=\int_{Q_{3}}\hat{\mathcal{M}}_{j}\left(\mathbf{0},t_{3}\right)dt_{3},\hspace{1em}j\in\left\{ 1,2,3\right\} ,
\]
\[
2\pi i\left\langle M_{j}x_{1}\right\rangle =\int_{Q_{3}}\partial_{k_{1}}\hat{\mathcal{M}}_{j}\left(\mathbf{0},t_{3}\right)dt_{3},\hspace{1em}j\in\left\{ 1,2,3\right\} ,
\]
and the Taylor expansion in $\left|k_{1}\right|$ of the exponential
factor in (\ref{eq:B3_hat_k1})
\[
e^{-2\pi\left(h-t_{3}\right)\left|k_{1}\right|}=1-2\pi\left(h-t_{3}\right)\left|k_{1}\right|+2\pi^{2}\left(h-t_{3}\right)^{2}\left|k_{1}\right|^{2}+\mathcal{O}\left(\left|k_{1}\right|^{3}\right),
\]
we obtain
\[
\hat{B}_{3}\left(k_{1},0,h\right)=\text{Re }\hat{B}_{3}\left(k_{1},0,h\right)+i\,\text{Im }\hat{B}_{3}\left(k_{1},0,h\right),\hspace{1em}k_{1}\in\mathbb{R},
\]
\begin{align}
\text{Re }\hat{B}_{3}\left(k_{1},0,h\right)= & \pi m_{3}\left|k_{1}\right|-2\pi^{2}\left(\left\langle x_{1}M_{1}\right\rangle +\left\langle \left(h-x_{3}\right)M_{3}\right\rangle \right)\left|k_{1}\right|^{2}\nonumber \\
 & +2\pi^{3}\left(2\left\langle \left(h-x_{3}\right)x_{1}M_{1}\right\rangle +\left\langle \left(h-x_{3}\right)^{2}M_{3}\right\rangle -\left\langle x_{1}^{2}M_{3}\right\rangle \right)\left|k_{1}\right|^{3}+\mathcal{O}\left(\left|k_{1}\right|^{4}\right),\label{eq:Re_B3_simpl}
\end{align}
\begin{align}
\text{Im }\hat{B}_{3}\left(k_{1},0,h\right)= & \pi m_{1}k_{1}-2\pi^{2}\left(\left\langle \left(h-x_{3}\right)M_{1}\right\rangle -\left\langle x_{1}M_{3}\right\rangle \right)k_{1}\left|k_{1}\right|\nonumber \\
 & -2\pi^{3}\left(\left\langle x_{1}^{2}M_{1}\right\rangle -\left\langle \left(h-x_{3}\right)^{2}M_{1}\right\rangle +2\left\langle \left(h-x_{3}\right)x_{1}M_{3}\right\rangle \right)k_{1}^{3}+\mathcal{O}\left(\left|k_{1}\right|^{4}\right).\label{eq:Im_B3_simpl}
\end{align}
Here, the remainder terms are uniformly small for all $t_{3}\in Q_{3}$
due to the boundedness of the set $Q_{3}$.

On the other hand, we can write
\begin{align}
\hat{B}_{3}\left(k_{1},0,h\right) & =\iint_{D_{A}}e^{2\pi ik_{1}x_{1}}B_{3}\left(\mathbf{x},h\right)\dd^{2}x+\iint_{\mathbb{R}^{2}\backslash D_{A}}e^{2\pi ik_{1}x_{1}}B_{3}\left(\mathbf{x},h\right)\dd^{2}x.\label{eq:B3_decomp}
\end{align}

We note that in the first term on the right-hand side of (\ref{eq:B3_decomp}),
the integration range is finite and hence an expansion in powers of
$k_{1}$ simply follows from that of the exponential factor:
\begin{align}
\iint_{D_{A}}e^{2\pi ik_{1}x_{1}}B_{3}\left(\mathbf{x},h\right)\dd^{2}x= & \iint_{D_{A}}B_{3}\left(\mathbf{x},h\right)\dd^{2}x+2\pi ik_{1}\iint_{D_{A}}x_{1}B_{3}\left(\mathbf{x},h\right)\dd^{2}x-2\pi^{2}k_{1}^{2}\iint_{D_{A}}x_{1}^{2}B_{3}\left(\mathbf{x},h\right)\dd^{2}x\label{eq:B3_DA_exp_simpl}\\
 & -\frac{4\pi^{3}}{3}ik_{1}^{3}\iint_{D_{A}}x_{1}^{3}B_{3}\left(\mathbf{x},h\right)\dd^{2}x+\mathcal{O}\left(\left|k_{1}\right|^{4}\right),\nonumber 
\end{align}
and hence
\begin{equation}
\iint_{D_{A}}\cos\left(2\pi k_{1}x_{1}\right)B_{3}\left(\mathbf{x},h\right)\dd^{2}x=\iint_{D_{A}}B_{3}\left(\mathbf{x},h\right)\dd^{2}x-2\pi^{2}k_{1}^{2}\iint_{D_{A}}x_{1}^{2}B_{3}\left(\mathbf{x},h\right)\dd^{2}x+\mathcal{O}\left(\left|k_{1}\right|^{4}\right),\label{eq:B3_DA_exp_cos_simpl}
\end{equation}
\begin{equation}
\iint_{D_{A}}\sin\left(2\pi k_{1}x_{1}\right)B_{3}\left(\mathbf{x},h\right)\dd^{2}x=2\pi k_{1}\iint_{D_{A}}x_{1}B_{3}\left(\mathbf{x},h\right)\dd^{2}x-\frac{4\pi^{3}}{3}k_{1}^{3}\iint_{D_{A}}x_{1}^{3}B_{3}\left(\mathbf{x},h\right)\dd^{2}x+\mathcal{O}\left(\left|k_{1}\right|^{5}\right).\label{eq:B3_DA_exp_sin_simpl}
\end{equation}

Producing an expansion in powers of $k_{1}$ of the second term in
(\ref{eq:B3_decomp}) is much less straightforward and requires a
preliminary simplification. More precisely, we expand the field $B_{3}\left(\mathbf{x},h\right)$
for large $\left|\mathbf{x}\right|$ and, assuming largeness of the
region $D_{A}$, retain only first few terms of this expansion. Namely,
from (\ref{eq:B3_3D}), we have 
\begin{equation}
B_{3}\left(\mathbf{x},h\right)=-\frac{m_{3}}{4\pi\left|\mathbf{x}\right|^{3}}+\frac{3}{4\pi}\frac{\left(\left\langle \left(h-x_{3}\right)M_{1}\right\rangle -\left\langle x_{1}M_{3}\right\rangle \right)x_{1}+\left(\left\langle \left(h-x_{3}\right)M_{2}\right\rangle -\left\langle x_{2}M_{3}\right\rangle \right)x_{2}}{\left|\mathbf{x}\right|^{5}}+\mathcal{O}\left(\frac{1}{\left|\mathbf{x}\right|^{5}}\right).\label{eq:B3_asympt_simpl}
\end{equation}
Consequently, passing to the polar coordinates using $x_{1}=r\cos\theta$,
$x_{2}=r\sin\theta$, $\dd^{2}x=r\dd r\dd\theta$, we can write 
\begin{align*}
\iint_{\mathbb{R}^{2}\backslash D_{A}}e^{2\pi ik_{1}x_{1}}B_{3}\left(\mathbf{x},h\right)\dd^{2}x= & -\frac{m_{3}}{4\pi}\int_{A}^{\infty}\int_{0}^{2\pi}e^{2\pi ik_{1}r\cos\theta}\dd\theta\frac{\dd r}{r^{2}}\\
 & +\frac{3}{4\pi}\left(\left\langle \left(h-x_{3}\right)M_{1}\right\rangle -\left\langle x_{1}M_{3}\right\rangle \right)\int_{A}^{\infty}\int_{0}^{2\pi}e^{2\pi ik_{1}r\cos\theta}\cos\theta \dd\theta\frac{\dd r}{r^{3}}\\
 & +\frac{3}{4\pi}\left(\left\langle \left(h-x_{3}\right)M_{2}\right\rangle -\left\langle x_{2}M_{3}\right\rangle \right)\int_{A}^{\infty}\int_{0}^{2\pi}e^{2\pi ik_{1}r\cos\theta}\sin\theta \dd\theta\frac{\dd r}{r^{3}}\\
 & +\mathcal{R}_{A,k_{1}},
\end{align*}
where the residue term $\mathcal{R}_{A,k_{1}}$ is expected to be
$\mathcal{O}\left(1/A^{3}\right)$ for sufficiently small values of
$\left|k_{1}\right|$.

Furthermore, taking real and imaginary parts of both sides, we obtain,
respectively, 
\begin{equation}
\iint_{\mathbb{R}^{2}\backslash D_{A}}\cos\left(2\pi k_{1}x_{1}\right)B_{3}\left(\mathbf{x},h\right)\dd^{2}x=-\frac{m_{3}}{4\pi}\int_{A}^{\infty}\int_{0}^{2\pi}\cos\left(2\pi k_{1}r\cos\theta\right)\dd\theta\frac{\dd r}{r^{2}}+\text{Re }\mathcal{R}_{A,k_{1}},\label{eq:B3_cos_ext_exp_simpl}
\end{equation}
\begin{equation}
\iint_{\mathbb{R}^{2}\backslash D_{A}}\sin\left(2\pi k_{1}x_{1}\right)B_{3}\left(\mathbf{x},h\right)\dd^{2}x=\frac{3}{4\pi}\left(\left\langle \left(h-x_{3}\right)M_{1}\right\rangle -\left\langle x_{1}M_{3}\right\rangle \right)\int_{A}^{\infty}\int_{0}^{2\pi}\sin\left(2\pi k_{1}r\cos\theta\right)\cos\theta \dd\theta\frac{\dd r}{r^{3}}+\text{Im }\mathcal{R}_{A,k_{1}},\label{eq:B3_sin_ext_exp_simpl}
\end{equation}
where we took into account that 
\[
\int_{0}^{2\pi}\cos\left(2\pi k_{1}r\cos\theta\right)\sin\theta \dd\theta=\int_{0}^{2\pi}\cos\left(2\pi k_{1}r\cos\theta\right)\cos\theta \dd\theta=0,
\]
\[
\int_{0}^{2\pi}\sin\left(2\pi k_{1}r\cos\theta\right)\dd\theta=\int_{0}^{2\pi}\sin\left(2\pi k_{1}r\cos\theta\right)\sin\theta \dd\theta=0,
\]
according to the results of Lemma \ref{lem:trig_ints}.

Next, as it turns out (see Subsection \ref{subsec:proof_rigor} for
more details), the integrals on the right-hand sides of (\ref{eq:B3_cos_ext_exp_simpl})--(\ref{eq:B3_sin_ext_exp_simpl})
can be evaluated explicitly in terms of some cylindrical functions.
Known representations of these special functions lead to the desired
asymptotic expansions in powers of $k_{1}$. In particular, for small
$\left|k_{1}\right|$, we can deduce that
\begin{align}
\int_{A}^{\infty}\int_{0}^{2\pi}\cos\left(2\pi k_{1}r\cos\theta\right)\dd\theta\frac{\dd r}{r^{2}} & =2\pi\left|k_{1}\right|\int_{2\pi\left|k_{1}\right|A}^{\infty}\int_{0}^{2\pi}\cos\left(r\cos\theta\right)\dd\theta\frac{\dd r}{r^{2}}\label{eq:cos_int_exp_simpl}\\
 & =\frac{2\pi}{A}-4\pi^{2}\left|k_{1}\right|+2\pi^{3}A\left|k_{1}\right|^{2}+\mathcal{\mathcal{O}}\left(A^{2}\left|k_{1}\right|^{3}\right),\nonumber 
\end{align}
\begin{align}
\int_{A}^{\infty}\int_{0}^{2\pi}\sin\left(2\pi k_{1}r\cos\theta\right)\cos\theta \dd\theta\frac{\dd r}{r^{3}} & =4\pi^{2}k_{1}\left|k_{1}\right|\int_{2\pi\left|k_{1}\right|A}^{\infty}\int_{0}^{2\pi}\sin\left(r\cos\theta\right)\cos\theta \dd\theta\frac{\dd r}{r^{3}}\label{eq:sin_int_exp_simpl}\\
 & =\frac{2\pi^{2}}{A}k_{1}-\frac{8\pi^{3}}{3}k_{1}\left|k_{1}\right|+\pi^{4}Ak_{1}^{3}+\mathcal{O}\left(A^{3}\left|k_{1}\right|^{5}\right),\nonumber 
\end{align}
where the notation $\mathcal{\mathcal{O}}\left(A^{2}\left|k_{1}\right|^{3}\right)$
also hides the terms of powers of $\left|k_{1}\right|$ higher than
$3$ regardless of the presence of the $A$ factors such as $\mathcal{\mathcal{O}}\left(A^{3}\left|k_{1}\right|^{4}\right)$.

Therefore, by taking the real part of (\ref{eq:B3_decomp}) and using
(\ref{eq:B3_DA_exp_cos_simpl}), (\ref{eq:B3_cos_ext_exp_simpl})
and (\ref{eq:cos_int_exp_simpl}), we obtain
\begin{align*}
\text{Re }\hat{B}_{3}\left(k_{1},0,h\right)= & \iint_{D_{A}}B_{3}\left(\mathbf{x},h\right)\dd^{2}x-2\pi^{2}k_{1}^{2}\iint_{D_{A}}x_{1}^{2}B_{3}\left(\mathbf{x},h\right)\dd^{2}x\\
 & -\frac{m_{3}}{4\pi}\left(\frac{2\pi}{A}-4\pi^{2}\left|k_{1}\right|+2\pi^{3}A\left|k_{1}\right|^{2}\right)+\text{Re }\mathcal{R}_{A,k_{1}}+\mathcal{O}\left(A^{2}\left|k_{1}\right|^{3}\right).
\end{align*}
Comparing this with (\ref{eq:Re_B3_simpl}) and, in particular, evaluating
both expressions at $k_{1}=0$, we arrive at the following identity:
\begin{equation}
\iint_{D_{A}}B_{3}\left(\mathbf{x},h\right)\dd^{2}x-\frac{m_{3}}{2A}+\mathcal{O}\left(\frac{1}{A^{3}}\right)=0,\label{eq:m3_estim_simpl}
\end{equation}
where we took into account that $\left.\text{Re }\mathcal{R}_{A,k_{1}}\right|_{k_{1}=0}=\mathcal{O}\left(1/A^{3}\right)$.

Similarly, taking the imaginary part of (\ref{eq:B3_decomp}), we
combine (\ref{eq:B3_DA_exp_sin_simpl}), (\ref{eq:B3_sin_ext_exp_simpl})
and (\ref{eq:sin_int_exp_simpl}) to deduce that
\begin{align*}
\text{Im }\hat{B}_{3}\left(k_{1},0,h\right)= & 2\pi k_{1}\iint_{D_{A}}x_{1}B_{3}\left(\mathbf{x},h\right)\dd^{2}x-\frac{4\pi^{3}}{3}k_{1}^{3}\iint_{D_{A}}x_{1}^{3}B_{3}\left(\mathbf{x},h\right)\dd^{2}x\\
 & +\frac{3}{4\pi}\left(\left\langle \left(h-x_{3}\right)M_{1}\right\rangle -\left\langle x_{1}M_{3}\right\rangle \right)\left(\frac{2\pi^{2}}{A}k_{1}-\frac{8\pi^{3}}{3}k_{1}\left|k_{1}\right|+\pi^{4}Ak_{1}^{3}\right)+\text{Im }\mathcal{R}_{A,k_{1}}+\mathcal{O}\left(A^{3}\left|k_{1}\right|^{5}\right).
\end{align*}
Comparison of this expression with (\ref{eq:Im_B3_simpl}) and matching
the coefficients of the $k_{1}$ terms yields
\begin{equation}
2\pi\iint_{D_{A}}x_{1}B_{3}\left(\mathbf{x},h\right)\dd^{2}x+\frac{3\pi}{2A}\left(\left\langle \left(h-x_{3}\right)M_{1}\right\rangle -\left\langle x_{1}M_{3}\right\rangle \right)+\mathcal{O}\left(\frac{1}{A^{3}}\right)=\pi m_{1},\label{eq:m1_estim_simpl}
\end{equation}
where we assumed that, for sufficiently small $\left|k_{1}\right|$,
we have $\text{Im }\mathcal{R}_{A,k_{1}}=\mathcal{O}\left(1/A^{3}\right)$.

While (\ref{eq:m3_estim_simpl}), (\ref{eq:m1_estim_simpl}) imply
estimates (\ref{eq:fin_estim_m3_ord2}), (\ref{eq:fin_estim_m1m2_ord1}),
respectively, the derivation given above was not rigorous and required
additional assumptions on the residue term $\mathcal{R}_{A,k_{1}}$
which was reasonably deemed to be sufficiently small for large $A$
but was not estimated uniformly in $k_{1}$. We shall now proceed
with rigorous analysis which will also make it possible to derive
higher-order analogs of estimates (\ref{eq:fin_estim_m1m2_ord1}),
(\ref{eq:fin_estim_m3_ord2}).

\subsection{Rigorous analysis and higher-order asymptotic estimates\label{subsec:proof_rigor}}

Let us start by improving estimate (\ref{eq:B3_asympt_simpl}). To
this effect, we use the following elementary Taylor expansions, convergent
for $\left|z\right|<1$, 
\[
\frac{1}{\left(1+z\right)^{3/2}}=1-\frac{3}{2}z+\frac{15}{8}z^{2}-\frac{35}{16}z^{3}+\mathcal{O}\left(z^{4}\right),
\]
\[
\frac{1}{\left(1+z\right)^{5/2}}=1-\frac{5}{2}z+\frac{35}{8}z^{2}+\mathcal{O}\left(z^{3}\right),
\]
to obtain, for $t_{1}$, $t_{2}$, $t_{3}$, $h\in\mathbb{R}$, $t_3 \neq h$,
\begin{align}
\frac{1}{\left[\left(x_{1}-t_{1}\right)^{2}+\left(x_{2}-t_{2}\right)^{2}+\left(h-t_{3}\right)^{2}\right]^{3/2}}= & \frac{1}{\left(x_{1}^{2}+x_{2}^{2}\right)^{3/2}}\left[1-2\frac{x_{1}t_{1}+x_{2}t_{2}}{x_{1}^{2}+x_{2}^{2}}+\frac{t_{1}^{2}+t_{2}^{2}+\left(h-t_{3}\right)^{2}}{x_{1}^{2}+x_{2}^{2}}\right]^{-3/2}\label{eq:expans_alg32}\\
= & \frac{1}{\left(x_{1}^{2}+x_{2}^{2}\right)^{3/2}}\left[1+3\frac{x_{1}t_{1}+x_{2}t_{2}}{x_{1}^{2}+x_{2}^{2}}-\frac{3}{2}\frac{t_{1}^{2}+t_{2}^{2}+\left(h-t_{3}\right)^{2}}{x_{1}^{2}+x_{2}^{2}}+\frac{15}{2}\frac{\left(x_{1}t_{1}+x_{2}t_{2}\right)^{2}}{\left(x_{1}^{2}+x_{2}^{2}\right)^{2}}\right.\nonumber \\
 & \left.-\frac{15}{2}\frac{\left(x_{1}t_{1}+x_{2}t_{2}\right)\left(t_{1}^{2}+t_{2}^{2}+\left(h-t_{3}\right)^{2}\right)}{\left(x_{1}^{2}+x_{2}^{2}\right)^{2}}+\frac{35}{2}\frac{\left(x_{1}t_{1}+x_{2}t_{2}\right)^{3}}{\left(x_{1}^{2}+x_{2}^{2}\right)^{3}}\right]+\mathcal{O}\left(\frac{1}{\left|\mathbf{x}\right|^{7}}\right),\nonumber 
\end{align}
\begin{align}
\frac{1}{\left[\left(x_{1}-t_{1}\right)^{2}+\left(x_{2}-t_{2}\right)^{2}+\left(h-t_{3}\right)^{2}\right]^{5/2}}= & \frac{1}{\left(x_{1}^{2}+x_{2}^{2}\right)^{5/2}}\left[1+5\frac{x_{1}t_{1}+x_{2}t_{2}}{x_{1}^{2}+x_{2}^{2}}-\frac{5}{2}\frac{t_{1}^{2}+t_{2}^{2}+\left(h-t_{3}\right)^{2}}{x_{1}^{2}+x_{2}^{2}}\right.\label{eq:expans_alg52}\\
 & \left.+\frac{35}{2}\frac{\left(x_{1}t_{1}+x_{2}t_{2}\right)^{2}}{\left(x_{1}^{2}+x_{2}^{2}\right)^{2}}\right]+\mathcal{O}\left(\frac{1}{\left|\mathbf{x}\right|^{8}}\right),\nonumber 
\end{align}
where $x_{1}$, $x_{2}\in\mathbb{R}$, $\left|x_1\right|+\left|x_2\right|>0$, are of a sufficiently large magnitude so that
\begin{equation}
\left|\frac{t_{1}^{2}+t_{2}^{2}+\left(h-t_{3}\right)^{2}}{x_{1}^{2}+x_{2}^{2}}-2\frac{x_{1}t_{1}+x_{2}t_{2}}{x_{1}^{2}+x_{2}^{2}}\right|<1.\label{eq:cond_asympt}
\end{equation}
Expansions (\ref{eq:expans_alg32})--(\ref{eq:expans_alg52}) imply
that, for $\left|\mathbf{x}\right|\gg1$, (\ref{eq:B3_3D}) can be
written as 

\begin{equation}
B_{3}\left(\mathbf{x},h\right)=B_{3}^{\text{asympt}}\left(\mathbf{x},h\right)+\mathcal{O}\left(\frac{1}{\left|\mathbf{x}\right|^{7}}\right),\label{eq:B3_asympt}
\end{equation}
with
\begin{align}
B_{3}^{\text{asympt}}\left(\mathbf{x},h\right):= & \frac{c_{0,0}}{\left|\mathbf{x}\right|^{3}}+\frac{c_{1,0}x_{1}+c_{0,1}x_{2}}{\left|\mathbf{x}\right|^{5}}+\frac{c_{2,0}x_{1}^{2}+c_{0,2}x_{2}^{2}+c_{1,1}x_{1}x_{2}}{\left|\mathbf{x}\right|^{7}}\label{eq:B3_asympt_def}\\
 & +\frac{c_{3,0}x_{1}^{3}+c_{0,3}x_{2}^{3}+c_{2,1}x_{1}^{2}x_{2}+c_{1,2}x_{1}x_{2}^{2}}{\left|\mathbf{x}\right|^{9}},\nonumber 
\end{align}

\begin{equation}
c_{0,0}:=-\frac{m_{3}}{4\pi},\label{eq:a0_def}
\end{equation}
\begin{equation}
c_{1,0}:=\frac{3}{4\pi}\left[\left\langle \left(h-x_{3}\right)M_{1}\right\rangle -\left\langle x_{1}M_{3}\right\rangle \right],\hspace{1em}\hspace{1em}c_{0,1}:=\frac{3}{4\pi}\left[\left\langle \left(h-x_{3}\right)M_{2}\right\rangle -\left\langle x_{2}M_{3}\right\rangle \right],\label{eq:a1_def}
\end{equation}
\begin{align}
c_{2,0}:= & \frac{3}{8\pi}\left[8\left\langle \left(h-x_{3}\right)x_{1}M_{1}\right\rangle -4\left\langle x_{1}^{2}M_{3}\right\rangle +\left\langle x_{2}^{2}M_{3}\right\rangle\right.\label{eq:a31_def}\\
 & \left.-2\left\langle \left(h-x_{3}\right)x_{2}M_{2}\right\rangle+3\left\langle \left(h-x_{3}\right)^{2}M_{3}\right\rangle \right],\nonumber 
\end{align}
\begin{align}
c_{0,2}:= & \frac{3}{8\pi}\left[8\left\langle \left(h-x_{3}\right)x_{2}M_{2}\right\rangle -4\left\langle x_{2}^{2}M_{3}\right\rangle +\left\langle x_{1}^{2}M_{3}\right\rangle\right.\label{eq:a32_def}\\
 & \left.-2\left\langle \left(h-x_{3}\right)x_{1}M_{1}\right\rangle+3\left\langle \left(h-x_{3}\right)^{2}M_{3}\right\rangle \right],\nonumber 
\end{align}
\begin{equation}
c_{1,1}:=\frac{15}{4\pi}\left[\left\langle \left(h-x_{3}\right)x_{2}M_{1}\right\rangle +\left\langle \left(h-x_{3}\right)x_{1}M_{2}\right\rangle -\left\langle x_{1}x_{2}M_{3}\right\rangle \right],\label{eq:a33_def}
\end{equation}
\begin{align}
c_{3,0}:= & \frac{5}{8\pi}\left[12\left\langle \left(h-x_{3}\right)x_{1}^{2}M_{1}\right\rangle -4\left\langle x_{1}^{3}M_{3}\right\rangle -3\left\langle \left(h-x_{3}\right)x_{2}^{2}M_{1}\right\rangle -3\left\langle \left(h-x_{3}\right)^{3}M_{1}\right\rangle \right.\label{eq:a51_def}\\
 & \left.-6\left\langle \left(h-x_{3}\right)x_{1}x_{2}M_{2}\right\rangle +3\left\langle x_{1}x_{2}^{2}M_{3}\right\rangle +9\left\langle \left(h-x_{3}\right)^{2}x_{1}M_{3}\right\rangle \right],\nonumber 
\end{align}
\begin{align}
c_{0,3}:= & \frac{5}{8\pi}\left[12\left\langle \left(h-x_{3}\right)x_{2}^{2}M_{2}\right\rangle -4\left\langle x_{2}^{3}M_{3}\right\rangle -3\left\langle \left(h-x_{3}\right)x_{1}^{2}M_{2}\right\rangle -3\left\langle \left(h-x_{3}\right)^{3}M_{2}\right\rangle \right.\label{eq:a52_def}\\
 & \left.-6\left\langle \left(h-x_{3}\right)x_{1}x_{2}M_{1}\right\rangle +3\left\langle x_{1}^{2}x_{2}M_{3}\right\rangle +9\left\langle \left(h-x_{3}\right)^{2}x_{2}M_{3}\right\rangle \right],\nonumber 
\end{align}
\begin{align}
c_{2,1}:= & \frac{15}{8\pi}\left[6\left\langle \left(h-x_{3}\right)x_{1}^{2}M_{2}\right\rangle +12\left\langle \left(h-x_{3}\right)x_{1}x_{2}M_{1}\right\rangle -6\left\langle x_{1}^{2}x_{2}M_{3}\right\rangle \right.\label{eq:a53_def}\\
 & \left.-3\left\langle \left(h-x_{3}\right)x_{2}^{2}M_{2}\right\rangle -\left\langle \left(h-x_{3}\right)^{3}M_{2}\right\rangle +\left\langle x_{2}^{3}M_{3}\right\rangle +3\left\langle \left(h-x_{3}\right)^{2}x_{2}M_{3}\right\rangle \right],\nonumber 
\end{align}
\begin{align}
c_{1,2}:= & \frac{15}{8\pi}\left[6\left\langle \left(h-x_{3}\right)x_{2}^{2}M_{1}\right\rangle +12\left\langle \left(h-x_{3}\right)x_{1}x_{2}M_{2}\right\rangle -6\left\langle x_{1}x_{2}^{2}M_{3}\right\rangle \right.\label{eq:a54_def}\\
 & \left.-3\left\langle \left(h-x_{3}\right)x_{1}^{2}M_{1}\right\rangle -\left\langle \left(h-x_{3}\right)^{3}M_{1}\right\rangle +\left\langle x_{1}^{3}M_{3}\right\rangle +3\left\langle \left(h-x_{3}\right)^{2}x_{1}M_{3}\right\rangle \right],\nonumber 
\end{align}
and condition (\ref{eq:cond_asympt}), for our particular context,
rewrites as (\ref{eq:cond_asympt_thm}). 

We are going to pursue the idea outlined in the previous subsection.
Namely, comparing a series expansion of (\ref{eq:B3_hat_k1}) about
$k_{1}=0$ with that of (\ref{eq:B3_decomp}), we shall deduce a set
of identities which relate magnetisation moments to the integrals
of the measured data $B_{3}\left(\mathbf{x},h\right)$ on $D_{A}$.
More precisely, using compactness of the support of the magnetisation
$\vec{\mathcal{M}}$, it follows from (\ref{eq:B3_hat_k1}) that $\text{Re }\hat{B}_{3}\left(k_{1},0,h\right)$
is a convergent series in powers of $\left|k_{1}\right|$ whereas
$\text{Im }\hat{B}_{3}\left(k_{1},0,h\right)$ is a power series in
$\left|k_{1}\right|$ multiplied by $k_{1}$. To facilitate the situation
of matching the coefficients of different representations, we shall
focus on the region $k_{1}>0$. Consequently, in what follows, evaluation
at $k_{1}=0^{+}$ will mean the limiting value at $k_{1}=0$ taken
from the positive semiaxis ($k_{1}>0$).

\subsubsection{\label{subsec:tang_estim} Tangential components of the net moment}

Expanding the integrand in (\ref{eq:B3_hat_k1}) in power series in
a positive neighborhood of $k_{1}=0$ and taking the imaginary part,
we deduce: 

\begin{align}
\left.\partial_{k_{1}}\left[\text{Im }\hat{B}_{3}\left(k_{1},0,h\right)\right]\right|_{k_{1}=0^{+}} & =\pi m_{1}=:d_{1,}\label{eq:d1_def}
\end{align}
\begin{align}
\frac{1}{6}\left.\partial_{k_{1}}^{3}\left[\text{Im }\hat{B}_{3}\left(k_{1},0,h\right)\right]\right|_{k_{1}=0^{+}} & =2\pi^{3}\left[\left\langle \left(h-x_{3}\right)^{2}M_{1}\right\rangle -\left\langle x_{1}^{2}M_{1}\right\rangle -2\left\langle \left(h-x_{3}\right)x_{1}M_{3}\right\rangle \right]\label{eq:d3_def}\\
 & =:d_{3},\nonumber 
\end{align}
\begin{align}
\frac{1}{5!}\left.\partial_{k_{1}}^{5}\left[\text{Im }\hat{B}_{3}\left(k_{1},0,h\right)\right]\right|_{k_{1}=0^{+}} & =\frac{2\pi^{5}}{3}\left[\left\langle \left(h-x_{3}\right)^{4}M_{1}\right\rangle -6\left\langle \left(h-x_{3}\right)^{2}x_{1}^{2}M_{1}\right\rangle +\left\langle x_{1}^{4}M_{1}\right\rangle \right.\label{eq:d5_def}\\
 & \left.-4\left\langle \left(h-x_{3}\right)^{3}x_{1}M_{3}\right\rangle +4\left\langle \left(h-x_{3}\right)x_{1}^{3}M_{3}\right\rangle \right]\nonumber \\
 & =:d_{5},\nonumber 
\end{align}
\begin{align}
\frac{1}{7!}\left.\partial_{k_{1}}^{7}\left[\text{Im }\hat{B}_{3}\left(k_{1},0,h\right)\right]\right|_{k_{1}=0^{+}}= & \frac{4\pi^{7}}{45}\left[\left\langle \left(h-x_{3}\right)^{6}M_{1}\right\rangle -15\left\langle \left(h-x_{3}\right)^{4}x_{1}^{2}M_{1}\right\rangle +15\left\langle \left(h-x_{3}\right)^{2}x_{1}^{4}M_{1}\right\rangle -\left\langle x_{1}^{6}M_{1}\right\rangle \right.\label{eq:d7_def}\\
 & \left.-6\left\langle \left(h-x_{3}\right)^{5}x_{1}M_{3}\right\rangle +20\left\langle \left(h-x_{3}\right)^{3}x_{1}^{3}M_{3}\right\rangle -6\left\langle \left(h-x_{3}\right)x_{1}^{5}M_{3}\right\rangle \right]\nonumber \\
 & =:d_{7},\nonumber 
\end{align}
\begin{align}
\frac{1}{9!}\left.\partial_{k_{1}}^{9}\left[\text{Im }\hat{B}_{3}\left(k_{1},0,h\right)\right]\right|_{k_{1}=0^{+}}= & \frac{2\pi^{9}}{315}\left[\left\langle \left(h-x_{3}\right)^{8}M_{1}\right\rangle -28\left\langle \left(h-x_{3}\right)^{6}x_{1}^{2}M_{1}\right\rangle +70\left\langle \left(h-x_{3}\right)^{4}x_{1}^{4}M_{1}\right\rangle \right.\label{eq:d9_def}\\
 & -28\left\langle \left(h-x_{3}\right)^{2}x_{1}^{6}M_{1}\right\rangle +\left\langle x_{1}^{8}M_{1}\right\rangle -8\left\langle \left(h-x_{3}\right)^{7}x_{1}M_{3}\right\rangle \nonumber \\
 & \left.+56\left\langle \left(h-x_{3}\right)^{5}x_{1}^{3}M_{3}\right\rangle -56\left\langle \left(h-x_{3}\right)^{3}x_{1}^{5}M_{3}\right\rangle +8\left\langle \left(h-x_{3}\right)x_{1}^{7}M_{3}\right\rangle \right]\nonumber \\
 & =:d_{9},\nonumber 
\end{align}
\begin{align}
\frac{1}{11!}\left.\partial_{k_{1}}^{11}\left[\text{Im }\hat{B}_{3}\left(k_{1},0,h\right)\right]\right|_{k_{1}=0^{+}}= & \frac{4\pi^{11}}{14175}\left[\left\langle \left(h-x_{3}\right)^{10}M_{1}\right\rangle -45\left\langle \left(h-x_{3}\right)^{8}x_{1}^{2}M_{1}\right\rangle +210\left\langle \left(h-x_{3}\right)^{6}x_{1}^{4}M_{1}\right\rangle \right.\label{eq:d11_def}\\
 & -210\left\langle \left(h-x_{3}\right)^{4}x_{1}^{6}M_{1}\right\rangle +45\left\langle \left(h-x_{3}\right)^{2}x_{1}^{8}M_{1}\right\rangle -\left\langle x_{1}^{10}M_{1}\right\rangle -10\left\langle \left(h-x_{3}\right)^{9}x_{1}M_{3}\right\rangle \nonumber \\
 & +120\left\langle \left(h-x_{3}\right)^{7}x_{1}^{3}M_{3}\right\rangle -252\left\langle \left(h-x_{3}\right)^{5}x_{1}^{5}M_{3}\right\rangle +120\left\langle \left(h-x_{3}\right)^{3}x_{1}^{7}M_{3}\right\rangle \nonumber \\
 & \left.-10\left\langle \left(h-x_{3}\right)x_{1}^{9}M_{3}\right\rangle \right]\nonumber \\
 & =:d_{11}.\nonumber 
\end{align}

On the other hand, from (\ref{eq:B3_decomp}), we have, for $n\in\mathbb{N}_{0}$,
\begin{align}
\left.\partial_{k_{1}}^{2n+1}\left[\text{Im }\hat{B}_{3}\left(k_{1},0,h\right)\right]\right|_{k_{1}=0^{+}}= & \left(-1\right)^{n}\left(2\pi\right)^{2n+1}\iint_{D_{A}}x_{1}^{2n+1}B_{3}\left(\mathbf{x},h\right)\dd^{2}x\label{eq:ImB3_int}\\
 & +\left.\partial_{k_{1}}^{2n+1}\left(\iint_{\mathbb{R}^{2}\backslash D_{A}}\sin\left(2\pi k_{1}x_{1}\right)B_{3}^{\text{asympt}}\left(\mathbf{x},h\right)\dd^{2}x\right)\right|_{k_{1}=0^{+}}\nonumber \\
 & +\left.\partial_{k_{1}}^{2n+1}\left(\iint_{\mathbb{R}^{2}\backslash D_{A}}\sin\left(2\pi k_{1}x_{1}\right)\left[B_{3}\left(\mathbf{x},h\right)-B_{3}^{\text{asympt}}\left(\mathbf{x},h\right)\right]\dd^{2}x\right)\right|_{k_{1}=0^{+}}.\nonumber 
\end{align}

We shall proceed in 3 steps. First, we evaluate the integral 
\begin{equation}
\mathcal{I}^{\sin}\equiv\mathcal{I}^{\sin}\left(k_{1},A\right):=\iint_{\mathbb{R}^{2}\backslash D_{A}}\sin\left(2\pi k_{1}x_{1}\right)B_{3}^{\text{asympt}}\left(\mathbf{x},h\right)\dd^{2}x,\label{eq:I_sin}
\end{equation}
and compute its derivatives appearing on the second line of (\ref{eq:ImB3_int}),
hence obtaining a set of valuable identities. Second, we estimate
the derivatives of the remainder
\begin{equation}
\mathcal{R}_{2n+1}^{\sin}\equiv\mathcal{R}_{2n+1}^{\sin}\left(A\right):=\left.\partial_{k_{1}}^{2n+1}\left(\iint_{\mathbb{R}^{2}\backslash D_{A}}\sin\left(2\pi k_{1}x_{1}\right)\left[B_{3}\left(\mathbf{x},h\right)-B_{3}^{\text{asympt}}\left(\mathbf{x},h\right)\right]\dd^{2}x\right)\right|_{k_{1}=0^{+}},\hspace{1em}n\in\mathbb{N}_{0},\label{eq:Rn_sin}
\end{equation}
in order to show that the contribution of the term in the third line
of (\ref{eq:ImB3_int}) is not significant for large $A$ (for the
chosen order of the asymptotic expansion). Finally, at the last step,
we combine the obtained identities and derive asymptotic formulas
(\ref{eq:fin_estim_m1m2_ord1})--(\ref{eq:fin_estim_m1m2_ord2}),
(\ref{eq:fin_estim_m1m2_ord3}), (\ref{eq:fin_estim_m1m2_ord4}),
(\ref{eq:fin_estim_m1m2_ord5}) in a rigorously justified fashion.

\medskip

\subparagraph*{\uline{\label{par:tang_estim_step1} Step 1: Derivation of the
set of identities}}

\medskip

Using (\ref{eq:B3_asympt_def}) and passing to the polar coordinates
using $x=r\cos\theta$, $y=r\sin\theta$, $\dd^{2}x=r\dd r\dd\theta$, we
obtain from (\ref{eq:I_sin})
\begin{equation}
\mathcal{I}^{\sin}=c_{1,0}\mathcal{I}_{1}^{\sin}+
+c_{3,0}\mathcal{I}_{2}^{\sin}+c_{1,2}\mathcal{I}_{3}^{\sin},\label{eq:I_sin_decomp}
\end{equation}
where 
\begin{equation}
\mathcal{I}_{1}^{\sin}:=\int_{A}^{\infty}\int_{0}^{2\pi}\sin\left(2\pi k_{1}r\cos\theta\right)\cos\theta \dd\theta\frac{\dd r}{r^{3}},\label{eq:I_sin1_def}
\end{equation}
\begin{equation}
\mathcal{I}_{2}^{\sin}:=\int_{A}^{\infty}\int_{0}^{2\pi}\sin\left(2\pi k_{1}r\cos\theta\right)\cos^{3}\theta \dd\theta\frac{\dd r}{r^{5}},\label{eq:I_sin3_def}
\end{equation}
\begin{equation}
\mathcal{I}_{3}^{\sin}:=\int_{A}^{\infty}\int_{0}^{2\pi}\sin\left(2\pi k_{1}r\cos\theta\right)\cos\theta\sin^{2}\theta \dd\theta\frac{\dd r}{r^{5}}.\label{eq:I_sin4_def}
\end{equation}
Here, we used results (\ref{eq:trig_int_sin1})--(\ref{eq:trig_int_sin2})
of Lemma \ref{lem:trig_ints} multiple times to deduce vanishing of
the integrals associated with the terms which involve $c_{0,0}$, $c_{0,1}$, 
$c_{2,0}$, $c_{0,2}$, $c_{1,1}$,  $c_{0,3}$, $c_{2,1}$.

We now employ the integral representation of Bessel functions, given
in (\ref{eq:BessJ1_int_repr}), to rewrite, for $k_{1}>0$,
\begin{equation}
\mathcal{I}_{1}^{\sin}=2\pi\int_{A}^{\infty}J_{1}\left(2\pi k_{1}r\right)\frac{\dd r}{r^{3}}=\left(2\pi\right)^{3}k_{1}^{2}\int_{2\pi k_{1}A}^{\infty}\frac{J_{1}\left(x\right)}{x^{3}}\dd x,\label{eq:I_sin1}
\end{equation}
\begin{align}
\mathcal{I}_{2}^{\sin} & =-2\pi\int_{A}^{\infty}J_{1}^{\prime\prime}\left(2\pi k_{1}r\right)\frac{\dd r}{r^{5}}=-\left(2\pi\right)^{5}k_{1}^{4}\int_{2\pi k_{1}A}^{\infty}\frac{J_{1}^{\prime\prime}\left(x\right)}{x^{5}}\dd x\label{eq:I_sin3}\\
 & =\left(2\pi\right)^{5}k_{1}^{4}\left.\left(\frac{5J_{1}\left(\rho\right)}{\rho^{6}}+\frac{J_{1}^{\prime}\left(\rho\right)}{\rho^{5}}-30\int_{\rho}^{\infty}\frac{J_{1}\left(x\right)}{x^{7}}\dd x\right)\right|_{\rho=2\pi k_{1}A},\nonumber 
\end{align}
\begin{align}
\mathcal{I}_{3}^{\sin} & =2\pi\int_{A}^{\infty}\left[J_{1}\left(2\pi k_{1}r\right)+J_{1}^{\prime\prime}\left(2\pi k_{1}r\right)\right]\frac{\dd r}{r^{5}}=\left(2\pi\right)^{5}k_{1}^{4}\int_{2\pi k_{1}A}^{\infty}\frac{J_{1}\left(x\right)+J_{1}^{\prime\prime}\left(x\right)}{x^{5}}\dd x,\label{eq:I_sin4}\\
 & =-\left(2\pi\right)^{5}k_{1}^{4}\left.\left(\frac{5J_{1}\left(\rho\right)}{\rho^{6}}+\frac{J_{1}^{\prime}\left(\rho\right)}{\rho^{5}}-\int_{\rho}^{\infty}\frac{J_{1}\left(x\right)}{x^{5}}\dd x-30\int_{\rho}^{\infty}\frac{J_{1}\left(x\right)}{x^{7}}\dd x\right)\right|_{\rho=2\pi k_{1}A}.\nonumber 
\end{align}
Note that, in (\ref{eq:I_sin3})--(\ref{eq:I_sin4}), we employed
integration by parts twice using the asymptotic behaviour of $J_{1}$
given in (\ref{eq:BessJ_asympt}).

Using the results of Lemmas \ref{lem:int_J1_x3}--\ref{lem:int_J1_var},
we have
\begin{align}
\mathcal{I}_{1}^{\sin} & =\frac{\left(2\pi\right)^{2}k_{1}}{A}\left.\left(\rho\int_{\rho}^{\infty}\frac{J_{1}\left(x\right)}{x^{3}}\dd x\right)\right|_{\rho=2\pi k_{1}A}\label{eq:I_sin1_fin}\\
 & =\frac{\left(2\pi\right)^{2}k_{1}}{3A}\left.\left[J_{0}\left(\rho\right)+\frac{J_{1}\left(\rho\right)}{\rho}-\rho-\rho J_{1}\left(\rho\right)+\rho^{2}J_{0}\left(\rho\right)-\frac{\pi}{2}\rho^{2}J_{0}\left(\rho\right)H_{1}\left(\rho\right)+\frac{\pi}{2}\rho^{2}J_{1}\left(\rho\right)H_{0}\left(\rho\right)\right]\right|_{\rho=2\pi k_{1}A},\nonumber 
\end{align}
\begin{align}
\mathcal{I}_{2}^{\sin} & =\frac{\left(2\pi\right)^{2}k_{1}}{A^{3}}\left.\left(\frac{5J_{1}\left(\rho\right)}{\rho^{3}}+\frac{J_{1}^{\prime}\left(\rho\right)}{\rho^{2}}-30\rho^{3}\int_{\rho}^{\infty}\frac{J_{1}\left(x\right)}{x^{7}}\dd x\right)\right|_{\rho=2\pi k_{1}A}\label{eq:I_sin3_fin}\\
 & =\frac{\left(2\pi\right)^{2}k_{1}}{105A^{3}}\left[-\frac{15J_{1}\left(\rho\right)}{\rho^{3}}+\frac{15J_{1}^{\prime}\left(\rho\right)}{\rho^{2}}+\frac{24J_{1}\left(\rho\right)}{\rho}+6J_{1}^{\prime}\left(\rho\right)-2\rho^{2}J_{0}\left(\rho\right)-2\rho J_{1}\left(\rho\right)\right.\nonumber \\
 & \left.+2\rho^{3}+2\rho^{3}J_{1}\left(\rho\right)-2\rho^{4}J_{0}\left(\rho\right)+\pi\rho^{4}J_{0}\left(\rho\right)H_{1}\left(\rho\right)-\pi\rho^{4}J_{1}\left(\rho\right)H_{0}\left(\rho\right)\biggr]\right|_{\rho=2\pi k_{1}A},\nonumber 
\end{align}
\begin{align}
\mathcal{I}_{3}^{\sin} & =-\frac{\left(2\pi\right)^{2}k_{1}}{A^{3}}\left.\left(\frac{5J_{1}\left(\rho\right)}{\rho^{3}}+\frac{J_{1}^{\prime}\left(\rho\right)}{\rho^{2}}-\rho^{3}\int_{\rho}^{\infty}\frac{J_{1}\left(x\right)}{x^{5}}\dd x-30\rho^{3}\int_{\rho}^{\infty}\frac{J_{1}\left(x\right)}{x^{7}}\dd x\right)\right|_{\rho=2\pi k_{1}A}\label{eq:I_sin4_fin}\\
 & =\frac{\left(2\pi\right)^{2}k_{1}}{105A^{3}}\left[\frac{15J_{1}\left(\rho\right)}{\rho^{3}}-\frac{15J_{1}^{\prime}\left(\rho\right)}{\rho^{2}}+\frac{4J_{1}\left(\rho\right)}{\rho}+J_{1}^{\prime}\left(\rho\right)-\frac{\rho^{2}J_{0}\left(\rho\right)}{3}-\frac{\rho J_{1}\left(\rho\right)}{3}\right.\nonumber \\
 & \left.+\frac{\rho^{3}}{3}+\frac{\rho^{3}J_{1}\left(\rho\right)}{3}-\frac{\rho^{4}J_{0}\left(\rho\right)}{3}+\frac{\pi\rho^{4}}{6}J_{0}\left(\rho\right)H_{1}\left(\rho\right)-\frac{\pi\rho^{4}}{6}J_{1}\left(\rho\right)H_{0}\left(\rho\right)\biggr]\right|_{\rho=2\pi k_{1}A}.\nonumber 
\end{align}
Therefore, (\ref{eq:I_sin_decomp}) together with (\ref{eq:I_sin1_fin})--(\ref{eq:I_sin4_fin})
furnishes an explicit form of (\ref{eq:I_sin}). In particular, using
(\ref{eq:BessJ_series}), (\ref{eq:StruvH}), we can compute
\begin{equation}
\left.\partial_{k_{1}}\mathcal{I}^{\sin}\right|_{k_{1}=0^{+}}=\frac{1}{2}\left(2\pi\right)^{2}\left(\frac{c_{1,0}}{A}+\frac{3c_{3,0}+c_{1,2}}{12A^{3}}\right),\label{eq:I_sin_d1}
\end{equation}
\begin{equation}
\left.\partial_{k_{1}}^{3}\mathcal{I}^{\sin}\right|_{k_{1}=0^{+}}=\frac{3}{8}\left(2\pi\right)^{4}A^{2}\left(\frac{c_{1,0}}{A}-\frac{5c_{3,0}+c_{1,2}}{6A^{3}}\right),\label{eq:I_sin_d3}
\end{equation}
\begin{equation}
\left.\partial_{k_{1}}^{5}\mathcal{I}^{\sin}\right|_{k_{1}=0^{+}}=-\frac{5}{16}\left(2\pi\right)^{6}A^{4}\left(\frac{c_{1,0}}{3A}+\frac{7c_{3,0}+c_{1,2}}{8A^{3}}\right),\label{eq:I_sin_d5}
\end{equation}
\begin{equation}
\left.\partial_{k_{1}}^{7}\mathcal{I}^{\sin}\right|_{k_{1}=0^{+}}=\frac{7}{128}\left(2\pi\right)^{8}A^{6}\left(\frac{c_{1,0}}{A}+\frac{9c_{3,0}+c_{1,2}}{6A^{3}}\right),\label{eq:I_sin_d7}
\end{equation}
\begin{equation}
\left.\partial_{k_{1}}^{9}\mathcal{I}^{\sin}\right|_{k_{1}=0^{+}}=-\frac{21}{256}\left(2\pi\right)^{10}A^{8}\left(\frac{3c_{1,0}}{7A}+\frac{11c_{3,0}+c_{1,2}}{20A^{3}}\right),\label{eq:I_sin_d9}
\end{equation}
\begin{equation}
\left.\partial_{k_{1}}^{11}\mathcal{I}^{\sin}\right|_{k_{1}=0^{+}}=\frac{33}{14336}\left(2\pi\right)^{12}A^{10}\left(\frac{98c_{1,0}}{9A}+\frac{13c_{3,0}+c_{1,2}}{A^{3}}\right).\label{eq:I_sin_d11}
\end{equation}
Taking into account (\ref{eq:I_sin}), (\ref{eq:Rn_sin}), we use
(\ref{eq:d1_def})--(\ref{eq:d11_def}) and (\ref{eq:I_sin_d1})--(\ref{eq:I_sin_d11})
in (\ref{eq:ImB3_int}) with $n=0,\ldots,5$, respectively, and thus
arrive at the following set of identities:
\begin{equation}
2\pi\iint_{D_{A}}x_{1}B_{3}\left(\mathbf{x},h\right)\dd^{2}x+2\pi^{2}\left(\frac{c_{1,0}}{A}+\frac{3c_{3,0}+c_{1,2}}{12A^{3}}\right)+\mathcal{R}_{1}^{\sin}=d_{1},\label{eq:d1_ident}
\end{equation}
\begin{equation}
-\frac{4\pi^{3}}{3}\iint_{D_{A}}x_{1}^{3}B_{3}\left(\mathbf{x},h\right)\dd^{2}x+\pi^{4}A^{2}\left(\frac{c_{1,0}}{A}-\frac{5c_{3,0}+c_{1,2}}{6A^{3}}\right)+\frac{1}{6}\mathcal{R}_{3}^{\sin}=d_{3},\label{eq:d3_ident}
\end{equation}
\begin{equation}
\frac{4\pi^{5}}{15}\iint_{D_{A}}x_{1}^{5}B_{3}\left(\mathbf{x},h\right)\dd^{2}x-\frac{\pi^{6}A^{4}}{6}\left(\frac{c_{1,0}}{3A}+\frac{7c_{3,0}+c_{1,2}}{8A^{3}}\right)+\frac{1}{5!}\mathcal{R}_{5}^{\sin}=d_{5},\label{eq:d5_ident}
\end{equation}
\begin{equation}
-\frac{\left(2\pi\right)^{7}}{5040}\iint_{D_{A}}x_{1}^{7}B_{3}\left(\mathbf{x},h\right)\dd^{2}x+\frac{\left(2\pi\right)^{8}A^{6}}{92160}\left(\frac{c_{1,0}}{A}+\frac{9c_{3,0}+c_{1,2}}{6A^{3}}\right)+\frac{1}{7!}\mathcal{R}_{7}^{\sin}=d_{7},\label{eq:d7_ident}
\end{equation}
\begin{equation}
\frac{\left(2\pi\right)^{9}}{362880}\iint_{D_{A}}x_{1}^{9}B_{3}\left(\mathbf{x},h\right)\dd^{2}x-\frac{7\left(2\pi\right)^{10}A^{8}}{10321920}\left(\frac{c_{1,0}}{7A}+\frac{11c_{3,0}+c_{1,2}}{60A^{3}}\right)+\frac{1}{9!}\mathcal{R}_{9}^{\sin}=d_{9},\label{eq:d9_ident}
\end{equation}
\begin{equation}
-\frac{\left(2\pi\right)^{11}}{39916800}\iint_{D_{A}}x_{1}^{11}B_{3}\left(\mathbf{x},h\right)\dd^{2}x+\frac{\left(2\pi\right)^{12}A^{10}}{17340825600}\left(\frac{98c_{1,0}}{9A}+\frac{13c_{3,0}+c_{1,2}}{A^{3}}\right)+\frac{1}{11!}\mathcal{R}_{11}^{\sin}=d_{11}.\label{eq:d11_ident}
\end{equation}

\medskip

\subparagraph*{\uline{\label{par:tang_estim_step2} Step 2: Analysis of the remainder
terms \mbox{$\mathcal{R}_{2n+1}^{\sin}$} for \mbox{$0\protect\leq n\protect\leq5$}}}

\medskip

We shall now show that the remainder terms $\mathcal{R}_{2n+1}^{\sin}$
with $n=0,\ldots,5$ given by (\ref{eq:Rn_sin}) (with $k_{1}>0$,
as assumed before), can be estimated, for $A\gg1$, as follows
\begin{equation}
\mathcal{R}_{2n+1}^{\sin}=\mathcal{O}\left(\frac{1}{A^{5-2n}}\right),\hspace{1em}\hspace{1em}0\leq n\leq5.\label{eq:Rn_sin_estim}
\end{equation}

Proceeding with higher-order terms in the expansions in (\ref{eq:expans_alg32})--(\ref{eq:expans_alg52}),
and hence also in (\ref{eq:B3_asympt}), we notice that we can write,
for $N\geq4$,
\begin{equation}
B_{3}\left(\mathbf{x},h\right)-B_{3}^{\text{asympt}}\left(\mathbf{x},h\right)=\sum_{q=4}^{N}\mathop{\sum\sum}_{\underset{l_{1}+l_{2}=q}{l_{1},\,l_{2}\geq0,}}c_{l_{1},l_{2}}\frac{x_{1}^{l_{1}}x_{2}^{l_{2}}}{\left|\mathbf{x}\right|^{2q+3}}+\mathcal{O}\left(\frac{1}{\left|\mathbf{x}\right|^{N+4}}\right)=:L_{N}\left(\mathbf{x}\right)+\mathcal{O}\left(\frac{1}{\left|\mathbf{x}\right|^{N+4}}\right),\label{eq:LN_def}
\end{equation}
with some constants $c_{l_{1},l_{2}}\in\mathbb{R}$ for $l_{1}$,
$l_{2}\in\mathbb{N}_{0}$. Consequently, we consider, for $n=0,\ldots,5$,
\begin{align}
\mathcal{R}_{2n+1}^{\sin}= & \left.\partial_{k_{1}}^{2n+1}\left(\iint_{\mathbb{R}^{2}\backslash D_{A}}\sin\left(2\pi k_{1}x_{1}\right)L_{N}\left(\mathbf{x}\right)\dd^{2}x\right)\right|_{k_{1}=0^{+}}\label{eq:Rn_sin_decomp}\\
 & +\left.\partial_{k_{1}}^{2n+1}\left(\iint_{\mathbb{R}^{2}\backslash D_{A}}\sin\left(2\pi k_{1}x_{1}\right)\left[B_{3}\left(\mathbf{x},h\right)-B_{3}^{\text{asympt}}\left(\mathbf{x},h\right)-L_{N}\left(\mathbf{x}\right)\right]\dd^{2}x\right)\right|_{k_{1}=0^{+}}.\nonumber 
\end{align}

First of all, we deal with the term on the second line. The integrand
is regular and, for $N>9$, it decays at infinity sufficiently fast
so that the differential operator $\partial_{k_{1}}^{2n+1}$ with
$0\leq n\leq5$ can be passed under the integral sign. We can thus
estimate
\[
\left|\partial_{k_{1}}^{2n+1}\left(\iint_{\mathbb{R}^{2}\backslash D_{A}}\sin\left(2\pi k_{1}x_{1}\right)\left[B_{3}\left(\mathbf{x},h\right)-B_{3}^{\text{asympt}}\left(\mathbf{x},h\right)-L_{N}\left(\mathbf{x}\right)\right]\dd^{2}x\right)\Bigr|_{k_{1}=0^{+}}\right|
\]
\begin{align*}
= & \left|\left(\iint_{\mathbb{R}^{2}\backslash D_{A}}\frac{\left(2\pi x_{1}\right)^{2n+1}}{\left|\mathbf{x}\right|^{N+4}}\left|\mathbf{x}\right|^{N+4}\left[B_{3}\left(\mathbf{x},h\right)-B_{3}^{\text{asympt}}\left(\mathbf{x},h\right)-L_{N}\left(\mathbf{x}\right)\right]\dd^{2}x\right)\right|\\
\leq & \left(2\pi\right)^{2\left(n+1\right)}C_{N}\int_{A}^{\infty}\frac{dr}{r^{N-2n+1}}=\frac{\left(2\pi\right)^{2\left(n+1\right)}C_{N}}{\left(N-2n\right)}\frac{1}{A^{N-2n}}
\end{align*}
for some constant $C_{N}>0$ such that
\[
\left|\mathbf{x}\right|^{N+4}\left|B_{3}\left(\mathbf{x},h\right)-B_{3}^{\text{asympt}}\left(\mathbf{x},h\right)-L_{N}\left(\mathbf{x}\right)\right|\leq C_{N},\hspace{1em}\left|\mathbf{x}\right|\geq A,
\]
and such a bound is possible due to the remainder estimate $\mathcal{O}\left(1/\left|\mathbf{x}\right|^{N+4}\right)$
in (\ref{eq:LN_def}). Here, in the third line of the estimates, we
used the fact that the integral in $r$ converges for $N>2n-1$, $0\leq n\leq5$,
which is true for $N>9$. For such $N$, the obtained estimate of
order $\mathcal{O}\left(1/A^{N-2n}\right)$ is clearly even better
than was aimed for (recall (\ref{eq:Rn_sin_estim})).

We now fix $N=10$ and proceed with estimating the term in the first
line of (\ref{eq:Rn_sin_decomp}). Upon substitution of (\ref{eq:LN_def})
in (\ref{eq:Rn_sin}) and use of polar coordinates (with $x_{1}=r\cos\theta$,
$x_{2}=r\sin\theta$, as before), let us observe that, due to Lemma
\ref{lem:trig_ints} (namely, identities (\ref{eq:trig_int_sin1})--(\ref{eq:trig_int_sin2})),
the only non-vanishing terms stemming from the $L_{N}$ part are those
proportional to 
\begin{equation}
\int_{A}^{\infty}\int_{0}^{2\pi}\sin\left(2\pi k_{1}r\cos\theta\right)\cos^{2\left(p-l\right)+1}\theta\,\,\sin^{2l}\theta \dd\theta\frac{\dd r}{r^{2p+3}},\hspace{1em}\hspace{1em}0\leq l\leq p,\hspace{1em}p\geq2.\label{eq:sin_terms_nonvan}
\end{equation}
Since we can write
\[
\sin^{2l}\theta=\left(1-\cos^{2}\theta\right)^{l}=\sum_{j=0}^{l}\left(\begin{array}{c}
l\\
j
\end{array}\right)\left(-1\right)^{l-j}\cos^{2\left(l-j\right)}\theta,
\]
with $\left(\begin{array}{c}
l\\
j
\end{array}\right)$ denoting a binomial coefficient, we deduce that, to estimate $\mathcal{R}_{2n+1}^{\sin}$,
it suffices only to consider the quantities
\begin{align}
\mathcal{S}_{p,j}:= & \int_{A}^{\infty}\int_{0}^{2\pi}\sin\left(2\pi k_{1}r\cos\theta\right)\cos^{2j+1}\theta\,\dd\theta\frac{\dd r}{r^{2p+3}},\hspace{1em}\hspace{1em}0\leq j\leq p,\hspace{1em}p\geq2,\label{eq:S_pj_def}
\end{align}
and, in particular, their derivatives evaluated at $k_{1}=0$ from
the right: $\left.\partial_{k_{1}}^{2n+1}\mathcal{S}_{p,j}\right|_{k_{1}=0^{+}}$,
$0\leq n\leq5$.

Note that the relation between $p$ (\ref{eq:sin_terms_nonvan})--(\ref{eq:S_pj_def})
and $q$ in (\ref{eq:LN_def}) is $q=2p+1$, $p\geq2$. In other words,
in the asymptotic expansion of the field $B_{3}$ at infinity, not
every term contributes to $\mathcal{R}_{2n+1}^{\sin}$, but only the
terms of every second order in $1/A$, i.e., $\mathcal{O}\left(1/A^{8}\right)$,
$\mathcal{O}\left(1/A^{10}\right)$ and so on.\\
Therefore, we can consider $\mathcal{S}_{p,j}$ only for $p\leq\left\lfloor \frac{N-1}{2}\right\rfloor =8$,
where $\left\lfloor x\right\rfloor $ designates the integer part
of $x$.

To sum up, we need to show that, for all $0\leq j\leq p$, $2\leq p\leq8$
and $0\leq n\leq5$, we are able to produce an estimate
\begin{equation}
\left.\partial_{k_{1}}^{2n+1}\mathcal{S}_{p,j}\right|_{k_{1}=0^{+}}=\mathcal{O}\left(\frac{1}{A^{5-2n}}\right).\label{eq:S_pj_estim}
\end{equation}

For $0\leq n\leq p$, we have
\[
\left.\partial_{k_{1}}^{2n+1}\mathcal{S}_{p,j}\right|_{k_{1}=0^{+}}=\left(-1\right)^{n}\left(2\pi\right)^{2n+1}\int_{A}^{\infty}\int_{0}^{2\pi}\cos^{2j+1}\theta\,\dd \theta\frac{\dd r}{r^{2\left(p-n\right)+2}},
\]
and hence
\[
\left|\left.\partial_{k_{1}}^{2n+1}\mathcal{S}_{p,j}\right|_{k_{1}=0^{+}}\right|\leq\left(2\pi\right)^{2\left(n+1\right)}\int_{A}^{\infty}\frac{\dd r}{r^{2\left(p-n\right)+2}}=\mathcal{O}\left(\frac{1}{A^{2\left(p-n\right)+1}}\right),
\]
where the convergence of the last integral is due to $p\geq n$. The
obtained estimate is in agreement with (\ref{eq:S_pj_estim}) since
$p\ge2$.

To treat the case $n\geq p+1$, a more careful estimate is needed.
To this end, it is convenient to make use of the integral representation
of the Bessel function $J_{1}$ given by (\ref{eq:BessJ1_int_repr})
and rewrite (\ref{eq:S_pj_def}) as
\[
\mathcal{S}_{p,j}=\left(-1\right)^{j}2\pi\int_{A}^{\infty}J_{1}^{\left(2j\right)}\left(2\pi k_{1}r\right)\frac{\dd r}{r^{2p+3}},\hspace{1em}\hspace{1em}0\leq j\leq p,\hspace{1em}p\geq2.
\]
We then evaluate
\begin{align}
\partial_{k_{1}}^{2n+1}\mathcal{S}_{p,j} & =\left(-1\right)^{j}\left(2\pi\right)^{2p+3}\partial_{k_{1}}^{2\left(n-p\right)-1}\int_{A}^{\infty}J_{1}^{\left(2j+2p+2\right)}\left(2\pi k_{1}r\right)\frac{\dd r}{r}\label{eq:S_pj_der}\\
 & =\left(-1\right)^{j+1}\left(2\pi\right)^{2p+4}A\partial_{k_{1}}^{2\left(n-p\right)-2}\left.\frac{J_{1}^{\left(2j+2p+2\right)}\left(\rho\right)}{\rho}\right|_{\rho=2\pi k_{1}A}\nonumber \\
 & =\left(-1\right)^{j+1}\left(2\pi\right)^{2\left(n+1\right)}A^{2\left(n-p\right)-1}\left.\frac{\dd^{2\left(n-p\right)-2}}{\dd\rho^{2\left(n-p\right)-2}}\left(\frac{J_{1}^{\left(2j+2p+2\right)}\left(\rho\right)}{\rho}\right)\right|_{\rho=2\pi k_{1}A},\nonumber 
\end{align}
where, in passing the differential operator $\partial_{k_{1}}^{2p+2}$
under the integral sign, we took into account the asymptotic behaviour
at infinity of $J_{1}$ given by (\ref{eq:BessJ_asympt}) and, on
the second line, employed the following identity valid for $k_{1}>0$:
\begin{align*}
\partial_{k_{1}}\int_{A}^{\infty}J_{1}^{\left(2j+2p+2\right)}\left(2\pi k_{1}r\right)\frac{\dd r}{r} & =\partial_{k_{1}}\int_{2\pi k_{1}A}^{\infty}J_{1}^{\left(2j+2p+2\right)}\left(\rho\right)\frac{\dd\rho}{\rho}\\
 & =-2\pi A\left.\frac{J_{1}^{\left(2j+2p+2\right)}\left(\rho\right)}{\rho}\right|_{\rho=2\pi k_{1}A}.
\end{align*}
Now, recalling the analytic character of the function $J_{1}$ (see
beginning of Appendix) and, more precisely, its series representation
given by (\ref{eq:BessJ_series}), it is clear that every derivative
of $J_{1}$ of even order is also analytic and vanishes at zero. This
implies analyticity of the function $J_{1}^{\left(2j+2p+2\right)}\left(\rho\right)/\rho$
and, consequently, a bound on its every derivative at the origin.
Therefore, from (\ref{eq:S_pj_der}), we deduce that 
\[
\left|\left.\partial_{k_{1}}^{2n+1}\mathcal{S}_{p,j}\right|_{k_{1}=0^{+}}\right|\leq CA^{2\left(n-p\right)-1},
\]
 for some constant $C>0$, and hence (\ref{eq:S_pj_estim}) follows
due to the fact that $p\geq2$.

\medskip

\subparagraph*{\uline{\label{par:tang_estim_step3} Step 3: Asymptotic estimates
for the net moment components}}

\medskip

Recalling that $d_{1}=\pi m_{1}$ (according to (\ref{eq:d1_def}))
and using (\ref{eq:Rn_sin_estim}), we obtain from (\ref{eq:d1_ident})
\begin{equation}
m_{1}=2\iint_{D_{A}}x_{1}B_{3}\left(\mathbf{x},h\right)\dd^{2}x+2\pi\left(\frac{c_{1,0}}{A}+\frac{3c_{3,0}+c_{1,2}}{12A^{3}}\right)+\mathcal{O}\left(\frac{1}{A^{5}}\right).\label{eq:m1_estim_base}
\end{equation}
Similarly, using (\ref{eq:Rn_sin_estim}), we rewrite (\ref{eq:d3_ident})--(\ref{eq:d11_ident}),
respectively, as
\begin{align}
-\frac{1}{6A^{2}}\iint_{D_{A}}x_{1}^{3}B_{3}\left(\mathbf{x},h\right)\dd^{2}x+\frac{\pi}{8}\left(\frac{c_{1,0}}{A}-\frac{5c_{3,0}+c_{1,2}}{6A^{3}}\right) & =\frac{d_{3}}{\left(2\pi\right)^{3}A^{2}}+\mathcal{O}\left(\frac{1}{A^{5}}\right)\label{eq:d3_estim}\\
 & =\mathcal{O}\left(\frac{1}{A^{2}}\right),\nonumber 
\end{align}
\begin{align}
\frac{1}{120A^{4}}\iint_{D_{A}}x_{1}^{5}B_{3}\left(\mathbf{x},h\right)\dd^{2}x-\frac{\pi}{192}\left(\frac{c_{1,0}}{3A}+\frac{7c_{3,0}+c_{1,2}}{8A^{3}}\right) & =\frac{d_{5}}{\left(2\pi\right)^{5}A^{4}}+\mathcal{O}\left(\frac{1}{A^{5}}\right)\label{eq:d5_estim}\\
 & =\mathcal{O}\left(\frac{1}{A^{4}}\right),\nonumber 
\end{align}
\begin{align}
-\frac{1}{5040A^{6}}\iint_{D_{A}}x_{1}^{7}B_{3}\left(\mathbf{x},h\right)\dd^{2}x+\frac{\pi}{46080}\left(\frac{c_{1,0}}{A}+\frac{9c_{3,0}+c_{1,2}}{6A^{3}}\right) & =\frac{d_{7}}{\left(2\pi\right)^{7}A^{6}}+\mathcal{O}\left(\frac{1}{A^{5}}\right)\label{eq:d7_estim}\\
 & =\mathcal{O}\left(\frac{1}{A^{5}}\right),\nonumber 
\end{align}
\begin{align}
\frac{1}{362880A^{8}}\iint_{D_{A}}x_{1}^{9}B_{3}\left(\mathbf{x},h\right)\dd^{2}x-\frac{\pi}{737280}\left(\frac{c_{1,0}}{7A}+\frac{11c_{3,0}+c_{1,2}}{60A^{3}}\right) & =\frac{d_{9}}{\left(2\pi\right)^{9}A^{8}}+\mathcal{O}\left(\frac{1}{A^{5}}\right)\label{eq:d9_estim}\\
 & =\mathcal{O}\left(\frac{1}{A^{5}}\right),\nonumber 
\end{align}
\begin{align}
-\frac{1}{39916800A^{10}}\iint_{D_{A}}x_{1}^{11}B_{3}\left(\mathbf{x},h\right)\dd^{2}x+\frac{\pi}{8670412800}\left(\frac{98c_{1,0}}{9A}+\frac{13c_{3,0}+c_{1,2}}{A^{3}}\right) & =\frac{d_{11}}{\left(2\pi\right)^{11}A^{10}}+\mathcal{O}\left(\frac{1}{A^{5}}\right)\label{eq:d11_estim}\\
 & =\mathcal{O}\left(\frac{1}{A^{5}}\right).\nonumber 
\end{align}

While the first-order estimate for $m_{1}$ given in (\ref{eq:fin_estim_m1m2_ord1})
follows immediately from rigorously justified (\ref{eq:m1_estim_base}),
the higher-order estimates require more work. Namely, we wish to combine
(\ref{eq:d3_estim})--(\ref{eq:d11_estim}) in order to eliminate
in (\ref{eq:m1_estim_base}) the terms with 
\begin{equation}
\widetilde{c}_{1,0}:=\frac{c_{1,0}}{A},\hspace{1em}\hspace{1em}\widetilde{c}_{3,0}:=\frac{c_{3,0}}{A^{3}},\hspace{1em}\hspace{1em}\widetilde{c}_{1,2}:=\frac{c_{1,2}}{A^{3}},\label{eq:a1455_coeffs_tild}
\end{equation}
and, at the same time, would not commit a larger error (in order of
$A$) than that of the eliminated term.

Expressing $c_{1,0}/A$ in terms of $\mathcal{O}\left(1/A^{2}\right)$
quantities from (\ref{eq:d3_estim}) and inserting it into (\ref{eq:m1_estim_base}),
we deduce the second-order estimate for $m_{1}$ given by (\ref{eq:fin_estim_m1m2_ord2}).
We note, however, that, for third or higher order estimates, identity
(\ref{eq:d3_estim}) is not useful due to the fact that its right-hand
side has an unknown quantity $d_{3}$ appearing of order $\mathcal{O}\left(1/A^{2}\right)$
which will block any further effort to increase the accuracy of estimates. 

Derivation of the third-order estimate given by (\ref{eq:fin_estim_m1m2_ord3})
is analogous to the previous one with the only difference that $c_{1,0}/A$
is expressed (now in terms of $\mathcal{O}\left(1/A^{3}\right)$ quantities)
from (\ref{eq:d5_estim}) rather than from (\ref{eq:d3_estim}).

To proceed with derivation of estimates (\ref{eq:fin_estim_m1m2_ord4})
and (\ref{eq:fin_estim_m1m2_ord5}), it is convenient first to rewrite
(\ref{eq:d5_estim})--(\ref{eq:d11_estim}), respectively, as
\begin{equation}
7\widetilde{c}_{3,0}+\widetilde{c}_{1,2}=\frac{64}{5\pi A^{4}}\iint_{D_{A}}x_{1}^{5}B_{3}\left(\mathbf{x},h\right)\dd^{2}x-\frac{8}{3}\widetilde{c}_{1,0}+\mathcal{O}\left(\frac{1}{A^{4}}\right)=:\mathcal{T}_{5},\label{eq:T5_def}
\end{equation}
\begin{equation}
9\widetilde{c}_{3,0}+\widetilde{c}_{1,2}=\frac{384}{7\pi A^{6}}\iint_{D_{A}}x_{1}^{7}B_{3}\left(\mathbf{x},h\right)\dd^{2}x-6\widetilde{c}_{1,0}+\mathcal{O}\left(\frac{1}{A^{5}}\right)=:\mathcal{T}_{7},\label{eq:T7_def}
\end{equation}
\begin{equation}
11\widetilde{c}_{3,0}+\widetilde{c}_{1,2}=\frac{2560}{21\pi A^{8}}\iint_{D_{A}}x_{1}^{9}B_{3}\left(\mathbf{x},h\right)\dd^{2}x-\frac{60}{7}\widetilde{c}_{1,0}+\mathcal{O}\left(\frac{1}{A^{5}}\right)=:\mathcal{T}_{9},\label{eq:T9_def}
\end{equation}
\begin{equation}
13\widetilde{c}_{3,0}+\widetilde{c}_{1,2}=\frac{64512}{297\pi A^{10}}\iint_{D_{A}}x_{1}^{11}B_{3}\left(\mathbf{x},h\right)\dd^{2}x-\frac{98}{9}\widetilde{c}_{1,0}+\mathcal{O}\left(\frac{1}{A^{5}}\right)=:\mathcal{T}_{11}.\label{eq:T11_def}
\end{equation}

It is easy to see that
\begin{equation}
\frac{1}{2}\left(\mathcal{T}_{5}+\mathcal{T}_{9}\right)=\mathcal{T}_{7},\hspace{1em}\hspace{1em}\frac{1}{2}\left(\mathcal{T}_{7}+\mathcal{T}_{11}\right)=\mathcal{T}_{9}.\label{eq:T_lin_dep1}
\end{equation}

To obtain the fourth-order estimate for $m_{1}$ given in (\ref{eq:fin_estim_m1m2_ord4}),
we shall use (\ref{eq:T5_def})--(\ref{eq:T9_def}).\\
We start by using the first equation of (\ref{eq:T_lin_dep1}) (together
with definitions (\ref{eq:T5_def})--(\ref{eq:T9_def})) to express
$\widetilde{c}_{1,0}=c_{1,0}/A$ up
to order $\mathcal{O}\left(1/A^{4}\right)$, namely,
\begin{equation}
\frac{c_{1,0}}{A}=\widetilde{c}_{1,0}=-\frac{4}{\pi}\iint_{D_{A}}\left[\frac{21}{5}\left(\frac{x_{1}}{A}\right)^{4}-36\left(\frac{x_{1}}{A}\right)^{6}+40\left(\frac{x_{1}}{A}\right)^{8}\right]x_{1}B_{3}\left(\mathbf{x},h\right)\dd^{2}x+\mathcal{O}\left(\frac{1}{A^{4}}\right).\label{eq:a1_tild_ord4}
\end{equation}
Second, we observe that the quantity $3c_{3,0}+c_{1,2}$
appearing in (\ref{eq:m1_estim_base}) is related to (\ref{eq:T5_def})--(\ref{eq:T9_def})
as follows:
\begin{align}
\frac{3c_{3,0}+c_{1,2}}{A^{3}}= & 3\widetilde{c}_{3,0}+\widetilde{c}_{1,2}=4\left(\mathcal{T}_{5}-\mathcal{T}_{7}\right)+\mathcal{T}_{9}\label{eq:a455_tild_ord4}\\
= & \frac{256}{\pi}\iint_{D_{A}}\left[\frac{1}{5}\left(\frac{x_{1}}{A}\right)^{4}-\frac{6}{7}\left(\frac{x_{1}}{A}\right)^{6}+\frac{10}{21}\left(\frac{x_{1}}{A}\right)^{8}\right]x_{1}B_{3}\left(\mathbf{x},h\right)\dd^{2}x\nonumber \\
 & +\frac{100}{21}\widetilde{c}_{1,0}+\mathcal{O}\left(\frac{1}{A^{4}}\right).\nonumber 
\end{align}
Finally, substitution of (\ref{eq:a455_tild_ord4}) in (\ref{eq:m1_estim_base})
followed by the use of (\ref{eq:a1_tild_ord4}) furnishes (\ref{eq:fin_estim_m1m2_ord4}). 

To arrive at the fifth-order estimate given by (\ref{eq:fin_estim_m1m2_ord5}),
we shall use identities (\ref{eq:T7_def})--(\ref{eq:T11_def}).
The second equation of (\ref{eq:T_lin_dep1}) gives
\begin{equation}
\frac{c_{1,0}}{A}=\widetilde{c}_{1,0}=\frac{24}{\pi}\iint_{D_{A}}\left[-9\left(\frac{x_{1}}{A}\right)^{6}+40\left(\frac{x_{1}}{A}\right)^{8}-\frac{392}{11}\left(\frac{x_{1}}{A}\right)^{10}\right]x_{1}B_{3}\left(\mathbf{x},h\right)\dd^{2}x+\mathcal{O}\left(\frac{1}{A^{5}}\right).\label{eq:a1_tild_ord5}
\end{equation}
We also have an analog of (\ref{eq:a1455_coeffs_tild}), namely, 
\begin{align}
\frac{3c_{3,0}+c_{1,2}}{A^{3}}= & 3\widetilde{c}_{3,0}+\widetilde{c}_{1,2}=5\left(\mathcal{T}_{7}-\mathcal{T}_{9}\right)+\mathcal{T}_{11}\label{eq:a455_tild_ord5}\\
= & \frac{128}{\pi}\iint_{D_{A}}\left[\frac{15}{7}\left(\frac{x_{1}}{A}\right)^{6}-\frac{100}{21}\left(\frac{x_{1}}{A}\right)^{8}+\frac{56}{33}\left(\frac{x_{1}}{A}\right)^{10}\right]x_{1}B_{3}\left(\mathbf{x},h\right)\dd^{2}x\nonumber \\
 & +\frac{124}{63}\widetilde{c}_{1,0}+\mathcal{O}\left(\frac{1}{A^{5}}\right).\nonumber 
\end{align}
Therefore, inserting of (\ref{eq:a455_tild_ord5}) in (\ref{eq:m1_estim_base})
followed by the use of (\ref{eq:a1_tild_ord5}) gives (\ref{eq:fin_estim_m1m2_ord5}). 

\subsubsection{\label{subsec:norm_estim} Normal component of the net moment}

Similarly to the case of tangential net moment components, we expand
the integrand in (\ref{eq:B3_hat_k1}) in power series in a positive
neighborhood of $k_{1}=0$, but, in contrast to that previous situation,
the attention will now be on the real part. This yields
\begin{equation}
\left.\left[\text{Re }\hat{B}_{3}\left(k_{1},0,h\right)\right]\right|_{k_{1}=0^{+}}=0,\label{eq:d0_def}
\end{equation}
\begin{align}
\frac{1}{2}\left.\partial_{k_{1}}^{2}\left[\text{Re }\hat{B}_{3}\left(k_{1},0,h\right)\right]\right|_{k_{1}=0^{+}} & =-2\pi^{2}\left[\left\langle x_{1}M_{1}\right\rangle +\left\langle \left(h-x_{3}\right)M_{3}\right\rangle \right]\label{eq:d2_def}\\
 & =:d_{2},\nonumber 
\end{align}
\begin{align}
\frac{1}{24}\left.\partial_{k_{1}}^{4}\left[\text{Re }\hat{B}_{3}\left(k_{1},0,h\right)\right]\right|_{k_{1}=0^{+}} & =-\frac{4\pi^{4}}{3}\left[3\left\langle \left(h-x_{3}\right)^{2}x_{1}M_{1}\right\rangle -\left\langle x_{1}^{3}M_{1}\right\rangle +\left\langle \left(h-x_{3}\right)^{3}M_{3}\right\rangle -3\left\langle \left(h-x_{3}\right)x_{1}^{2}M_{3}\right\rangle \right]\label{eq:d4_def}\\
 & =:d_{4},\nonumber 
\end{align}
\begin{align}
\frac{1}{6!}\left.\partial_{k_{1}}^{6}\left[\text{Re }\hat{B}_{3}\left(k_{1},0,h\right)\right]\right|_{k_{1}=0^{+}}= & -\frac{4\pi^{6}}{15}\left[5\left\langle \left(h-x_{3}\right)^{4}x_{1}M_{1}\right\rangle -10\left\langle \left(h-x_{3}\right)^{2}x_{1}^{3}M_{1}\right\rangle +\left\langle x_{1}^{5}M_{1}\right\rangle \right.\label{eq:d6_def}\\
 & \left.+\left\langle \left(h-x_{3}\right)^{5}M_{3}\right\rangle -10\left\langle \left(h-x_{3}\right)^{3}x_{1}^{2}M_{3}\right\rangle +5\left\langle \left(h-x_{3}\right)x_{1}^{4}M_{3}\right\rangle \right]\nonumber \\
=: & d_{6},\nonumber 
\end{align}
\begin{align}
\frac{1}{8!}\left.\partial_{k_{1}}^{8}\left[\text{Re }\hat{B}_{3}\left(k_{1},0,h\right)\right]\right|_{k_{1}=0^{+}}= & -\frac{8\pi^{8}}{315}\left[7\left\langle \left(h-x_{3}\right)^{6}x_{1}M_{1}\right\rangle -35\left\langle \left(h-x_{3}\right)^{4}x_{1}^{3}M_{1}\right\rangle +21\left\langle \left(h-x_{3}\right)^{2}x_{1}^{5}M_{1}\right\rangle \right.\label{eq:d8_def}\\
 & -\left\langle x_{1}^{7}M_{1}\right\rangle +\left\langle \left(h-x_{3}\right)^{7}M_{3}\right\rangle -21\left\langle \left(h-x_{3}\right)^{5}x_{1}^{2}M_{3}\right\rangle +35\left\langle \left(h-x_{3}\right)^{3}x_{1}^{4}M_{3}\right\rangle \nonumber \\
 & \left.-7\left\langle \left(h-x_{3}\right)x_{1}^{6}M_{3}\right\rangle \right]\nonumber \\
=: & d_{8},\nonumber 
\end{align}
\begin{align}
\frac{1}{10!}\left.\partial_{k_{1}}^{10}\left[\text{Re }\hat{B}_{3}\left(k_{1},0,h\right)\right]\right|_{k_{1}=0^{+}}= & -\frac{4\pi^{10}}{2385}\left[9\left\langle \left(h-x_{3}\right)^{8}x_{1}M_{1}\right\rangle -84\left\langle \left(h-x_{3}\right)^{6}x_{1}^{3}M_{1}\right\rangle +126\left\langle \left(h-x_{3}\right)^{4}x_{1}^{5}M_{1}\right\rangle \right.\label{eq:d10_def}\\
 & -36\left\langle \left(h-x_{3}\right)^{2}x_{1}^{7}M_{1}\right\rangle +\left\langle x_{1}^{9}M_{1}\right\rangle +\left\langle \left(h-x_{3}\right)^{9}M_{3}\right\rangle -36\left\langle \left(h-x_{3}\right)^{7}x_{1}^{2}M_{3}\right\rangle \nonumber \\
 & \left.+126\left\langle \left(h-x_{3}\right)^{5}x_{1}^{4}M_{3}\right\rangle -84\left\langle \left(h-x_{3}\right)^{3}x_{1}^{6}M_{3}\right\rangle +9\left\langle \left(h-x_{3}\right)x_{1}^{8}M_{3}\right\rangle \right]\nonumber \\
=: & d_{10},\nonumber 
\end{align}

On the other hand, (\ref{eq:B3_decomp}) implies that, for $n\in\mathbb{N}_{0}$,
\begin{align}
\left.\partial_{k_{1}}^{2n}\left[\text{Re }\hat{B}_{3}\left(k_{1},0,h\right)\right]\right|_{k_{1}=0^{+}}= & \left(-1\right)^{n}\left(2\pi\right)^{2n}\iint_{D_{A}}x_{1}^{2n}B_{3}\left(\mathbf{x},h\right)\dd^{2}x\label{eq:ReB3_int}\\
 & +\left.\partial_{k_{1}}^{2n}\left(\iint_{\mathbb{R}^{2}\backslash D_{A}}\cos\left(2\pi k_{1}x_{1}\right)B_{3}^{\text{asympt}}\left(\mathbf{x},h\right)\dd^{2}x\right)\right|_{k_{1}=0^{+}}\nonumber \\
 & +\left.\partial_{k_{1}}^{2n}\left(\iint_{\mathbb{R}^{2}\backslash D_{A}}\cos\left(2\pi k_{1}x_{1}\right)\left[B_{3}\left(\mathbf{x},h\right)-B_{3}^{\text{asympt}}\left(\mathbf{x},h\right)\right]\dd^{2}x\right)\right|_{k_{1}=0^{+}}.\nonumber 
\end{align}

As before, we continue in 3 steps. First, we evaluate explicitly 
\begin{equation}
\mathcal{I}^{\cos}\equiv\mathcal{I}^{\cos}\left(k_{1},A\right):=\iint_{\mathbb{R}^{2}\backslash D_{A}}\cos\left(2\pi k_{1}x_{1}\right)B_{3}^{\text{asympt}}\left(\mathbf{x},h\right)\dd^{2}x,\label{eq:I_cos}
\end{equation}
that allow us to obtain from (\ref{eq:ReB3_int}), a set of useful
identities involving remainder terms. Second, we estimate the remainder
terms, namely, 
\begin{equation}
\mathcal{R}_{2n}^{\cos}\equiv\mathcal{R}_{2n}^{\cos}\left(A\right):=\left.\partial_{k_{1}}^{2n}\left(\iint_{\mathbb{R}^{2}\backslash D_{A}}\cos\left(2\pi k_{1}x_{1}\right)\left[B_{3}\left(\mathbf{x},h\right)-B_{3}^{\text{asympt}}\left(\mathbf{x},h\right)\right]\dd^{2}x\right)\right|_{k_{1}=0^{+}},\hspace{1em}n\in\mathbb{N}_{0},\label{eq:Rn_cos}
\end{equation}
for large $A$. At last, we rigorously derive asymptotic formulas
(\ref{eq:fin_estim_m3_ord2}), (\ref{eq:fin_estim_m3_ord3}), (\ref{eq:fin_estim_m3_ord4})
from the obtained set of identities.

\medskip

\subparagraph*{\uline{\label{par:norm_estim_step1} Step 1: Derivation of the
set of identities}}

\medskip

Using (\ref{eq:B3_asympt_def}) and passing to the polar coordinates
using $x=r\cos\theta$, $y=r\sin\theta$, $\dd^{2}x=r\dd r\dd\theta$, we
obtain from (\ref{eq:I_cos})
\begin{equation}
\mathcal{I}^{\cos}=c_{0,0}\mathcal{I}_{1}^{\cos}+c_{2,0}\mathcal{I}_{2}^{\cos}+c_{0,2}\mathcal{I}_{3}^{\cos},\label{eq:I_cos_decomp}
\end{equation}
where 
\begin{equation}
\mathcal{I}_{1}^{\cos}:=\int_{A}^{\infty}\int_{0}^{2\pi}\cos\left(2\pi k_{1}r\cos\theta\right)d\theta\frac{\dd r}{r^{2}},\label{eq:I_cos1_def}
\end{equation}
\begin{equation}
\mathcal{I}_{2}^{\cos}:=\int_{A}^{\infty}\int_{0}^{2\pi}\cos\left(2\pi k_{1}r\cos\theta\right)\cos^{2}\theta d\theta\frac{\dd r}{r^{4}},\label{eq:I_cos3_def}
\end{equation}
\begin{equation}
\mathcal{I}_{3}^{\cos}:=\int_{A}^{\infty}\int_{0}^{2\pi}\cos\left(2\pi k_{1}r\cos\theta\right)\sin^{2}\theta d\theta\frac{\dd r}{r^{4}}.\label{eq:I_cos4_def}
\end{equation}
Here, we used results (\ref{eq:trig_int_cos1})--(\ref{eq:trig_int_cos2})
of Lemma \ref{lem:trig_ints} multiple times to deduce vanishing of
the integrals associated with the terms which involve $c_{1,0}$,
$c_{0,1}$, $c_{1,1}$, $c_{3,0}$, $c_{0,3}$, $c_{2,1}$, $c_{1,2}$.

Using the integral representation of Bessel functions (due to (\ref{eq:BessJ0_int_repr})),
integration by parts, the asymptotics of $J_{1}$ given by (\ref{eq:BessJ_asympt}),
and the relation $J_{0}^{\prime}\left(x\right)=-J_{1}\left(x\right)$
(see (\ref{eq:J0der_J1_conn})), we can write, for $k_{1}>0$,
\begin{align}
\mathcal{I}_{1}^{\cos} & =2\pi\int_{A}^{\infty}J_{0}\left(2\pi k_{1}r\right)\frac{\dd r}{r^{2}}=\left(2\pi\right)^{2}k_{1}\int_{2\pi k_{1}A}^{\infty}\frac{J_{0}\left(x\right)}{x^{2}}\dd x\label{eq:I_cos1}\\
 & =\left(2\pi\right)^{2}k_{1}\left.\left(\frac{J_{0}\left(\rho\right)}{\rho}-\int_{\rho}^{\infty}\frac{J_{1}\left(x\right)}{x}\dd x\right)\right|_{\rho=2\pi k_{1}A},\nonumber 
\end{align}
\begin{align}
\mathcal{I}_{2}^{\cos} & =-2\pi\int_{A}^{\infty}J_{0}^{\prime\prime}\left(2\pi k_{1}r\right)\frac{\dd r}{r^{4}}=\left(2\pi\right)^{4}k_{1}^{3}\int_{2\pi k_{1}A}^{\infty}\frac{J_{1}^{\prime}\left(x\right)}{x^{4}}\dd x\label{eq:I_cos3}\\
 & =\left(2\pi\right)^{4}k_{1}^{3}\left.\left(\frac{J_{1}\left(\rho\right)}{\rho^{4}}+4\int_{\rho}^{\infty}\frac{J_{1}\left(x\right)}{x^{5}}\dd x\right)\right|_{\rho=2\pi k_{1}A},\nonumber 
\end{align}
\begin{align}
\mathcal{I}_{3}^{\cos} & =2\pi\int_{A}^{\infty}\left[J_{0}\left(2\pi k_{1}r\right)+J_{0}^{\prime\prime}\left(2\pi k_{1}r\right)\right]\frac{\dd r}{r^{4}}=\left(2\pi\right)^{4}k_{1}^{3}\int_{2\pi k_{1}A}^{\infty}\frac{J_{0}\left(x\right)+J_{0}^{\prime\prime}\left(x\right)}{x^{4}}\dd x,\label{eq:I_cos4}\\
 & =\left(2\pi\right)^{4}k_{1}^{3}\left.\left(\frac{J_{0}\left(\rho\right)}{3\rho^{3}}-\frac{J_{1}\left(\rho\right)}{\rho^{4}}-\frac{1}{3}\int_{\rho}^{\infty}\frac{J_{1}\left(x\right)}{x^{3}}\dd x-4\int_{\rho}^{\infty}\frac{J_{1}\left(x\right)}{x^{5}}\dd x\right)\right|_{\rho=2\pi k_{1}A}.\nonumber 
\end{align}
Inserting here the results of Lemmas \ref{lem:int_J1_x3}--\ref{lem:int_J1_var},
namely, (\ref{eq:int_J1_x3}), (\ref{eq:int_J1_x})--(\ref{eq:int_J1_x5}),
we arrive at
\begin{align}
\mathcal{I}_{1}^{\cos} & =\frac{2\pi}{A}\left[2J_{0}\left(\rho\right)+\frac{J_{1}\left(\rho\right)}{\rho}-J_{1}^{\prime}\left(\rho\right)-2J_{1}\left(\rho\right)-\rho-\rho J_{1}\left(\rho\right)\right.\label{eq:I1_cos_fin}\\
 & \left.\left.+\rho^{2}J_{0}\left(\rho\right)-\frac{\pi\rho^{2}}{2}J_{0}\left(\rho\right)H_{1}\left(\rho\right)+\frac{\pi\rho^{2}}{2}J_{1}\left(\rho\right)H_{0}\left(\rho\right)\right]\right|_{\rho=2\pi k_{1}A},\nonumber 
\end{align}
\begin{align}
\mathcal{I}_{2}^{\cos} & =\frac{2\pi}{15A^{3}}\left[-\frac{J_{0}\left(\rho\right)}{3}+4J_{1}^{\prime}\left(\rho\right)-\frac{4\rho^{2}J_{0}\left(\rho\right)}{3}-\frac{4\rho J_{1}\left(\rho\right)}{3}+\frac{4\rho^{3}}{3}+\frac{4\rho^{3}J_{1}\left(\rho\right)}{3}\right.\label{eq:I3_cos_fin}\\
 & \left.\left.-\frac{4\rho^{4}J_{0}\left(\rho\right)}{3}+\frac{2\pi\rho^{4}}{3}J_{0}\left(\rho\right)H_{1}\left(\rho\right)-\frac{2\pi\rho^{4}}{3}J_{1}\left(\rho\right)H_{0}\left(\rho\right)\right]\right|_{\rho=2\pi k_{1}A},\nonumber 
\end{align}
\begin{align}
\mathcal{I}_{3}^{\cos} & =\frac{2\pi}{3A^{3}}\left[J_{0}\left(\rho\right)+\frac{J_{1}\left(\rho\right)}{15\rho}-\frac{4J_{1}^{\prime}\left(\rho\right)}{5}-\frac{\rho^{2}J_{0}\left(\rho\right)}{15}-\frac{\rho J_{1}\left(\rho\right)}{15}+\frac{\rho^{3}}{15}+\frac{\rho^{3}J_{1}\left(\rho\right)}{15}\right.\label{eq:I4_cos_fin}\\
 & \left.\left.-\frac{\rho^{4}J_{0}\left(\rho\right)}{15}+\frac{\pi\rho^{4}}{30}J_{0}\left(\rho\right)H_{1}\left(\rho\right)-\frac{\pi\rho^{4}}{30}J_{1}\left(\rho\right)H_{0}\left(\rho\right)\right]\right|_{\rho=2\pi k_{1}A}.\nonumber 
\end{align}
Substitution of (\ref{eq:I1_cos_fin})--(\ref{eq:I4_cos_fin}) into
(\ref{eq:I_cos_decomp}), we employ (\ref{eq:BessJ_series}) and (\ref{eq:StruvH})
to compute
\begin{equation}
\left.\mathcal{I}^{\cos}\right|_{k_{1}=0^{+}}=2\pi\left(\frac{c_{0,0}}{A}+\frac{11c_{2,0}+19c_{0,2}}{90A^{3}}\right),\label{eq:I_cos_d0}
\end{equation}
\begin{equation}
\left.\partial_{k_{1}}^{2}\mathcal{I}^{\cos}\right|_{k_{1}=0^{+}}=\frac{1}{4}\left(2\pi\right)^{3}A^{2}\left(\frac{c_{0,0}}{A}-\frac{131c_{2,0}+49c_{0,2}}{180A^{3}}\right),\label{eq:I_cos_d2}
\end{equation}
\begin{equation}
\left.\partial_{k_{1}}^{4}\mathcal{I}^{\cos}\right|_{k_{1}=0^{+}}=-\frac{1}{8}\left(2\pi\right)^{5}A^{4}\left(\frac{c_{0,0}}{A}+\frac{229c_{2,0}+41c_{0,2}}{90A^{3}}\right),\label{eq:I_cos_d4}
\end{equation}
\begin{equation}
\left.\partial_{k_{1}}^{6}\mathcal{I}^{\cos}\right|_{k_{1}=0^{+}}=\frac{1}{16}\left(2\pi\right)^{7}A^{6}\left(\frac{c_{0,0}}{A}+\frac{109c_{2,0}+11c_{0,2}}{72A^{3}}\right),\label{eq:I_cos_d6}
\end{equation}
\begin{equation}
\left.\partial_{k_{1}}^{8}\mathcal{I}^{\cos}\right|_{k_{1}=0^{+}}=-3136\left(2\pi\right)^{9}A^{8}\left(\frac{c_{0,0}}{7A}+\frac{17c_{2,0}+c_{0,2}}{90A^{3}}\right),\label{eq:I_cos_d8}
\end{equation}
\begin{equation}
\left.\partial_{k_{1}}^{10}\mathcal{I}^{\cos}\right|_{k_{1}=0^{+}}=\frac{7}{256}\left(2\pi\right)^{11}A^{10}\left(\frac{c_{0,0}}{A}+\frac{523c_{2,0}+17c_{0,2}}{420A^{3}}\right).\label{eq:I_cos_d10}
\end{equation}
Taking into account (\ref{eq:I_cos}), (\ref{eq:Rn_cos}), we use
(\ref{eq:d0_def})--(\ref{eq:d10_def}) and (\ref{eq:I_cos_d0})--(\ref{eq:I_cos_d10})
in (\ref{eq:ReB3_int}) with $n=0,\ldots,5$, respectively, and thus
arrive at the following set of identities:
\begin{equation}
\iint_{D_{A}}B_{3}\left(\mathbf{x},h\right)\dd^{2}x+2\pi\left(\frac{c_{0,0}}{A}+\frac{11c_{2,0}+19c_{0,2}}{90A^{3}}\right)+\mathcal{R}_{0}^{\cos}=0,\label{eq:d0_ident}
\end{equation}
\begin{equation}
-\frac{\left(2\pi\right)^{2}}{2}\iint_{D_{A}}x_{1}^{2}B_{3}\left(\mathbf{x},h\right)\dd^{2}x+\frac{\left(2\pi\right)^{3}}{4}A^{2}\left(\frac{c_{0,0}}{A}-\frac{131c_{2,0}+49c_{0,2}}{180A^{3}}\right)+\frac{1}{2}\mathcal{R}_{2}^{\cos}=d_{2},\label{eq:d2_ident}
\end{equation}
\begin{equation}
\frac{\left(2\pi\right)^{4}}{24}\iint_{D_{A}}x_{1}^{4}B_{3}\left(\mathbf{x},h\right)\dd^{2}x-\frac{1}{192}\left(2\pi\right)^{5}A^{4}\left(\frac{c_{0,0}}{A}+\frac{229c_{2,0}+41c_{0,2}}{90A^{3}}\right)+\frac{1}{4!}\mathcal{R}_{4}^{\cos}=d_{4},\label{eq:d4_ident}
\end{equation}
\begin{equation}
-\frac{\left(2\pi\right)^{6}}{720}\iint_{D_{A}}x_{1}^{6}B_{3}\left(\mathbf{x},h\right)\dd^{2}x+\frac{1}{11520}\left(2\pi\right)^{7}A^{6}\left(\frac{c_{0,0}}{A}+\frac{109c_{2,0}+11c_{0,2}}{72A^{3}}\right)+\frac{1}{6!}\mathcal{R}_{6}^{\cos}=d_{6},\label{eq:d6_ident}
\end{equation}
\begin{equation}
\frac{\left(2\pi\right)^{8}}{40320}\iint_{D_{A}}x_{1}^{8}B_{3}\left(\mathbf{x},h\right)\dd^{2}x-\frac{7}{90}\left(2\pi\right)^{9}A^{8}\left(\frac{c_{0,0}}{7A}+\frac{17c_{2,0}+c_{0,2}}{90A^{3}}\right)+\frac{1}{8!}\mathcal{R}_{8}^{\cos}=d_{8},\label{eq:d8_ident}
\end{equation}
\begin{equation}
-\frac{\left(2\pi\right)^{10}}{3628800}\iint_{D_{A}}x_{1}^{10}B_{3}\left(\mathbf{x},h\right)\dd^{2}x+\frac{1}{132710400}\left(2\pi\right)^{11}A^{10}\left(\frac{c_{0,0}}{A}+\frac{523c_{2,0}+17c_{0,2}}{420A^{3}}\right)+\frac{1}{10!}\mathcal{R}_{10}^{\cos}=d_{10}.\label{eq:d10_ident}
\end{equation}

\medskip

\subparagraph*{\uline{\label{par:norm_estim_step2} Step 2: Analysis of the remainder
terms \mbox{$\mathcal{R}_{2n}^{\cos}$} for \mbox{$0\protect\leq n\protect\leq5$}}}

\medskip

We shall show that
\begin{equation}
\mathcal{R}_{2n}^{\cos}=\mathcal{O}\left(\frac{1}{A^{5-2n}}\right),\hspace{1em}\hspace{1em}0\leq n\leq5.\label{eq:Rn_cos_estim}
\end{equation}
The reasoning will we identical to that of Step 2 of Subsection \ref{subsec:tang_estim},
therefore, we omit repetition of some details.

Using previously introduced notation $L_{N}$ (see (\ref{eq:LN_def})),
we can write

\begin{align}
\mathcal{R}_{2n}^{\cos}= & \left.\partial_{k_{1}}^{2n}\left(\iint_{\mathbb{R}^{2}\backslash D_{A}}\cos\left(2\pi k_{1}x_{1}\right)L_{N}\left(\mathbf{x}\right)\dd^{2}x\right)\right|_{k_{1}=0^{+}}\label{eq:Rn_cos_decomp}\\
 & +\left.\partial_{k_{1}}^{2n}\left(\iint_{\mathbb{R}^{2}\backslash D_{A}}\cos\left(2\pi k_{1}x_{1}\right)\left[B_{3}\left(\mathbf{x},h\right)-B_{3}^{\text{asympt}}\left(\mathbf{x},h\right)-L_{N}\left(\mathbf{x}\right)\right]\dd^{2}x\right)\right|_{k_{1}=0^{+}}.\nonumber 
\end{align}

Assuming $N>9$, we estimate
\[
\left|\partial_{k_{1}}^{2n}\left(\iint_{\mathbb{R}^{2}\backslash D_{A}}\cos\left(2\pi k_{1}x_{1}\right)\left[B_{3}\left(\mathbf{x},h\right)-B_{3}^{\text{asympt}}\left(\mathbf{x},h\right)-L_{N}\left(\mathbf{x}\right)\right]\dd^{2}x\right)\Bigr|_{k_{1}=0^{+}}\right|
\]
\begin{align*}
= & \left|\left(\iint_{\mathbb{R}^{2}\backslash D_{A}}\frac{\left(2\pi x_{1}\right)^{2n}}{\left|\mathbf{x}\right|^{N+4}}\left|\mathbf{x}\right|^{N+4}\left[B_{3}\left(\mathbf{x},h\right)-B_{3}^{\text{asympt}}\left(\mathbf{x},h\right)-L_{N}\left(\mathbf{x}\right)\right]\dd^{2}x\right)\right|\\
\leq & \left(2\pi\right)^{2n+1}\widetilde{C}_{N}\int_{A}^{\infty}\frac{dr}{r^{N-2n+2}}=\frac{\left(2\pi\right)^{2\left(n+1\right)}\widetilde{C}_{N}}{\left(N-2n+1\right)}\frac{1}{A^{N-2n+1}},
\end{align*}
for some constant $\widetilde{C}_{N}>0$, and note that this complies
with (\ref{eq:Rn_cos_estim}). 

We now fix $N=10$ and proceed with estimating the term on the first
line of (\ref{eq:Rn_cos_estim}). Identities (\ref{eq:trig_int_cos1})--(\ref{eq:trig_int_cos2})
of Lemma \ref{lem:trig_ints} entail that the only non-vanishing terms
are those of the form
\begin{equation}
\int_{A}^{\infty}\int_{0}^{2\pi}\cos\left(2\pi k_{1}r\cos\theta\right)\cos^{2\left(p-l\right)}\theta\,\,\cos^{2l}\theta \dd\theta\frac{\dd r}{r^{2\left(p+1\right)}},\hspace{1em}\hspace{1em}0\leq l\leq p,\hspace{1em}p\geq2,\label{eq:cos_terms_nonvan}
\end{equation}
and hence it is sufficient to only deal with the quantities
\begin{align}
\mathcal{C}_{p,j}:= & \int_{A}^{\infty}\int_{0}^{2\pi}\cos\left(2\pi k_{1}r\cos\theta\right)\cos^{2j}\theta\,\dd\theta\frac{\dd r}{r^{2\left(p+1\right)}},\hspace{1em}\hspace{1em}0\leq j\leq p,\hspace{1em}p\geq2.\label{eq:C_pj_def}
\end{align}
As before, $p$ in (\ref{eq:cos_terms_nonvan})--(\ref{eq:C_pj_def})
is related to $q$ from (\ref{eq:LN_def}) as $q=2p+1$, $p\geq2$:
only the terms $\mathcal{O}\left(1/A^{7}\right)$, $\mathcal{O}\left(1/A^{9}\right)$,
..., in (\ref{eq:LN_def}) contribute to $\mathcal{R}_{2n}^{\cos}$.
Hence, we consider $\mathcal{C}_{p,j}$ only for $p\leq\left\lfloor \frac{N-1}{2}\right\rfloor =8$.

We are intending to show that the following estimate holds for all
$0\leq n\leq5$, $0\leq j\leq p$, $2\leq p\leq8$: 
\begin{equation}
\left.\partial_{k_{1}}^{2n}\mathcal{C}_{p,j}\right|_{k_{1}=0^{+}}=\mathcal{O}\left(\frac{1}{A^{5-2n}}\right).\label{eq:C_pj_estim}
\end{equation}

For $0\leq n\leq p$, we have
\[
\left.\partial_{k_{1}}^{2n}\mathcal{C}_{p,j}\right|_{k_{1}=0^{+}}=\left(-1\right)^{n}\left(2\pi\right)^{2n}\int_{A}^{\infty}\int_{0}^{2\pi}\cos^{2j}\theta\,\dd\theta\frac{\dd r}{r^{2\left(p-n\right)+2}},
\]
and hence
\[
\left|\left.\partial_{k_{1}}^{2n+1}\mathcal{C}_{p,j}\right|_{k_{1}=0^{+}}\right|\leq\left(2\pi\right)^{2n+1}\int_{A}^{\infty}\frac{\dd r}{r^{2\left(p-n\right)+2}}=\mathcal{O}\left(\frac{1}{A^{2\left(p-n\right)+1}}\right),
\]
where the convergence of the last integral is due to $p\geq n$. The
obtained estimate satisfies (\ref{eq:C_pj_estim}) due to $p\ge2$.

For $n\geq p+1$, we use (\ref{eq:BessJ0_int_repr}) to rewrite (\ref{eq:C_pj_def})
as
\[
\mathcal{C}_{p,j}=\left(-1\right)^{j}2\pi\int_{A}^{\infty}J_{0}^{\left(2j\right)}\left(2\pi k_{1}r\right)\frac{\dd r}{r^{2\left(p+1\right)}},\hspace{1em}\hspace{1em}0\leq j\leq p,\hspace{1em}p\geq2.
\]
We then evaluate
\begin{align}
\partial_{k_{1}}^{2n}\mathcal{C}_{p,j} & =\left(-1\right)^{j}\left(2\pi\right)^{2\left(p+1\right)}\partial_{k_{1}}^{2\left(n-p\right)-1}\int_{A}^{\infty}J_{0}^{\left(2j+2p+1\right)}\left(2\pi k_{1}r\right)\frac{\dd r}{r}\label{eq:C_pj_der}\\
 & =\left(-1\right)^{j+1}\left(2\pi\right)^{2p+3}A\partial_{k_{1}}^{2\left(n-p\right)-2}\left.\frac{J_{0}^{\left(2j+2p+1\right)}\left(\rho\right)}{\rho}\right|_{\rho=2\pi k_{1}A}\nonumber \\
 & =\left(-1\right)^{j+1}\left(2\pi\right)^{2n+1}A^{2\left(n-p\right)-1}\left.\frac{\dd^{2\left(n-p\right)-2}}{\dd\rho^{2\left(n-p\right)-2}}\left(\frac{J_{0}^{\left(2j+2p+1\right)}\left(\rho\right)}{\rho}\right)\right|_{\rho=2\pi k_{1}A},\nonumber 
\end{align}
where the following identity, for $k_{1}>0$, was used:
\begin{align*}
\partial_{k_{1}}\int_{A}^{\infty}J_{0}^{\left(2j+2p+1\right)}\left(2\pi k_{1}r\right)\frac{\dd r}{r} & =\partial_{k_{1}}\int_{2\pi k_{1}A}^{\infty}J_{0}^{\left(2j+2p+1\right)}\left(\rho\right)\frac{\dd \rho}{\rho}\\
 & =-2\pi A\left.\frac{J_{0}^{\left(2j+2p+1\right)}\left(\rho\right)}{\rho}\right|_{\rho=2\pi k_{1}A}.
\end{align*}
Due to analyticity $J_{0}$ (see beginning of Appendix), we see from
(\ref{eq:BessJ_series}) that its every derivative of odd order is
also analytic and vanishes at zero. Hence, $J_{0}^{\left(2j+2p+1\right)}\left(\rho\right)/\rho$
is analytic as well, and, in particular, has bounded derivatives at
the origin. Therefore, (\ref{eq:C_pj_der}) entails that 
\[
\left|\left.\partial_{k_{1}}^{2n}\mathcal{C}_{p,j}\right|_{k_{1}=0^{+}}\right|\leq\widetilde{C}A^{2\left(n-p\right)-1},
\]
 for some constant $\widetilde{C}>0$, and hence (\ref{eq:C_pj_estim})
follows due to the fact that $p\geq2$.

\medskip

\subparagraph*{\uline{\label{par:norm_estim_step3} Step 3: Asymptotic estimates
for the net moment component}}

\medskip

Since, according to (\ref{eq:a0_def}), we have $c_{0,0}=-\frac{m_{3}}{4\pi}$,
and hence using (\ref{eq:Rn_cos_estim}), it follows from (\ref{eq:d0_ident})
that
\begin{equation}
m_{3}=2\iint_{D_{A}}B_{3}\left(\mathbf{x},h\right)\dd^{2}x+\frac{\pi}{45A^{2}}\left(11c_{2,0}+19c_{0,2}\right)+\mathcal{O}\left(\frac{1}{A^{4}}\right).\label{eq:m3_estim_base}
\end{equation}
Employing (\ref{eq:Rn_sin_estim}), we rewrite (\ref{eq:d3_ident})--(\ref{eq:d11_ident}),
respectively, as
\begin{align}
-\frac{1}{2A^{2}}\iint_{D_{A}}x_{1}^{2}B_{3}\left(\mathbf{x},h\right)\dd^{2}x-\frac{\pi}{8}\left(\frac{m_{3}}{\pi A}+\frac{131c_{2,0}+49c_{0,2}}{45A^{3}}\right) & =\frac{d_{2}}{\left(2\pi A\right)^{2}}+\mathcal{O}\left(\frac{1}{A^{5}}\right)\label{eq:d2_estim}\\
 & =\mathcal{O}\left(\frac{1}{A^{2}}\right),\nonumber 
\end{align}
\begin{align}
\frac{1}{24A^{4}}\iint_{D_{A}}x_{1}^{4}B_{3}\left(\mathbf{x},h\right)\dd^{2}x+\frac{\pi}{192}\left(\frac{m_{3}}{2\pi A}-\frac{229c_{2,0}+41c_{0,2}}{45A^{3}}\right) & =\frac{d_{4}}{\left(2\pi A\right)^{4}}+\mathcal{O}\left(\frac{1}{A^{5}}\right)\label{eq:d4_estim}\\
 & =\mathcal{O}\left(\frac{1}{A^{4}}\right),\nonumber 
\end{align}
\begin{align}
-\frac{1}{720A^{6}}\iint_{D_{A}}x_{1}^{6}B_{3}\left(\mathbf{x},h\right)\dd^{2}x-\frac{\pi}{2}\left(\frac{m_{3}}{\pi A}-\frac{109c_{2,0}+11c_{0,2}}{18A^{3}}\right) & =\frac{d_{6}}{\left(2\pi A\right)^{6}}+\mathcal{O}\left(\frac{1}{A^{5}}\right)\label{eq:d6_estim}\\
 & =\mathcal{O}\left(\frac{1}{A^{5}}\right),\nonumber 
\end{align}
\begin{align}
\frac{1}{40320A^{8}}\iint_{D_{A}}x_{1}^{8}B_{3}\left(\mathbf{x},h\right)\dd^{2}x+\frac{\pi}{147456}\left(\frac{m_{3}}{14\pi A}-\frac{17c_{2,0}+c_{0,2}}{45A^{3}}\right) & =\frac{d_{8}}{\left(2\pi A\right)^{8}}+\mathcal{O}\left(\frac{1}{A^{5}}\right)\label{eq:d8_estim}\\
 & =\mathcal{O}\left(\frac{1}{A^{5}}\right),\nonumber 
\end{align}
\begin{align}
-\frac{1}{3628800A^{10}}\iint_{D_{A}}x_{1}^{10}B_{3}\left(\mathbf{x},h\right)\dd^{2}x-\frac{\pi}{66355200}\left(\frac{m_{3}}{\pi A}-\frac{523c_{2,0}+17c_{0,2}}{105A^{3}}\right) & =\frac{d_{10}}{\left(2\pi A\right)^{10}}+\mathcal{O}\left(\frac{1}{A^{5}}\right)\label{eq:d10_estim}\\
 & =\mathcal{O}\left(\frac{1}{A^{5}}\right).\nonumber 
\end{align}

We see that (\ref{eq:m3_estim_base}) already provides the second-order
estimate of $m_{3}$ given in (\ref{eq:fin_estim_m3_ord2}). The higher-order
estimates given by (\ref{eq:fin_estim_m3_ord3}) and (\ref{eq:fin_estim_m3_ord4})
can be obtained by combining (\ref{eq:d4_estim})--(\ref{eq:d8_estim}).
Note that elimination of $m_{3}$ using either (\ref{eq:d2_estim})
would incur a rather large error $\mathcal{O}\left(1/A\right)$, reducing
the final estimate to the first order. Similarly, any use of (\ref{eq:d10_estim})
would result in an estimate of order $\mathcal{O}\left(1/A^{4}\right)$,
but such an estimate could already be deduced using the other listed
above relations. 

For the sake of simplification, let us set 
\begin{equation}
\widetilde{c}_{2,0}:=\frac{c_{2,0}}{A^{3}},\hspace{1em}\hspace{1em}\widetilde{c}_{0,2}:=\frac{c_{0,2}}{A^{3}},\label{eq:a233_coeffs_tild}
\end{equation}
and rewrite (\ref{eq:m3_estim_base}) and (\ref{eq:d4_estim})--(\ref{eq:d8_estim}),
respectively, as 
\begin{equation}
11\widetilde{c}_{2,0}+19\widetilde{c}_{0,2}=-\frac{45}{\pi}\iint_{D_{A}}B_{3}\left(\mathbf{x},h\right)\dd^{2}x+\frac{45m_{3}}{2\pi A}+\mathcal{O}\left(\frac{1}{A^{5}}\right)=:\mathcal{T}_{0},\label{eq:T0_def}
\end{equation}
\begin{equation}
131\widetilde{c}_{2,0}+49\widetilde{c}_{0,2}=-\frac{180}{\pi A^{2}}\iint_{D_{A}}x_{1}^{2}B_{3}\left(\mathbf{x},h\right)\dd^{2}x-\frac{45m_{3}}{\pi A}+\mathcal{O}\left(\frac{1}{A^{2}}\right)=:\mathcal{T}_{2},\label{eq:T2_def}
\end{equation}
\begin{equation}
229\widetilde{c}_{2,0}+41\widetilde{c}_{0,2}=\frac{360}{\pi A^{4}}\iint_{D_{A}}x_{1}^{4}B_{3}\left(\mathbf{x},h\right)\dd^{2}x+\frac{45m_{3}}{2\pi A}+\mathcal{O}\left(\frac{1}{A^{4}}\right)=:\mathcal{T}_{4},\label{eq:T4_def}
\end{equation}
\begin{equation}
109\widetilde{c}_{2,0}+11\widetilde{c}_{0,2}=\frac{576}{\pi A^{6}}\iint_{D_{A}}x_{1}^{6}B_{3}\left(\mathbf{x},h\right)\dd^{2}x+\frac{18m_{3}}{\pi A}+\mathcal{O}\left(\frac{1}{A^{5}}\right)=:\mathcal{T}_{6},\label{eq:T6_def}
\end{equation}
\begin{equation}
17\widetilde{c}_{2,0}+\widetilde{c}_{0,2}=\frac{1152}{7\pi A^{8}}\iint_{D_{A}}x_{1}^{8}B_{3}\left(\mathbf{x},h\right)\dd^{2}x+\frac{45m_{3}}{14\pi A}+\mathcal{O}\left(\frac{1}{A^{5}}\right)=:\mathcal{T}_{8}.\label{eq:T8_def}
\end{equation}

There are multiple ways to combine equations (\ref{eq:T0_def})--(\ref{eq:T8_def}) to eliminate $\widetilde{c}_{2,0}$ and $\widetilde{c}_{0,2}$.
In particular, we should use the following two relations which are
straightforward to verify:
\begin{equation}
\mathcal{T}_{0}=\mathcal{T}_{4}-2\mathcal{T}_{6},\hspace{1em}\hspace{1em}\mathcal{T}_{0}=4\mathcal{T}_{6}-25\mathcal{T}_{8}.\label{eq:T_lin_dep2}
\end{equation}

Substitution of (\ref{eq:T0_def}), (\ref{eq:T4_def}) and (\ref{eq:T6_def})
in the first relation of (\ref{eq:T_lin_dep2}) allows us to solve
for $m_{3}/A$. This leads to the third-order estimate of $m_{3}$
given by (\ref{eq:fin_estim_m3_ord3}).

Similarly, plugging (\ref{eq:T0_def}), (\ref{eq:T6_def}) and (\ref{eq:T8_def})
into the second relation of (\ref{eq:T_lin_dep2}) gives (\ref{eq:fin_estim_m3_ord4}),
a fourth-order estimate of $m_{3}$. 

\section{Numerical validation and practical considerations\label{sec:numerics}}

We demonstrate results by performing a numerical simulation on a synthetic
example. In this example, we choose magnetisation distribution to
consist of 4 magnetic dipoles: $\vec{\mathcal{M}}\left(\vec{x}\right)=\sum_{j=1}^{4} \vec{m}_{j} \delta\left(\vec{x}-\vec{x_j}\right)$ with $\delta$ denoting the Dirac delta function. The positions and the components of dipolar moments of each dipole are given in Table \ref{tabl:dipoles}.
%

\begin{table}
\begin{centering}
\begin{tabular}{|c|c|c|c|c|}
\hline 
$j$ & $1$ & $2$ & $3$ & $4$\tabularnewline
\hline 
\hline 
$x_{1}^{\left(j\right)}$ & $3.5\cdot10^{-5}$ $\text{m}$ & $0$ $\text{m}$ & $4.0\cdot10^{-5}$ $\text{m}$ & $-4.0\cdot10^{-5}$ $\text{m}$\tabularnewline
\hline 
$x_{2}^{\left(j\right)}$ & $3.0\cdot10^{-5}$ $\text{m}$ & $0$ $\text{m}$ & $-5.5\cdot10^{-5}$ $\text{m}$ & $5.5\cdot10^{-5}$ $\text{m}$\tabularnewline
\hline 
$x_{3}^{\left(j\right)}$ & $1.0\cdot10^{-5}$ $\text{m}$ & $7.0\cdot10^{-5}$ $\text{m}$ & $11.5\cdot10^{-5}$ $\text{m}$ & $2.5\cdot10^{-5}$ $\text{m}$\tabularnewline
\hline 
$m_{1}^{\left(j\right)}$ & $4.5\cdot10^{-12}$ $\text{A}\cdot\text{m}^{2}$ & $2.5\cdot10^{-12}$ $\text{A}\cdot\text{m}^{2}$ & $-3.0\cdot10^{-12}$ $\text{A}\cdot\text{m}^{2}$ & $-1.0\cdot10^{-12}$ $\text{A}\cdot\text{m}^{2}$\tabularnewline
\hline 
$m_{2}^{\left(j\right)}$ & $3.5\cdot10^{-12}$ $\text{A}\cdot\text{m}^{2}$ & $4.5\cdot10^{-12}$ $\text{A}\cdot\text{m}^{2}$ & $2.0\cdot10^{-12}$ $\text{A}\cdot\text{m}^{2}$ & $2.0\cdot10^{-12}$ $\text{A}\cdot\text{m}^{2}$\tabularnewline
\hline 
$m_{3}^{\left(j\right)}$ & $1.0\cdot10^{-12}$ $\text{A}\cdot\text{m}^{2}$ & $0.5\cdot10^{-12}$ $\text{A}\cdot\text{m}^{2}$ & $2.5\cdot10^{-12}$ $\text{A}\cdot\text{m}^{2}$ & $1.5\cdot10^{-12}$ $\text{A}\cdot\text{m}^{2}$\tabularnewline
\hline 
\end{tabular}
\par\end{centering}
\medskip
\caption{Positions and dipolar moments of the synthetic magnetisation distribution
(4 dipoles)}
\label{tabl:dipoles}
\end{table}

The net moment of this magnetisation distribution is equal to 
\begin{equation}
\vec{m}^{\text{true}}=\sum_{j=1}^{4}{\vec{m}}^{\left(j\right)}=\left(3.0,12.0,5.5\right)^{T}10^{-12}\,\,\text{A}\cdot\text{m}^{2}.\label{eq:net_mom_dipoles}
\end{equation}
The produced magnetic field $B_{3}$ is given by
\begin{equation}
B_{3}\left(\mathbf{x},h\right)=\frac{\mu_{0}}{4\pi}\sum_{j=1}^{4}\dfrac{3\left(h-x_{3}^{\left(j\right)}\right)\left[\left(x_{1}-x_{1}^{\left(j\right)}\right)m_{1}^{\left(j\right)}+\left(x_{2}-x_{2}^{\left(j\right)}\right)m_{2}^{\left(j\right)}\right]+\left(2\left(h-x_{3}^{\left(j\right)}\right)^{2}-\left|\mathbf{x}-\mathbf{x}^{\left(j\right)}\right|^{2}\right)m_{3}^{\left(j\right)}}{\left(\left|\mathbf{x}-\mathbf{x}^{\left(j\right)}\right|^{2}+\left(h-x_{3}^{\left(j\right)}\right)^{2}\right)^{5/2}},\label{eq:B3_dipoles}
\end{equation}
and is measured on the disk $D_{A}=\left\{ \mathbf{x}\in\mathbb{R}^{2}:\,\left|\mathbf{x}\right|<A\right\} $
at the height $x_{3}=h=2.5\cdot10^{-4}$ m. Since we now work in Si
units, we should recall Remark \ref{rem:about_units} and take into
account the previously omitted factor $\mu_{0}=4\pi\cdot10^{-7}$
N / A$^{2}$.

In order to check robustness of the moment estimates obtained in Theorem
\ref{thm:main}, we also perform simulations on data with a synthetic
noise. Namely, we modify $B_{3}$ using additive Gaussian white noise
with the amplitude $\sqrt{10^{-\text{SNR}/10}\cdot\text{Var}\left(B_{3}\right)}$,
where $\text{SNR}$ is the signal-to-noise ratio (in decibels) and
$\text{Var}\left(B_{3}\right)$ is the variance of $B_{3}$ on $D_{A}$.
For our simulations, we choose $\text{SNR}=20$ dB which corresponds
to the $10\%$ noise level.

In Figure \ref{fig:B3map}, we illustrate the field (\ref{eq:B3_dipoles})
and its noise component on the disk $D_{A}$ of radius $A=7.5\cdot10^{-4}$
m.

We shall now compute the integrals on the right-hand sides of (\ref{eq:fin_estim_m1m2_ord1})--(\ref{eq:fin_estim_m1m2_ord5})
for different values of $A$. According to our asymptotic result for
large $A$, we expect to see that, as $A$ grows, the values of each
of these integrals converge, with a different rate, to the value of
a component of the net moment given by (\ref{eq:net_mom_dipoles}).
Figures \ref{fig:m1m2}--\ref{fig:m3} show exactly that for the
tangential and normal net moment components, respectively. We note
that in Figure \ref{fig:m3} and further figures involving the normal
net moment component $m_{3}$, a pair of the estimates are used: one
with $x_{j}=x_{1}$ in (\ref{eq:fin_estim_m3_ord3}), (\ref{eq:fin_estim_m3_ord4})
and the other with $x_{j}=x_{2}$.

To illustrate better the variability of the convergence rates for different asymptotic estimates, in Figures \ref{fig:m1m2_log}--\ref{fig:m3_log},
we plot the differences $\left|m_{j}^{\texttt{true}}-m_{j}\right|$,
$j\in\left\{ 1,2,3\right\}$, against $A$ in logarithmic scale.

Finally, in Figures \ref{fig:m1m2_nsy}--\ref{fig:m3_nsy}, we directly test the estimates of the net moment components when the magnetic field is contaminated by noise (with the noise model described above). We observe the persistence of the lower-order estimates for the net moment components whereas the higher-order estimates clearly perform significantly worse.

Let us now briefly comment on the fact that we chose to illustrate
the results on a magnetisation distribution with a singular support.
Besides its simplicity, this choice is physically motivated as any
magnetisation can be thought of a combination of dipole sources. From
mathematical (numerical) viewpoint, a magnetic field produced by continuous
magnetisation distribution is given by the integral whose numerical
approximation (quadrature rule) is nothing but a weighted sum of dipoles.
Consequently, we do not expect results for continuous magnetisation
distributions to be of any drastic difference. On the other hand,
this highlights the applicability of our methodology to the magnetisations
that could be much more singular than smooth or square-integrable functions.

\begin{figure}
\includegraphics[scale=0.5]{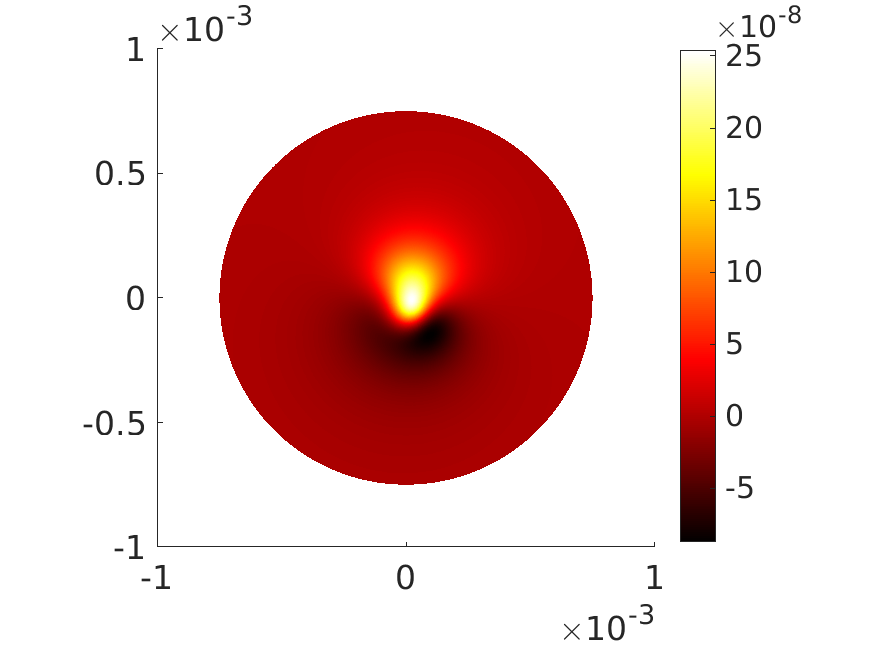}\hspace{15pt}\includegraphics[scale=0.5]{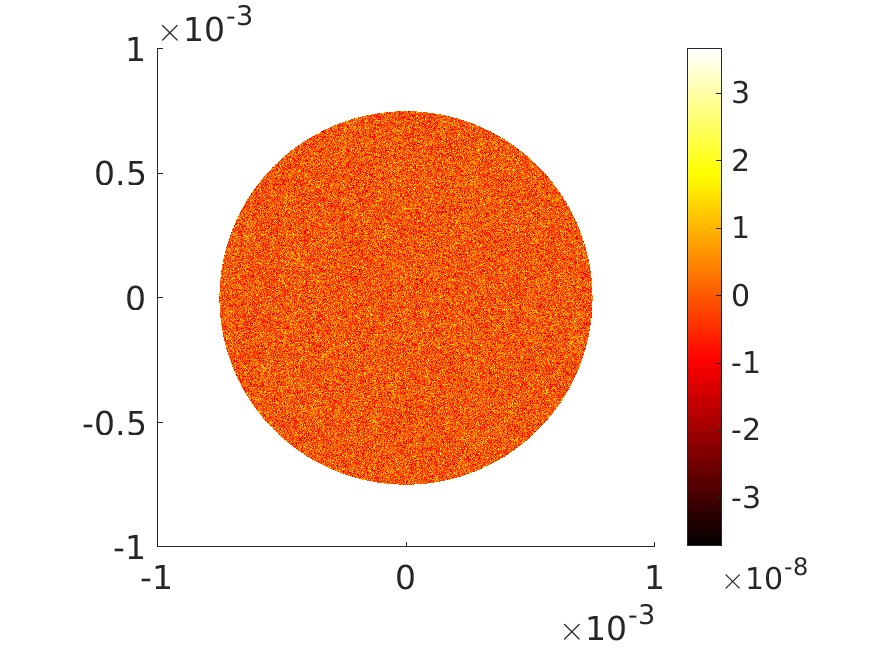}

\caption{\label{fig:B3map} Magnetic field $B_{3}\left(\mathbf{x},h\right)$
(left) and added noise (right) on $D_{A}$ for $A=7.5\cdot10^{-4}$
m.}
\end{figure}
\begin{figure}
\includegraphics[scale=0.5]{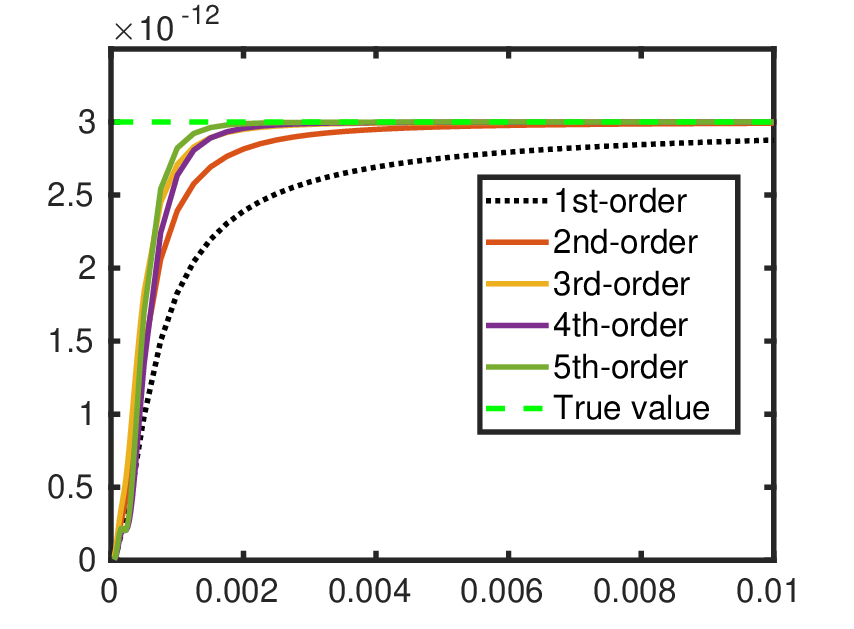}\hspace{15pt}\includegraphics[scale=0.5]{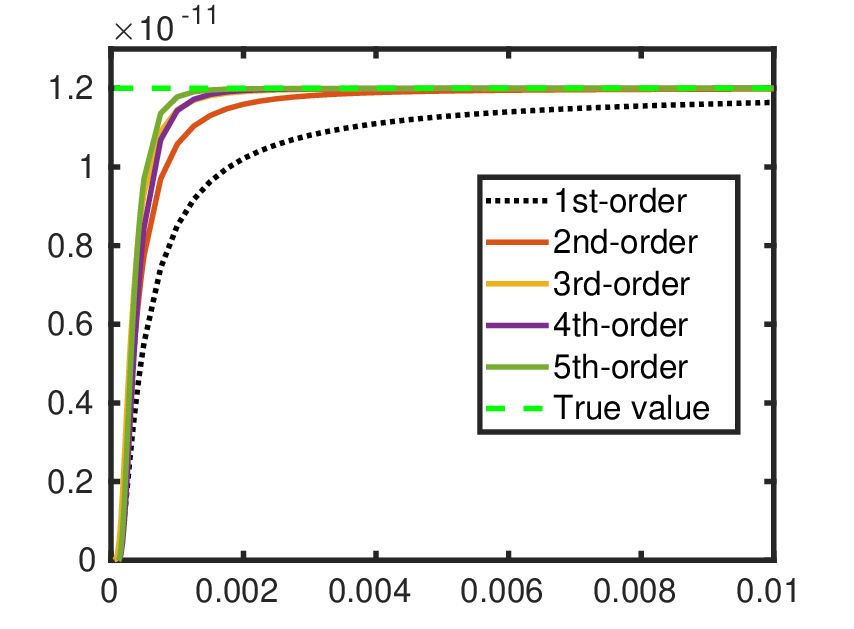}

\caption{\label{fig:m1m2} Estimates of the tangential net moment components
$m_{1}$ (left) and $m_{2}$ (right) versus $A$. }
\end{figure}

\begin{figure}
\includegraphics[scale=0.5]{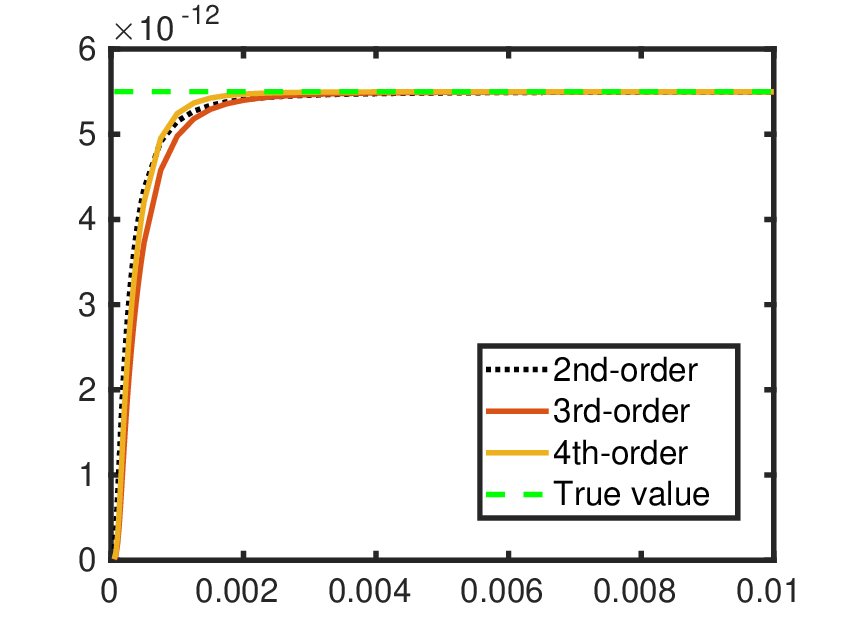}\hspace{15pt}\includegraphics[scale=0.5]{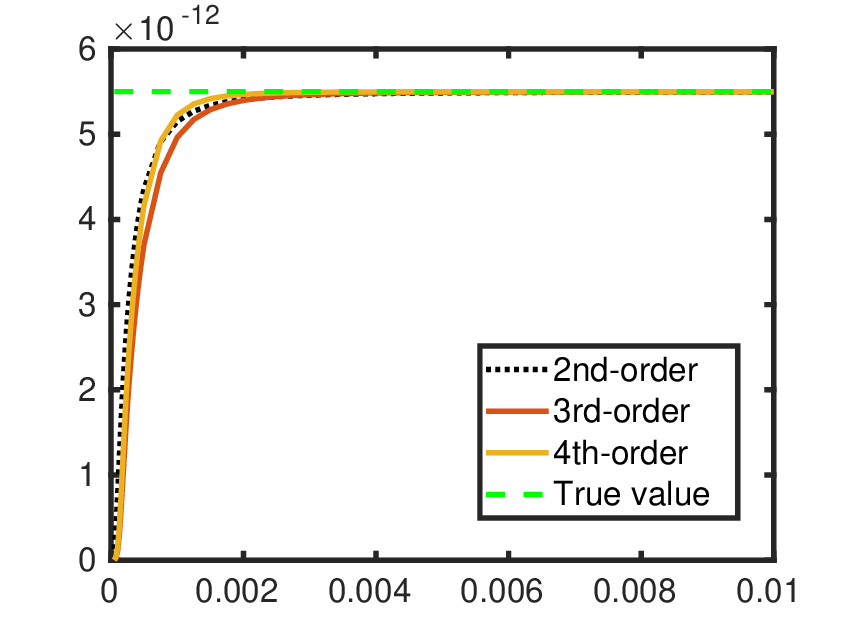}

\caption{\label{fig:m3} Estimates of the normal net moment component $m_{3}$
versus $A$: using $x_{1}$ (left) and $x_{2}$ (right) formulas.}
\end{figure}
\begin{figure}
\includegraphics[scale=0.5]{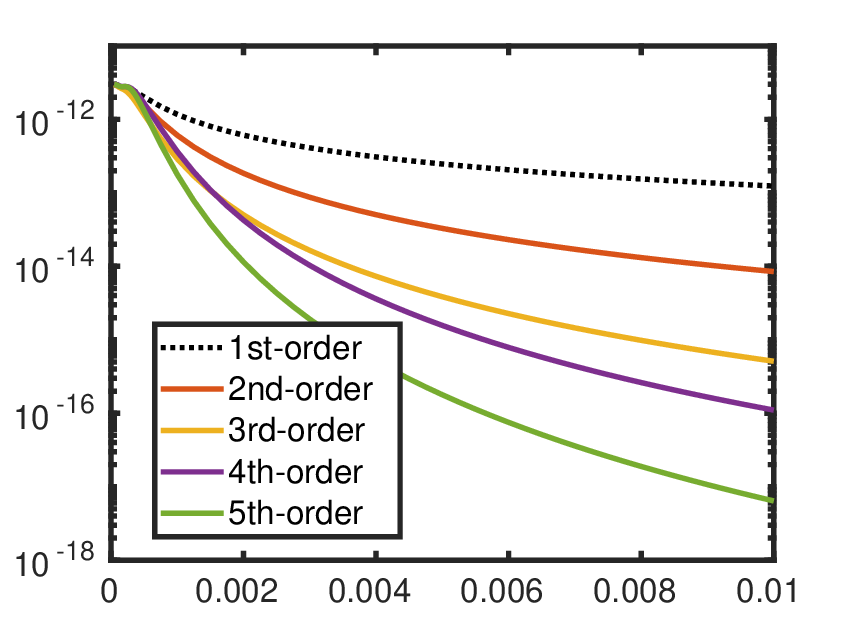}\hspace{15pt}\includegraphics[scale=0.5]{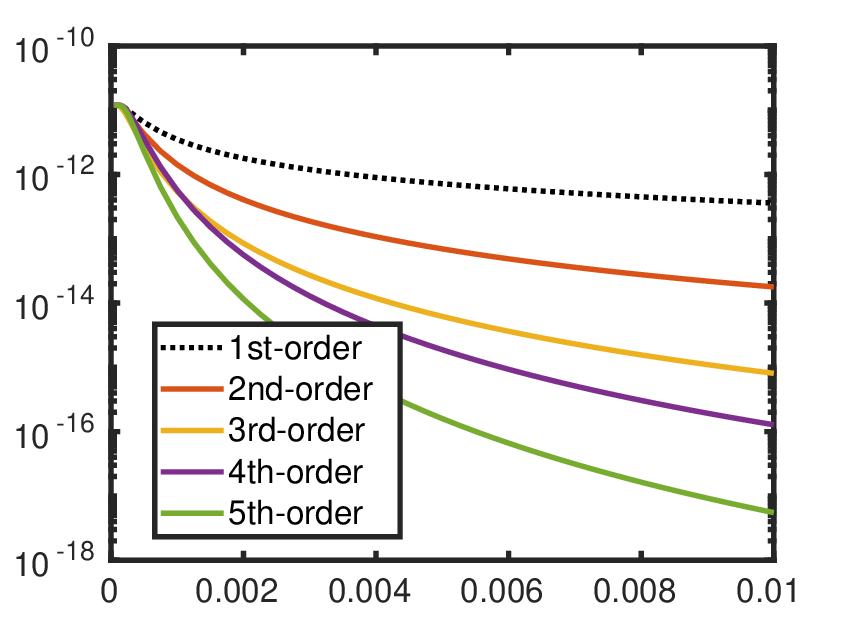}

\caption{\label{fig:m1m2_log} Convergence for the
estimates of $m_{1}$ (left) and $m_{2}$ (right).}
\end{figure}
\begin{figure}
\includegraphics[scale=0.5]{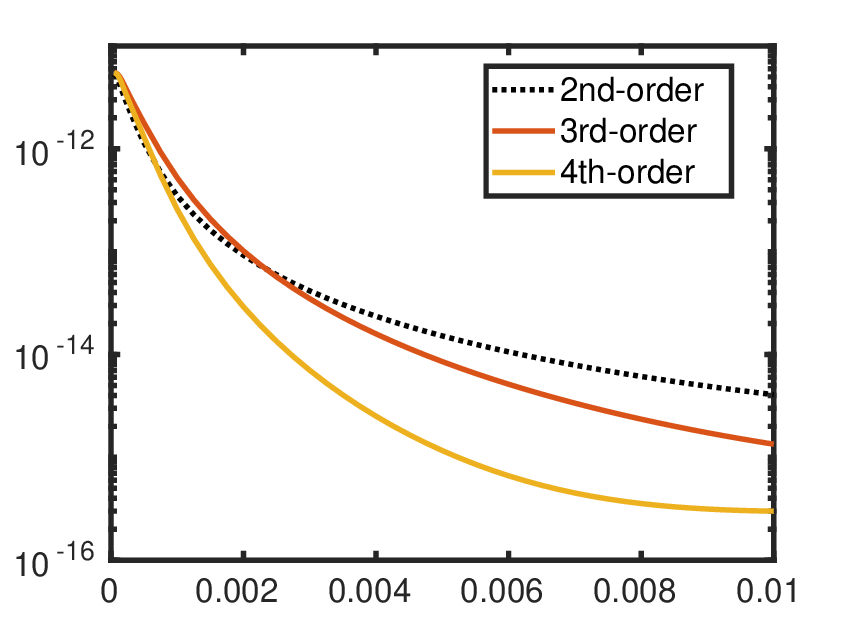}\hspace{15pt}\includegraphics[scale=0.5]{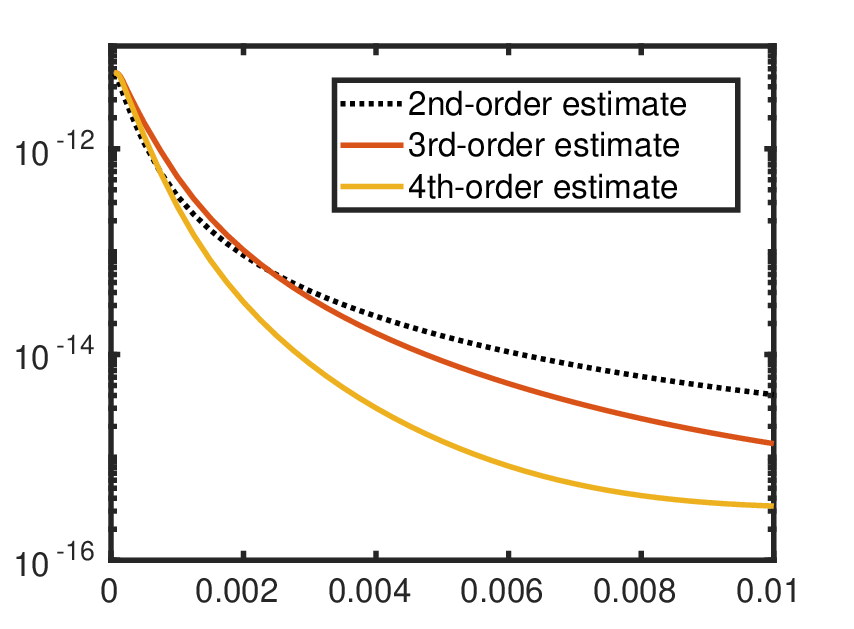}

\caption{\label{fig:m3_log} Convergence for the $m_{3}$
estimates: using $x_{1}$ (left) and $x_{2}$ (right) formulas.}
\end{figure}
\begin{figure}
\includegraphics[scale=0.5]{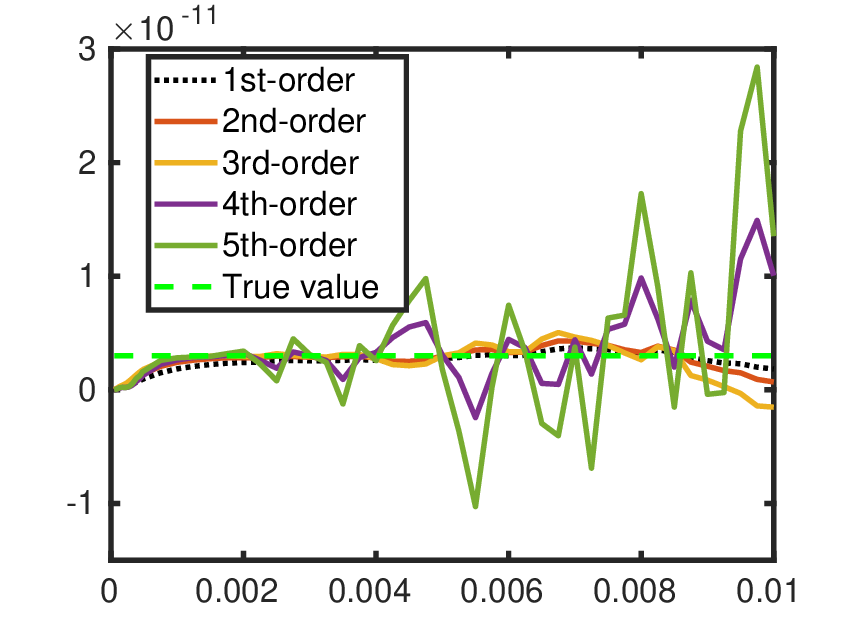}\hspace{15pt}\includegraphics[scale=0.5]{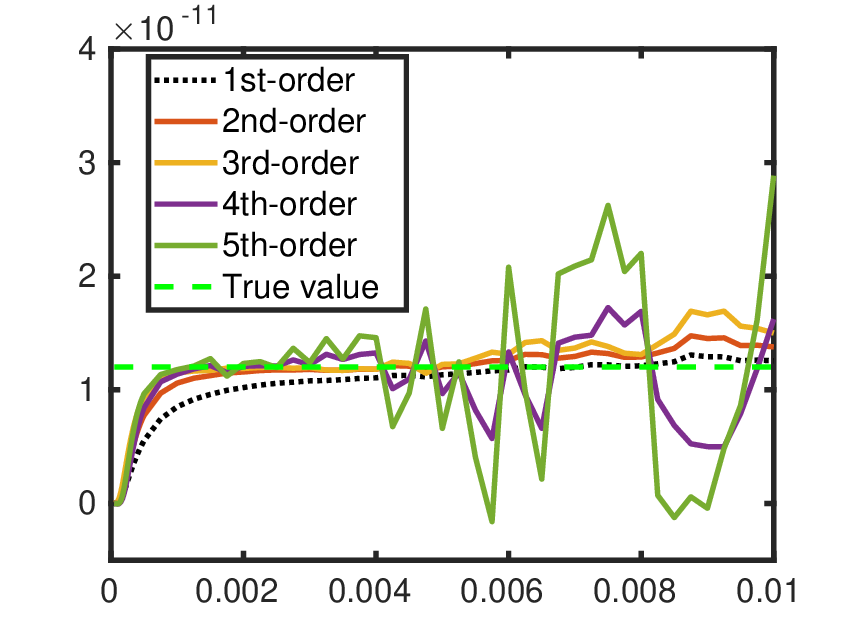}

\caption{\label{fig:m1m2_nsy} Estimates of the tangential net moment components
$m_{1}$ (left) and $m_{2}$ (right) versus $A$. Noisy data.}
\end{figure}

\begin{figure}
\includegraphics[scale=0.5]{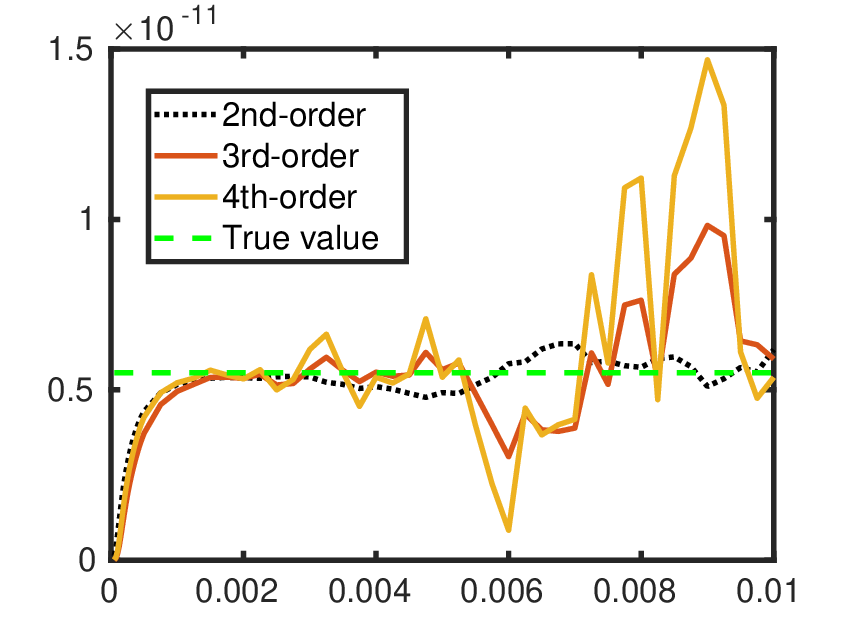}\hspace{15pt}\includegraphics[scale=0.5]{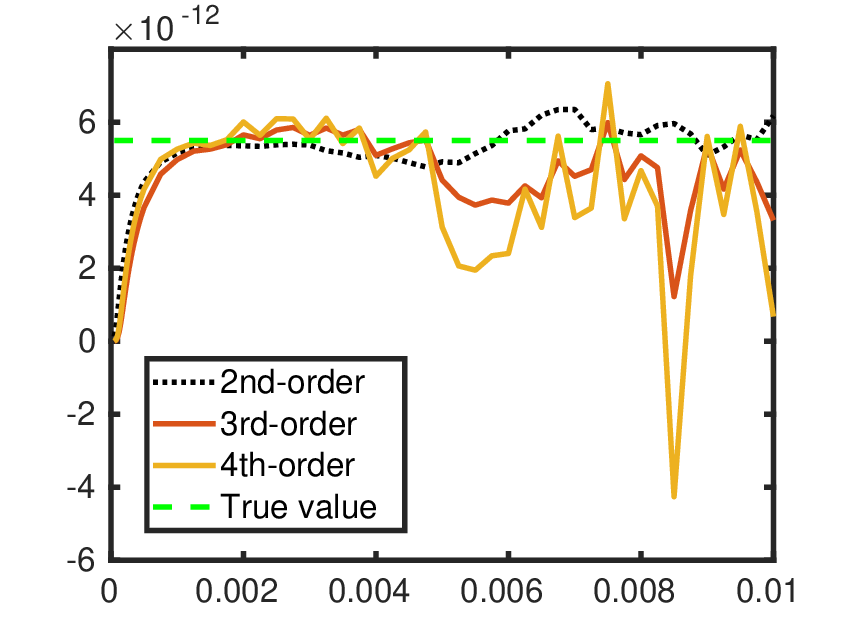}

\caption{\label{fig:m3_nsy} Estimates of the normal net moment component $m_{3}$
versus $A$: using $x_{1}$ (left) and $x_{2}$ (right) formulas.
Noisy data.}
\end{figure}
%

\section{Discussion and conclusion\label{sec:concl}}

Motivated by a concrete experimental set-up, we considered a problem
of estimating net magnetisation of a sample from one component
of the magnetic field available in the limited measurement area in
the plane above the sample. We approached this problem asymptotically,
assuming the size of the measurement area to be large. We derived
a set of explicit formulas for the asymptotic estimates of all three
components of the net moment. For simplicity, we considered only the
case of the circular geometry, i.e., where the measurement area is
a disk. Analogous results to those in Theorem
\ref{thm:main} could be deduced by our method also for the rectangular geometry which
technically is even closer to what is used in Paleomagnetism lab at
EAPS department of MIT, USA. The main difference
in obtaining such results with the present approach would be
a technique of the asymptotic estimation of Fourier integrals (such as the second term in the right-hand side of (\ref{eq:B3_decomp})) which involve both small and large parameters (and hence must likely rely on a partial availability of explicit integration formulas). 

In this paper, we have obtained and proved asymptotic estimates up
to order $5$ for the tangential net moment components $m_{1}$,
$m_{2}$ and up to order $4$ for the normal component $m_{3}$.
The main purpose was, however, to introduce a machinery that can generate
asymptotic estimates of an arbitrary order. It is clear from the proof
of Theorem \ref{thm:main} and auxiliary computations in Appendix
that the asymptotic order of the estimates can be upgraded by proceeding
in the established manner. This would require integration of the field
against polynomials of a higher order. Practical advantages of it,
however, are not yet obvious. First, such estimates are going to
be extremely sensitive to the presence of noise in $B_{3}$: it was demonstrated in Section
\ref{sec:numerics} that while the lower- and mid-order estimates lead to the expected results, the estimates of higher orders are much more prone to the instability. This is, of course, not surprising since, as it was mentioned, the moment recovery is a severely ill-posed problem (see Section \ref{sec:intro} and \cite{BarChevHarLebLimMar2019}), and an order of the asymptotic estimate plays the role of a regularisation parameter here. Second, when pursuing an estimate of a higher order,
one should not forget that the obtained result is of only asymptotic
nature: while a high-order estimate would be advantageous for very
large values of $A$, it may not be so for a smaller $A$ due to a
potentially large value of a multiplicative constant (in $A$) in
the remainder term. In particular, while Figures \ref{fig:m1m2} and \ref{fig:m1m2_log} show uniform improvement over the whole
range of $A$ when using estimates of higher order for $m_1$ and $m_3$, Figures \ref{fig:m3} and \ref{fig:m3_log} demonstrate that for some small pre-asymptotic range of $A$, an estimate for $m_3$ of a lower (second) order can be better than that of a higher (third) order. 
A relevant issue to bear in mind is that, apart from the basic asymptoticness
condition given by (\ref{eq:cond_asympt_thm}) (see also (\ref{eq:cond_asympt_thm_alt_smpl})), proceeding to higher-order
estimates assumes implicitly (but it is evident from the form of
the remainder terms) that the magnetisation is sufficiently localised
so that its higher algebraic moments $L_{j_{1},j_{2},j_{3}}^{\left(n\right)}:=\left\langle x_{1}^{j_{1}}x_{2}^{j_{2}}x_{3}^{j_{3}}M_{n}\right\rangle $,
$n\in\left\{ 1,2,3\right\}$, (see (\ref{eq:alg_moments_3D})) do
not grow too fast with respect to their order $j:=j_{1}+j_{2}+j_{3}$,
namely, that the quantities $L_{j_{1},j_{2},j_{3}}^{\left(n\right)}/A^{j}$
are not large for the value of $A$ in question. A conclusion to draw
from these observations is that the mid-order estimates (e.g., those
of orders $2$ and $3$) are perhaps the best: both from practical prospective
of proximity to the true value of the net moment and from the viewpoint
of the robustness to imperfect measurements. One can also consider
a possibility of choosing an estimate which is best possible for a
given magnitude of $A$. This resembles a problem of finding
an optimal truncation of asymptotic series. 

In future works, we should study relation of our asymptotic results
to stable estimates of the net moment using constrained optimisation
approaches, in particular those obtained in \cite{BarChevHarLebLimMar2019}
for planar $L^2$ magnetisations without the asymptotic assumption.
The shape of auxiliary functions to be integrated against the measured field to produce the net moment estimates looks similar to ours. For example, for estimates of $m_1$ of second order and higher in Theorem \ref{thm:main}, this auxiliary function is seen to consist of the main trend (the $2 x_1$ term) and a correction term (a polynomial depending in the "stretched" variable $x_1/A$) which is the largest towards the boundary of the disk. 
It would be curious to derive an explicit asymptotic from an integro-differential equation of \cite{BarChevHarLebLimMar2019} governing the optimal solution in a particular class. 
Studying other ways of stabilisation of the estimates and regularisation
is also important. In this respect, it is natural to ask a question:
what is the best way of using the redundancy of the set of asymptotic
formulas for the net moment components to arrive at an estimate with
the minimal effect of noise or to have the fastest (non-asymptotic)
convergence to the true value of a net moment? For instance, in the
present approach, we have a natural redundancy for all estimates of
$m_{3}$ of order $3$ and higher due to the freedom of choice $x_{j}=x_{1}$
and $x_{j}=x_{2}$ in formulas (\ref{eq:fin_estim_m3_ord3}), (\ref{eq:fin_estim_m3_ord4})
and so on. In Figure \ref{fig:m3}, these estimates give almost indistinguishable
results (it can be checked that the difference between the two is
not zero but very small), and hence may not look directly useful.
However, since having more relations than unknown quantities is always better when dealing with noise, this is still an advantage.
Also, it is clear from Step 3 of the proof of Theorem \ref{thm:main}
that, upon involving higher order polynomials, a richer set of lower-order
estimates could be obtained.

The form of asymptotic expressions suggests the direct use of polynomial (in a stretched variables $x_j/A$, $j\in\left\{1,2\right\}$) ansatz for the derivation of estimates of higher order, with a potential application of machine learning for identification of their coefficients.

Another path for a possible future work is to explore the possibility of obtaining asymptotic expansions analogous to those given in Theorem \ref{thm:main} but relying on the smallness of a slightly different asymptotic parameter. Namely, if, instead of $\left(x_1^2+x_2^2\right)^{-3/2}$ and $\left(x_1^2+x_2^2\right)^{-5/2}$ in (\ref{eq:expans_alg32}) and (\ref{eq:expans_alg52}), respectively, one factored out $\left(x_1^2+x_2^2+h^2\right)^{-3/2}$ and $\left(x_1^2+x_2^2+h^2\right)^{-5/2}$ and proceeded with appropriate modifications, the final asymptotic results would have a larger area of validity than that described by (\ref{eq:cond_asympt_thm}). In particular, this would cover a reduction to a dipolar case in a situation when the measurement area $D_A$ is not necessarily large but the value of $h$ is. However, since in the mentioned experimental set-up, the height $h$ is small, this modification would yield a little practical benefit but would complicate analysis of Fourier integrals. 

Along the same lines, it should be better understood why the height parameter $h$ does not enter any of the final estimates in the asymptotic regime. If this remains true for estimates of any order, can this be used, for example, as a shortcut to generate the higher-order estimates?   

The results of this work naturally connect to the issue of
the asymptotic field extension. Indeed, the asymptotic field expansion
at infinity (\ref{eq:B3_asympt}) is seen to feature $m_{3}$ at the
leading order and quantities $c_{1,0}$, $c_{0,1}$
at the next order. While the estimates of $m_{3}$ are given, to a
different order, in (\ref{eq:fin_estim_m3_ord2}), (\ref{eq:fin_estim_m3_ord3})
and (\ref{eq:fin_estim_m3_ord4}), it is evident from the proof of
Theorem \ref{thm:main}, that the quantities $c_{1,0}$,
$c_{0,1}$ can also be asymptotically estimated, see, 
e.g., (\ref{eq:a1_tild_ord4}) and (\ref{eq:a1_tild_ord5}) (and their
versions with $x_{1}$ replaced by $x_{2}$, as well as their lower-order
analogs). Thus, this furnishes an explicit $3$-term expansion of
$B_{3}$ at infinity which, in general, when solving inverse magnetisation
problems, should serve as a better alternative to a simple prolongation
of the field by zero outside of the measurement area. Such a strategy
can also potentially be used in setting up an iterative scheme. Of
course, all of this is meaningful only when the actual measurement
area is already large enough so that the asymptotic estimates
for $m_{3}$, $c_{1,0}$, $c_{0,1}$
are sufficiently accurate. 

Finally, on the practical side, it is essential to test the obtained results on a concrete application to see the actual improvement in the existing results and, more importantly, to address the cases which were not amenable for treatment with previous methodologies. Most natural context would certainly be that of estimation of the remanent magnetisation in (terrestrial or extraterrestrial) rock samples using scanning microscopy data, a setting which was the main motivation of the present work. However, due to close proximity of the setting to the active area of magnetic prospection, see, e.g., a very recent work \cite{Car2025}, it will be great to test the already derived formulas in that context. But also, since in that case all three components of the field are available instead of only one, it is worthwhile to investigate improvements of the presented results combining integrals of $B_1$, $B_2$ and $B_3$ giving higher-order estimates of the net moment $\vec{m}$.

\section*{Acknowledgements}

The author acknowledges inspiring discussions on the topic of inverse
magnetisation problems with L. Baratchart, S. Chevillard, J. Leblond,
C. Villalobos-Guill{\'e}n (Centre Inria d'Universit{\'e} C{\^o}te d'Azur, France), D. Hardin (Vanderbilt University,
USA) and E. A. Lima (MIT, USA).

\vspace{10pt}\appendix
\section{}
\renewcommand{\thesection}{\Alph{section}}

We collect here several useful results about cylindrical functions
and some relevant integrals. In what follows we will use the notation
$\mathbb{N}_{+}$ to denote natural numbers, $\mathbb{N}_{0}:=\mathbb{N}_{+}\cup\left\{ 0\right\} $,
and the notation $\mathbb{Z}$ for integer numbers. 

\subsection*{Basic facts about Bessel, Neumann and Struve functions}

The Bessel function $J_{n}$ of order $n\in\mathbb{Z}$ is an entire
function satisfying the differential equation \cite[(10.2.1)]{NIST}
\begin{equation}
z^{2}J_{n}^{\prime\prime}\left(z\right)+zJ_{n}^{\prime}\left(z\right)+\left(z^{2}-n^{2}\right)J_{n}\left(z\right)=0,\hspace{1em}z\in\mathbb{C}.\label{eq:BessJ_ODE}
\end{equation}
We have the following series expansion \cite[(10.2.2)]{NIST}
\begin{equation}
J_{n}\left(z\right)=\left(\frac{z}{2}\right)^{n}\sum_{k=0}^{\infty}\frac{\left(-1\right)^{k}}{k!\,\Gamma\left(n+k+1\right)}\left(\frac{z}{2}\right)^{2k},\label{eq:BessJ_series}
\end{equation}
which, due to the entire character of $J_{n}$, is absolutely convergent
for every $z\in\mathbb{C}$. Here, $\Gamma$ denotes the Euler gamma
function, for which we have, in particular, $\Gamma\left(k\right)=\left(k-1\right)!$
for $k\in\mathbb{N}_{+}$. Moreover, it is worth noting that $1/\Gamma\left(k\right)=0$
for $k\in\mathbb{Z}\backslash\mathbb{N}_{+}$, which implies vanishing
of negative powers of $z$ in expansion (\ref{eq:BessJ_series}) even
for negative orders $n$. \\
The following integral representation holds \cite[(10.9.1)]{NIST}
\begin{equation}
J_{n}\left(x\right)=\dfrac{1}{\pi}\int_{0}^{\pi}\cos\left(nt-x\sin t\right)\dd t,\hspace{1em}x\in\mathbb{R}.\label{eq:BessJ_int_repr}
\end{equation}
In particular,
\begin{equation}
J_{0}\left(x\right)=\dfrac{1}{\pi}\int_{0}^{\pi}\cos\left(x\sin t\right)\dd t=\dfrac{1}{\pi}\int_{0}^{\pi}\cos\left(x\cos t\right)\dd t=\frac{1}{2\pi}\int_{0}^{2\pi}\cos\left(x\cos t\right)\dd t,\hspace{1em}x\in\mathbb{R},\label{eq:BessJ0_int_repr}
\end{equation}
\begin{equation}
J_{1}\left(x\right)=\dfrac{1}{\pi}\int_{0}^{\pi}\cos\left(t-x\sin t\right)\dd t=\frac{1}{2\pi}\int_{0}^{2\pi}\sin\left(x\cos t\right)\cos t\dd t,\hspace{1em}x\in\mathbb{R}.\label{eq:BessJ1_int_repr}
\end{equation}
For $x\gg1$, $n\in\mathbb{N}_{0}$, the leading order asymptotics
reads \cite[(10.17.2)]{NIST}
\begin{equation}
J_{n}\left(x\right)=\left(\frac{2}{\pi x}\right)^{1/2}\cos\left(x-\frac{n\pi}{2}-\frac{\pi}{4}\right)+\mathcal{O}\left(\frac{1}{x^{3/2}}\right),\label{eq:BessJ_asympt}
\end{equation}
where the estimate of the remainder term is due to the discussion
in \cite[Sect. 10.17(iii)]{NIST}.\\
The Bessel functions $J_{n}$, $n\in\mathbb{Z},$ satisfy the connection
formula \cite[(10.4.1)]{NIST}
\begin{equation}
J_{-n}\left(x\right)=\left(-1\right)^{n}J_{n}\left(x\right),\label{eq:BessJ_conn}
\end{equation}
as well as simple recurrence relations \cite[(10.6.1)]{NIST}
\begin{equation}
\dfrac{1}{x}J_{n}\left(x\right)=\dfrac{1}{2n}\left(J_{n-1}\left(x\right)+J_{n+1}\left(x\right)\right),\hspace{1em}n\neq0,\label{eq:BessJ_recurr}
\end{equation}
\begin{equation}
J_{n}^{\prime}\left(x\right)=\dfrac{1}{2}\left(J_{n-1}\left(x\right)-J_{n+1}\left(x\right)\right).\label{eq:BessJ_recurr_diff}
\end{equation}
In particular, (\ref{eq:BessJ_conn}) and (\ref{eq:BessJ_recurr_diff})
entail that
\begin{equation}
J_{0}^{\prime}\left(x\right)=-J_{1}\left(x\right).\label{eq:J0der_J1_conn}
\end{equation}

The Struve function $H_{n}$ of order $n\in\mathbb{N}_{0}\cup\left\{ -1\right\} $
is an entire function defined by the absolutely convergent power series
\cite[(11.2.1)]{NIST}
\begin{equation}
H_{n}\left(z\right)=\left(\dfrac{z}{2}\right)^{n+1}\sum_{k=0}^{\infty}\dfrac{\left(-1\right)^{k}}{\Gamma\left(k+\dfrac{3}{2}\right)\Gamma\left(k+n+\dfrac{3}{2}\right)}\left(\dfrac{z}{2}\right)^{2k}.\label{eq:StruvH}
\end{equation}
The companion Struve function $K_{n}$ of order $n\in\mathbb{Z}$
is defined \cite[(11.2.5)]{NIST} as
\begin{equation}
K_{n}\left(z\right)=H_{n}\left(z\right)-Y_{n}\left(z\right),\label{eq:StruvK}
\end{equation}
where $Y_{n}$ is the Neumann function.\\
For $x\gg1$, $n\in\mathbb{N}_{0}$, the following asymptotics hold
true \cite[(11.6.1)]{NIST}
\begin{equation}
K_{n}\left(x\right)=\frac{1}{\pi}\frac{2^{n+1}n!}{\left(2n\right)!}x^{n-1}+\mathcal{O}\left(x^{n-3}\right),\label{eq:StruvK_asympt}
\end{equation}
\begin{equation}
Y_{n}\left(x\right)=\left(\frac{2}{\pi x}\right)^{1/2}\sin\left(x-\frac{n\pi}{2}-\frac{\pi}{4}\right)+\mathcal{O}\left(\frac{1}{x^{3/2}}\right),\label{eq:BessY_asympt}
\end{equation}
and the remainder terms are discussed in \cite[Sect. 11.6(i)]{NIST}
and \cite[Sect. 10.17(iii)]{NIST}, respectively. The asymptotic behaviour
of $H_{n}\left(x\right)$ for $x\gg1$, $n\in\mathbb{N}_{0}$, hence
follows from (\ref{eq:StruvK})--(\ref{eq:BessY_asympt}). In particular,
for $x\gg1$, 
\begin{equation}
H_{0}\left(x\right)=\left(\frac{2}{\pi x}\right)^{1/2}\sin\left(x-\frac{\pi}{4}\right)+\mathcal{O}\left(\frac{1}{x}\right),\label{eq:StruvH0_asympt}
\end{equation}
\begin{equation}
H_{1}\left(x\right)=\frac{2}{\pi}+\left(\frac{2}{\pi x}\right)^{1/2}\sin\left(x-\frac{3\pi}{4}\right)+\mathcal{O}\left(\frac{1}{x^{3/2}}\right).\label{eq:StruvH1_asympt}
\end{equation}
Moreover, we have the following connection formula
\begin{equation}
H_{-1}\left(x\right)=\dfrac{2}{\pi}-H_{1}\left(x\right).\label{eq:StruvK_conn}
\end{equation}

\subsection*{Some useful integrals pertinent to the cylindrical functions}

The following lemmas establish several integral relations which are
also crucial for the proof in Section \ref{sec:proof}. 
\begin{lem}
\label{lem:int_J1_x_odd} For $n\in\mathbb{N}_{+}$, $\rho>0$, the
following identity holds
\begin{equation}
\int_{\rho}^{\infty}\frac{J_{1}\left(x\right)}{x^{2n+1}}\dd x=\frac{1}{4n^{2}-1}\left[2n\frac{J_{1}\left(\rho\right)}{\rho^{2n}}+\frac{J_{1}^{\prime}\left(\rho\right)}{\rho^{2n-1}}-\int_{\rho}^{\infty}\frac{J_{1}\left(x\right)}{x^{2n-1}}\dd x\right].\label{eq:int_J1_x_odd}
\end{equation}
\end{lem}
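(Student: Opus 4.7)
The plan is to exploit Bessel's ODE of order $1$, namely $x^{2}J_{1}''(x)+xJ_{1}'(x)+(x^{2}-1)J_{1}(x)=0$, to rewrite $J_{1}(x)$ in a form that lets us relate the target integral to one with $x^{2n-1}$ in the denominator. Dividing the ODE by $x^{2n+1}$ and isolating the singular term gives the pointwise identity
\[
\frac{J_{1}(x)}{x^{2n+1}}=\frac{J_{1}(x)}{x^{2n-1}}+\frac{J_{1}''(x)}{x^{2n-1}}+\frac{J_{1}'(x)}{x^{2n}},
\]
valid for $x>0$. Integrating from $\rho$ to $\infty$ then reduces the problem to analysing the two tail integrals involving $J_{1}'$ and $J_{1}''$.

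Next I would perform two integrations by parts. For $\int_{\rho}^{\infty}J_{1}'(x)/x^{2n}\,dx$, differentiating the $1/x^{2n}$ factor produces $2n\int_{\rho}^{\infty}J_{1}(x)/x^{2n+1}\,dx$ plus a boundary contribution $-J_{1}(\rho)/\rho^{2n}$; the boundary term at infinity vanishes thanks to the large-$x$ asymptotics (\ref{eq:BessJ_asympt}), which give $J_{1}(x)=\mathcal{O}(x^{-1/2})$. For $\int_{\rho}^{\infty}J_{1}''(x)/x^{2n-1}\,dx$, a single integration by parts yields $-J_{1}'(\rho)/\rho^{2n-1}+(2n-1)\int_{\rho}^{\infty}J_{1}'(x)/x^{2n}\,dx$, and one may then reinsert the formula just obtained. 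The decay estimate $J_{1}'(x)=\mathcal{O}(x^{-1/2})$ (from differentiating the asymptotic expansion, or equivalently from the recurrence $2J_{1}'=J_{0}-J_{2}$) again makes the upper boundary contribution vanish.

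Denoting $I_{n}:=\int_{\rho}^{\infty}J_{1}(x)/x^{2n+1}\,dx$, the combined computation produces a linear equation of the form
\[
I_{n}=I_{n-1}-\frac{J_{1}'(\rho)}{\rho^{2n-1}}-2n\frac{J_{1}(\rho)}{\rho^{2n}}+4n^{2}\,I_{n},
\]
which, solved for $I_{n}$, is exactly the claimed identity (\ref{eq:int_J1_x_odd}). The main obstacle is really bookkeeping: making sure convergence of all the integrals at infinity is justified (which is why the hypothesis $n\geq 1$ is used, and why one invokes (\ref{eq:BessJ_asympt})) and tracking the algebraic coefficients through the two integration by parts so that the $(2n-1)\cdot 2n+2n=4n^{2}$ collapse is visible. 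No deeper special-function identities are required beyond Bessel's ODE and the decay of $J_{1}$ at infinity.
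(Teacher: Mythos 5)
Your proof is correct and uses essentially the same ingredients as the paper's: the order-one Bessel equation (\ref{eq:BessJ_ODE}), two integrations by parts justified by the decay (\ref{eq:BessJ_asympt}), and solving the resulting linear equation for $I_{n}$ (your $4n^{2}$-coefficient collapse checks out, since $2n(2n-1)+2n=4n^{2}$). The only difference is cosmetic ordering --- you apply the ODE pointwise first and integrate by parts afterwards, while the paper integrates by parts first and invokes the ODE under the integral --- so the two arguments are the same proof rearranged.
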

\begin{proof}
First, upon integration by parts (using the asymptotic behaviour at
infinity of $J_{1}$ given by (\ref{eq:BessJ_asympt})), we have
\begin{equation}
\int_{\rho}^{\infty}\frac{J_{1}\left(x\right)}{x^{2n+1}}\dd x=\frac{1}{2n}\frac{J_{1}\left(\rho\right)}{\rho^{2n}}+\frac{1}{2n}\int_{\rho}^{\infty}\frac{J_{1}^{\prime}\left(x\right)}{x^{2n}}\dd x.\label{eq:int_J1_x_odd_prelim1}
\end{equation}
Then, if we employ (\ref{eq:BessJ_ODE}) to express $J_{1}^{\prime}$
in terms of $J_{1}$ and $J_{1}^{\prime\prime}$, we obtain
\[
\int_{\rho}^{\infty}\frac{J_{1}\left(x\right)}{x^{2n+1}}\dd x=\frac{1}{2n}\frac{J_{1}\left(\rho\right)}{\rho^{2n}}+\frac{1}{2n}\int_{\rho}^{\infty}\frac{J_{1}\left(x\right)}{x^{2n+1}}\dd x-\frac{1}{2n}\int_{\rho}^{\infty}\frac{J_{1}\left(x\right)}{x^{2n-1}}\dd x-\frac{1}{2n}\int_{\rho}^{\infty}\frac{J_{1}^{\prime\prime}\left(x\right)}{x^{2n-1}}\dd x,
\]
and hence
\begin{equation}
\int_{\rho}^{\infty}\frac{J_{1}^{\prime\prime}\left(x\right)}{x^{2n-1}}\dd x=\frac{J_{1}\left(\rho\right)}{\rho^{2n}}-\left(2n-1\right)\int_{\rho}^{\infty}\frac{J_{1}\left(x\right)}{x^{2n+1}}\dd x-\int_{\rho}^{\infty}\frac{J_{1}\left(x\right)}{x^{2n-1}}\dd x.\label{eq:int_J1_x_odd_prelim2}
\end{equation}
On the other hand, returning to (\ref{eq:int_J1_x_odd_prelim1}) and
integrating it by parts again, we arrive at
\begin{align}
\int_{\rho}^{\infty}\frac{J_{1}\left(x\right)}{x^{2n+1}}\dd x & =\frac{1}{2n}\frac{J_{1}\left(\rho\right)}{\rho^{2n}}+\frac{1}{2n\left(2n-1\right)}\frac{J_{1}^{\prime}\left(\rho\right)}{\rho^{2n-1}}+\frac{1}{2n\left(2n-1\right)}\int_{\rho}^{\infty}\frac{J_{1}^{\prime\prime}\left(x\right)}{x^{2n-1}}\dd x\label{eq:int_J1_x_odd_prelim3}\\
 & =\frac{1}{2n-1}\frac{J_{1}\left(\rho\right)}{\rho^{2n}}+\frac{1}{2n\left(2n-1\right)}\frac{J_{1}^{\prime}\left(\rho\right)}{\rho^{2n-1}}-\frac{1}{2n}\int_{\rho}^{\infty}\frac{J_{1}\left(x\right)}{x^{2n+1}}\dd x-\frac{1}{2n\left(2n-1\right)}\int_{\rho}^{\infty}\frac{J_{1}\left(x\right)}{x^{2n-1}}\dd x.\nonumber 
\end{align}
Here, in the second line, we eliminated the integral term involving
$J_{1}^{\prime\prime}$ using (\ref{eq:int_J1_x_odd_prelim2}).

Rearranging the terms in (\ref{eq:int_J1_x_odd_prelim3}) (solving
for the quantity on the left-hand side), we deduce (\ref{eq:int_J1_x_odd}). 
\end{proof}
\begin{lem}
\label{lem:int_J1_x3} For $\rho>0$, we have 
\begin{equation}
\int_{\rho}^{\infty}\frac{J_{1}\left(x\right)}{x^{3}}\dd x=\frac{J_{0}\left(\rho\right)}{3\rho}+\frac{J_{1}\left(\rho\right)}{3\rho^{2}}-\frac{1}{3}-\frac{J_{1}\left(\rho\right)}{3}+\frac{\rho J_{0}\left(\rho\right)}{3}-\frac{\pi\rho}{6}\left[J_{0}\left(\rho\right)H_{1}\left(\rho\right)-J_{1}\left(\rho\right)H_{0}\left(\rho\right)\right].\label{eq:int_J1_x3}
\end{equation}
\end{lem}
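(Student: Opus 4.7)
The plan is to reduce the integral via Lemma \ref{lem:int_J1_x_odd} with $n=1$, which gives
\begin{equation*}
\int_\rho^\infty \frac{J_1(x)}{x^3}\,dx = \frac{1}{3}\left[\frac{2 J_1(\rho)}{\rho^2} + \frac{J_1'(\rho)}{\rho} - \int_\rho^\infty \frac{J_1(x)}{x}\,dx\right].
\end{equation*}
Combining the Bessel recurrences (\ref{eq:BessJ_recurr}) and (\ref{eq:BessJ_recurr_diff}) at $n=1$ yields $J_1'(\rho) = J_0(\rho) - J_1(\rho)/\rho$, which collapses the boundary contribution to $J_0(\rho)/(3\rho) + J_1(\rho)/(3\rho^2)$. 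Matching the target expression then reduces to establishing the closed form
\begin{equation*}
\int_\rho^\infty \frac{J_1(x)}{x}\,dx = 1 + J_1(\rho) - \rho J_0(\rho) + \frac{\pi \rho}{2}\left[J_0(\rho) H_1(\rho) - J_1(\rho) H_0(\rho)\right].
\end{equation*}

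To obtain this, I would first use (\ref{eq:BessJ_recurr}) to write $J_1(x)/x = \frac{1}{2}(J_0(x) + J_2(x))$, then use (\ref{eq:BessJ_recurr_diff}) to express $J_2(x) = J_0(x) - 2 J_1'(x)$. Integrating from $\rho$ to $\infty$ and invoking $J_1(\infty) = 0$ from (\ref{eq:BessJ_asympt}) produces
\begin{equation*}
\int_\rho^\infty \frac{J_1(x)}{x}\,dx = \int_\rho^\infty J_0(x)\,dx + J_1(\rho),
\end{equation*}
so the task reduces to evaluating $\int_\rho^\infty J_0(x)\,dx$ in closed form.

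The core step is the Struve-function antiderivative identity
\begin{equation*}
\int_0^x J_0(t)\,dt = x J_0(x) + \frac{\pi x}{2}\left[J_1(x) H_0(x) - J_0(x) H_1(x)\right],
\end{equation*}
which I would establish by differentiating the right-hand side. Using (\ref{eq:J0der_J1_conn}) together with the Struve derivative formulas $H_0'(x) = 2/\pi - H_1(x)$ and $H_1'(x) = H_0(x) - H_1(x)/x$ (both readily derived from the series (\ref{eq:StruvH}) and the Struve recurrences, the latter equivalent to $(xH_1)' = xH_0$), a direct calculation shows $\frac{d}{dx}\left[x\bigl(J_1(x) H_0(x) - J_0(x) H_1(x)\bigr)\right] = 2 x J_1(x)/\pi$. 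Combined with the elementary IBP identity $\int_0^x t J_1(t)\,dt = -x J_0(x) + \int_0^x J_0(t)\,dt$ and the boundary values $J_0(0) = 1$, $J_1(0) = H_0(0) = H_1(0) = 0$ read off from (\ref{eq:BessJ_series}) and (\ref{eq:StruvH}), this fixes the constants and yields the antiderivative identity.

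Finally, the value $\int_0^\infty J_0(x)\,dx = 1$ falls out by letting $x \to \infty$ in that antiderivative identity and applying the asymptotics (\ref{eq:BessJ_asympt}), (\ref{eq:StruvH0_asympt}), (\ref{eq:StruvH1_asympt}): the $\mathcal{O}(\sqrt{x})$ oscillatory contributions from $x J_0(x)$ and $\frac{\pi x}{2} J_0(x) H_1(x)$ cancel exactly, while the remaining $\mathcal{O}(1)$ pieces combine via $\sin^2 + \cos^2 = 1$. Writing $\int_\rho^\infty J_0 = 1 - \int_0^\rho J_0$, substituting back through the chain, and inserting the result into the Lemma \ref{lem:int_J1_x_odd} reduction reproduces the claimed expression. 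The main obstacle is the Struve-function antiderivative identity above, as it is the only place where the specific Lommel-type combination $J_0 H_1 - J_1 H_0$ must be introduced; every other step is routine bookkeeping with Bessel recurrences and boundary evaluations.
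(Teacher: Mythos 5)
Your proposal is correct --- I checked the reduction, the recurrence simplifications, the differentiation of $x\left[J_{1}H_{0}-J_{0}H_{1}\right]$, and the limiting asymptotics, and each step holds --- but it reaches (\ref{eq:int_J1_x3}) by a genuinely different route than the paper in three places. First, the paper splits $\int_{\rho}^{\infty}J_{1}(x)x^{-3}dx$ via (\ref{eq:BessJ_recurr}) into $\frac{1}{2}\int_{\rho}^{\infty}J_{0}(x)x^{-2}dx+\frac{1}{2}\int_{\rho}^{\infty}J_{2}(x)x^{-2}dx$ and integrates each by parts, while you invoke Lemma \ref{lem:int_J1_x_odd} with $n=1$ to land directly on $\int_{\rho}^{\infty}J_{1}(x)x^{-1}dx$; this is legitimate and non-circular (that lemma is proved independently from the Bessel ODE), and it in effect inverts the paper's own later derivation of (\ref{eq:int_J1_x}) from (\ref{eq:int_J1_x3}) in Lemma \ref{lem:int_J1_var} --- indeed your intermediate closed form for $\int_{\rho}^{\infty}J_{1}(x)x^{-1}dx$ agrees with (\ref{eq:int_J1_x}) once the boundary terms there are simplified with $J_{1}^{\prime}(\rho)=J_{0}(\rho)-J_{1}(\rho)/\rho$. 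Second, where the paper establishes the equivalent of your $\int_{\rho}^{\infty}J_{1}(x)x^{-1}dx=\int_{\rho}^{\infty}J_{0}(x)dx+J_{1}(\rho)$, namely (\ref{eq:int_J2J1_conn}), by a laborious interchange of integration in the trigonometric representations (\ref{eq:BessJ0_int_repr})--(\ref{eq:BessJ1_int_repr}) with a finite-$R$ truncation, you get it in one line from $J_{2}=J_{0}-2J_{1}^{\prime}$ and $J_{1}(\infty)=0$, which is more elementary and handles the conditional convergence just as well (each improper integral converges by (\ref{eq:BessJ_asympt})). Third, the paper simply cites \cite[(10.22.2)]{NIST} for the Struve primitive of $J_{0}$ underlying (\ref{eq:int_J0_Struv_conn}), whereas you prove it self-containedly: with $J_{0}^{\prime}=-J_{1}$ from (\ref{eq:J0der_J1_conn}), $J_{1}^{\prime}=J_{0}-J_{1}/x$, $H_{0}^{\prime}=2/\pi-H_{1}$ and $\left(xH_{1}\right)^{\prime}=xH_{0}$, the derivative of $x\left[J_{1}(x)H_{0}(x)-J_{0}(x)H_{1}(x)\right]$ does collapse to $2xJ_{1}(x)/\pi$, and the integration-by-parts identity plus the boundary values at $x=0$ fix the constant; the cost is the two Struve derivative formulas, which the appendix does not state but which are routine consequences of (\ref{eq:StruvH}) and (\ref{eq:StruvK_conn}). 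Your final step ($\int_{0}^{\infty}J_{0}(x)dx=1$ via exact cancellation of the $xJ_{0}(x)$ terms and $\sin^{2}+\cos^{2}=1$) is the same asymptotic computation the paper performs when letting $R\rightarrow+\infty$ in (\ref{eq:int_J0_Struv_prelim1}). In sum, your version trades the paper's external citation and representation-theoretic manipulations for a fully self-contained argument of comparable length.
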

\begin{proof}
Note that using (\ref{eq:BessJ_recurr}), we have
\begin{equation}
\int_{\rho}^{\infty}\dfrac{J_{1}\left(x\right)}{x^{3}}\dd x=\dfrac{1}{2}\int_{\rho}^{\infty}\dfrac{J_{0}\left(x\right)}{x^{2}}\dd x+\dfrac{1}{2}\int_{\rho}^{\infty}\dfrac{J_{2}\left(x\right)}{x^{2}}\dd x.\label{eq:int_J1_x3_prelim1}
\end{equation}

Let us start by transforming the first term in (\ref{eq:int_J1_x3_prelim1}),
namely,
\begin{align}
\int_{\rho}^{\infty}\dfrac{J_{0}\left(x\right)}{x^{2}}\dd x & =\frac{J_{0}\left(\rho\right)}{\rho}+\int_{\rho}^{\infty}\dfrac{J_{0}^{\prime}\left(x\right)}{x}\dd x\label{eq:int_J1_x3_prelim2}\\
 & =\frac{J_{0}\left(\rho\right)}{\rho}+J_{0}^{\prime}\left(\rho\right)-\int_{\rho}^{\infty}J_{0}\left(x\right)\dd x.\nonumber 
\end{align}
Here, in the first line, we employed integration by parts (together
with the asymptotic behaviour at infinity of $J_{0}$ given by (\ref{eq:BessJ_asympt})).
To arrive at the second line, we used the identity
\[
\frac{J_{0}^{\prime}\left(x\right)}{x}=-J_{0}\left(x\right)-J_{0}^{\prime\prime}\left(x\right),\hspace{1em}x\neq0,
\]
implied by (\ref{eq:BessJ_ODE}). 

Performing the same procedure with the second term in (\ref{eq:int_J1_x3_prelim1}),
we have
\begin{align*}
\int_{\rho}^{\infty}\dfrac{J_{2}\left(x\right)}{x^{2}}\dd x & =\frac{J_{2}\left(\rho\right)}{\rho}+\int_{\rho}^{\infty}\dfrac{J_{2}^{\prime}\left(x\right)}{x}\dd x\\
 & =\frac{J_{2}\left(\rho\right)}{\rho}+J_{2}^{\prime}\left(\rho\right)-\int_{\rho}^{\infty}J_{2}\left(x\right)\dd x+4\int_{\rho}^{\infty}\dfrac{J_{2}\left(x\right)}{x^{2}}\dd x,
\end{align*}
and, hence,
\begin{equation}
\int_{\rho}^{\infty}\dfrac{J_{2}\left(x\right)}{x^{2}}\dd x=-\frac{1}{3}\left[\frac{J_{2}\left(\rho\right)}{\rho}+J_{2}^{\prime}\left(\rho\right)-\int_{\rho}^{\infty}J_{2}\left(x\right)\dd x\right].\label{eq:int_J1_x3_prelim3}
\end{equation}

Expressions (\ref{eq:int_J1_x3_prelim2}) and (\ref{eq:int_J1_x3_prelim3})
imply that the integral on the left-hand side of (\ref{eq:int_J1_x3_prelim1})
is expressible in terms of two integral quantities: $\int_{\rho}^{\infty}J_{0}\left(x\right)dx$
and $\int_{\rho}^{\infty}J_{2}\left(x\right)dx$. Let us now show
that these quantities are simply related:
\begin{equation}
\int_{\rho}^{\infty}J_{2}\left(x\right)dx=\int_{\rho}^{\infty}J_{0}\left(x\right)dx+2J_{1}\left(\rho\right),\label{eq:int_J2J1_conn}
\end{equation}
and, moreover,
\begin{equation}
\int_{\rho}^{\infty}J_{0}\left(x\right)dx=1-\rho J_{0}\left(\rho\right)-\frac{\pi}{2}\rho\left[J_{1}\left(\rho\right)H_{0}\left(\rho\right)-J_{0}\left(\rho\right)H_{1}\left(\rho\right)\right].\label{eq:int_J0_Struv_conn}
\end{equation}

Recalling asymptotics (\ref{eq:BessJ_asympt}), we note that the integrals
on the left-hand sides of (\ref{eq:int_J2J1_conn}), (\ref{eq:int_J0_Struv_conn})
are not absolutely convergent, and hence an additional care with technical
manipulations is needed. Namely, we shall first replace the integration
range $\left(\rho,\infty\right)$ with $\left(\rho,R\right)$ for
arbitrary finite $R>\rho$, and then pass to the limit as $R\rightarrow+\infty$.
We shall proceed in several steps.

\medskip

\uline{Step 1: Establishing \mbox{(\ref{eq:int_J2J1_conn})}} 

\medskip

We start with (\ref{eq:int_J2J1_conn}) and use integral representation
(\ref{eq:BessJ_int_repr}):
\[
J_{2}\left(x\right)=\frac{1}{\pi}\int_{0}^{\pi}\cos\left(2t-x\sin t\right)\dd t=\frac{1}{\pi}\int_{0}^{\pi}\cos\left(x\sin t-2t\right)\dd t.
\]
Consequently, exchanging the order of integration (permissible due
to the regularity of the integrand and finiteness of the integration
limits), we obtain
\begin{align}
\int_{\rho}^{R}J_{2}\left(x\right)\dd x= & \frac{1}{\pi}\int_{0}^{\pi}\frac{\sin\left(R\sin t-2t\right)-\sin\left(\text{\ensuremath{\rho\sin t-2t}}\right)}{\sin t}\dd t\label{eq:int_J2J1_conn_prelim}\\
= & \frac{1}{\pi}\int_{0}^{\pi}\frac{\sin\left(R\sin t\right)-\sin\left(\text{\ensuremath{\rho\sin t}}\right)}{\sin t}\dd t-\frac{2}{\pi}\int_{0}^{\pi}\left[\sin\left(R\sin t\right)\sin t+\cos\left(R\sin t\right)\cos t\right]\dd t\nonumber \\
 & +\frac{2}{\pi}\int_{0}^{\pi}\left[\sin\left(\rho\sin t\right)\sin t+\cos\left(\rho\sin t\right)\cos t\right]\dd t\nonumber \\
= & \int_{\rho}^{R}J_{0}\left(x\right)dx-2\left(J_{1}\left(R\right)-J_{1}\left(\rho\right)\right),\nonumber 
\end{align}
where we used the identities
\[
\sin\left(R\sin t-2t\right)=\left(1-2\sin^{2}t\right)\sin\left(R\sin t\right)-2\sin t\cos t\cos\left(R\sin t\right),
\]

\[
\sin\left(R\sin t\right)\sin t+\cos\left(R\sin t\right)\cos t=\cos\left(R\sin t-t\right),
\]
\[
\frac{1}{\pi}\int_{0}^{\pi}\cos\left(R\sin t-t\right)\dd t=J_{1}\left(R\right),
\]
\[
\frac{1}{\pi}\int_{0}^{\pi}\frac{\sin\left(R\sin t\right)-\sin\left(\rho\sin t\right)}{\sin t}\dd t=\int_{\rho}^{R}J_{0}\left(x\right)\dd x.
\]
and, except for the last one, also their analogs with $\rho$ instead
of $R$. The first two of these identities are purely trigonometrical
whereas the last two are due to (\ref{eq:BessJ0_int_repr})--(\ref{eq:BessJ1_int_repr}).\\
Passing to the limit $R\rightarrow+\infty$ in (\ref{eq:int_J2J1_conn_prelim})
using asymptotics (\ref{eq:BessJ_asympt}), we thus conclude with
(\ref{eq:int_J2J1_conn}).

\medskip

\uline{Step 2: Establishing \mbox{(\ref{eq:int_J0_Struv_conn})}}

\medskip

To deduce relation (\ref{eq:int_J0_Struv_conn}), we use \cite[(10.22.2)]{NIST}
to write
\begin{align}
\int_{\rho}^{R}J_{0}\left(x\right)dx & =\frac{\pi}{2}R\left[J_{0}\left(R\right)H_{-1}\left(R\right)-J_{-1}\left(R\right)H_{0}\left(R\right)\right]-\frac{\pi}{2}\rho\left[J_{0}\left(\rho\right)H_{-1}\left(\rho\right)-J_{-1}\left(\rho\right)H_{0}\left(\rho\right)\right]\label{eq:int_J0_Struv_prelim1}\\
 & =RJ_{0}\left(R\right)-\rho J_{0}\left(\rho\right)+\frac{\pi}{2}R\left[J_{1}\left(R\right)H_{0}\left(R\right)-J_{0}\left(R\right)H_{1}\left(R\right)\right]-\frac{\pi}{2}\rho\left[J_{1}\left(\rho\right)H_{0}\left(\rho\right)-J_{0}\left(\rho\right)H_{1}\left(\rho\right)\right],\nonumber 
\end{align}
where we used (\ref{eq:BessJ_conn}), (\ref{eq:StruvK_conn}) in passing
to the second line.\\
Note that, employing asymptotics (\ref{eq:BessJ_asympt}), (\ref{eq:StruvH0_asympt})--(\ref{eq:StruvH1_asympt}),
we have, for $R\gg1$,
\begin{align*}
RJ_{0}\left(R\right)+\frac{\pi}{2}R\left[J_{1}\left(R\right)H_{0}\left(R\right)-J_{0}\left(R\right)H_{1}\left(R\right)\right] & =\cos\left(R-\frac{3\pi}{4}\right)\sin\left(R-\frac{\pi}{4}\right)-\cos\left(R-\frac{\pi}{4}\right)\sin\left(R-\frac{3\pi}{4}\right)+\mathcal{O}\left(\frac{1}{R^{1/2}}\right)\\
 & =1+\mathcal{O}\left(\frac{1}{R^{1/2}}\right).
\end{align*}
Therefore, by passing to the limit as $R\rightarrow+\infty$ in (\ref{eq:int_J0_Struv_prelim1}),
we obtain (\ref{eq:int_J0_Struv_conn}).

\medskip

\uline{Step 3: Conclusion of the proof}

\medskip

Plugging (\ref{eq:int_J0_Struv_conn}) into (\ref{eq:int_J2J1_conn})
and (\ref{eq:int_J1_x3_prelim2}), we obtain
\begin{equation}
\int_{\rho}^{\infty}J_{2}\left(x\right)\dd x=1+2J_{1}\left(\rho\right)-\rho J_{0}\left(\rho\right)-\frac{\pi}{2}\rho\left[J_{1}\left(\rho\right)H_{0}\left(\rho\right)-J_{0}\left(\rho\right)H_{1}\left(\rho\right)\right],\label{eq:int_J2}
\end{equation}
\begin{equation}
\int_{\rho}^{\infty}\frac{J_{0}\left(x\right)}{x^{2}}\dd x=\frac{J_{0}\left(\rho\right)}{\rho}+J_{0}^{\prime}\left(\rho\right)-1+\rho J_{0}\left(\rho\right)+\frac{\pi}{2}\rho\left[J_{1}\left(\rho\right)H_{0}\left(\rho\right)-J_{0}\left(\rho\right)H_{1}\left(\rho\right)\right],\label{eq:int_J0_x2}
\end{equation}
respectively.\\
Substitution of (\ref{eq:int_J2}) into (\ref{eq:int_J1_x3_prelim3})
gives
\begin{equation}
\int_{\rho}^{\infty}\dfrac{J_{2}\left(x\right)}{x^{2}}\dd x=-\frac{J_{2}\left(\rho\right)}{3\rho}-\frac{1}{3}J_{2}^{\prime}\left(\rho\right)+\frac{1}{3}+\frac{2}{3}J_{1}\left(\rho\right)-\frac{1}{3}\rho J_{0}\left(\rho\right)-\frac{\pi}{6}\rho\left[J_{1}\left(\rho\right)H_{0}\left(\rho\right)-J_{0}\left(\rho\right)H_{1}\left(\rho\right)\right].\label{eq:int_J2_x2}
\end{equation}
Finally, using (\ref{eq:int_J0_x2})--(\ref{eq:int_J2_x2}) in (\ref{eq:int_J1_x3_prelim1}),
we arrive at
\begin{equation}
\int_{\rho}^{\infty}\dfrac{J_{1}\left(x\right)}{x^{3}}\dd x=\frac{J_{0}\left(\rho\right)}{2\rho}-\frac{J_{2}\left(\rho\right)}{6\rho}+\frac{1}{2}J_{0}^{\prime}\left(\rho\right)-\frac{1}{6}J_{2}^{\prime}\left(x\right)-\frac{1}{3}+\frac{\rho}{3}J_{0}\left(\rho\right)+\frac{1}{3}J_{1}\left(\rho\right)+\frac{\pi}{6}\rho\left[J_{1}\left(\rho\right)H_{0}\left(\rho\right)-J_{0}\left(\rho\right)H_{1}\left(\rho\right)\right].\label{eq:intJ_1_x3_prefin}
\end{equation}
Using (\ref{eq:J0der_J1_conn}) and the recursive identities (due
to (\ref{eq:BessJ_recurr})--(\ref{eq:BessJ_recurr_diff}))
\[
J_{2}\left(\rho\right)=\frac{2}{\rho}J_{1}\left(\rho\right)-J_{0}\left(\rho\right),\hspace{1em}\hspace{1em}J_{3}\left(\rho\right)=\frac{4}{\rho}J_{2}\left(\rho\right)-J_{1}\left(\rho\right)=\frac{8}{\rho^{2}}J_{1}\left(\rho\right)-\frac{4}{\rho}J_{0}\left(\rho\right)-J_{1}\left(\rho\right),
\]
\[
J_{2}^{\prime}\left(\rho\right)=\frac{1}{2}\left[J_{1}\left(\rho\right)-J_{3}\left(\rho\right)\right]=-\frac{4}{\rho^{2}}J_{1}\left(\rho\right)+\frac{2}{\rho}J_{0}\left(\rho\right)+J_{1}\left(\rho\right),
\]
we transform (\ref{eq:intJ_1_x3_prefin}) into the desired relation
(\ref{eq:int_J1_x3}).
\end{proof}
\begin{lem}
\label{lem:int_J1_var} For $\rho>0$, in addition to (\ref{eq:int_J1_x3}),
we have the following identities
\begin{align}
\int_{\rho}^{\infty}\dfrac{J_{1}\left(x\right)}{x}\dd x= & \frac{J_{1}\left(\rho\right)}{\rho^{2}}+\frac{J_{1}^{\prime}\left(\rho\right)}{\rho}-\frac{1}{\rho}J_{0}\left(\rho\right)+1+J_{1}\left(\rho\right)-\rho J_{0}\left(\rho\right)\label{eq:int_J1_x}\\
 & +\frac{\pi\rho}{2}\left[J_{0}\left(\rho\right)H_{1}\left(\rho\right)-J_{1}\left(\rho\right)H_{0}\left(\rho\right)\right],\nonumber 
\end{align}
\begin{align}
\int_{\rho}^{\infty}\dfrac{J_{1}\left(x\right)}{x^{5}}\dd x= & \frac{4J_{1}\left(\rho\right)}{15\rho^{4}}+\frac{J_{1}^{\prime}\left(\rho\right)}{15\rho^{3}}-\frac{J_{0}\left(\rho\right)}{45\rho}-\frac{J_{1}\left(\rho\right)}{45\rho^{2}}+\frac{1}{45}+\frac{J_{1}\left(\rho\right)}{45}\label{eq:int_J1_x5}\\
 & -\frac{\rho J_{0}\left(\rho\right)}{45}+\frac{\pi\rho}{90}\left[J_{0}\left(\rho\right)H_{1}\left(\rho\right)-J_{1}\left(\rho\right)H_{0}\left(\rho\right)\right],\nonumber 
\end{align}
\begin{align}
\int_{\rho}^{\infty}\dfrac{J_{1}\left(x\right)}{x^{7}}\dd x= & \frac{6}{35}\frac{J_{1}\left(\rho\right)}{\rho^{6}}+\frac{1}{35}\frac{J_{1}^{\prime}\left(\rho\right)}{\rho^{5}}-\frac{4J_{1}\left(\rho\right)}{525\rho^{4}}-\frac{J_{1}^{\prime}\left(\rho\right)}{525\rho^{3}}+\frac{J_{0}\left(\rho\right)}{1575\rho}+\frac{J_{1}\left(\rho\right)}{1575\rho^{2}}\label{eq:int_J1_x7}\\
 & -\frac{1}{1575}-\frac{J_{1}\left(\rho\right)}{1575}+\frac{\rho J_{0}\left(\rho\right)}{1575}-\frac{\pi\rho}{3150}\left[J_{0}\left(\rho\right)H_{1}\left(\rho\right)-J_{1}\left(\rho\right)H_{0}\left(\rho\right)\right].\nonumber 
\end{align}
\end{lem}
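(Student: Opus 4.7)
The three identities all reduce, mechanically, to the recurrence already proved in Lemma \ref{lem:int_J1_x_odd} combined with the explicit evaluation in Lemma \ref{lem:int_J1_x3}. My plan is to use the recurrence in two directions: once ``inverted'' to peel back from the known case $n=1$ down to $\int_\rho^\infty J_1(x)/x\,dx$, and twice ``forward'' to climb from $n=1$ to $n=2$ and then $n=3$, thereby producing identities (\ref{eq:int_J1_x}), (\ref{eq:int_J1_x5}), (\ref{eq:int_J1_x7}) in that order.

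\textbf{First identity.} Specialising Lemma \ref{lem:int_J1_x_odd} to $n=1$ yields
\begin{equation*}
\int_{\rho}^{\infty}\frac{J_{1}(x)}{x^{3}}\,dx=\frac{1}{3}\left[2\,\frac{J_{1}(\rho)}{\rho^{2}}+\frac{J_{1}^{\prime}(\rho)}{\rho}-\int_{\rho}^{\infty}\frac{J_{1}(x)}{x}\,dx\right].
\end{equation*}
Since the left-hand side is known explicitly from Lemma \ref{lem:int_J1_x3}, solving this linear equation for $\int_\rho^\infty J_1(x)/x\,dx$ and substituting formula (\ref{eq:int_J1_x3}) gives directly (\ref{eq:int_J1_x}); the coefficients $1$, $-\rho J_0(\rho)$, $J_1(\rho)$, etc., appear from multiplying (\ref{eq:int_J1_x3}) by $3$.

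\textbf{Second and third identities.} For (\ref{eq:int_J1_x5}) I simply apply Lemma \ref{lem:int_J1_x_odd} with $n=2$, which expresses $\int_\rho^\infty J_1(x)/x^5\,dx$ as a linear combination of $J_1(\rho)/\rho^4$, $J_1'(\rho)/\rho^3$, and $\int_\rho^\infty J_1(x)/x^3\,dx$; inserting (\ref{eq:int_J1_x3}) and collecting like terms yields the claimed expression (the factor $1/15$ turning the $1/3$'s in (\ref{eq:int_J1_x3}) into the $1/45$'s, and the Struve-function bracket picking up a factor $1/90$). For (\ref{eq:int_J1_x7}) the same recurrence with $n=3$ gives
\begin{equation*}
\int_{\rho}^{\infty}\frac{J_{1}(x)}{x^{7}}\,dx=\frac{1}{35}\left[6\,\frac{J_{1}(\rho)}{\rho^{6}}+\frac{J_{1}^{\prime}(\rho)}{\rho^{5}}-\int_{\rho}^{\infty}\frac{J_{1}(x)}{x^{5}}\,dx\right],
\end{equation*}
and substituting the just-obtained formula (\ref{eq:int_J1_x5}) finishes the proof; the numerical coefficients $1/525$, $1/1575$, $1/3150$ arise from $1/(35\cdot 15)$, $1/(35\cdot 45)$, $1/(35\cdot 90)$, respectively.

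The plan therefore contains no real analytical obstacle: the only work is bookkeeping of rational coefficients to verify that the output of each recurrence step matches the right-hand side displayed in the statement. The sign conventions require some care (in particular the opposite sign of the Struve bracket in (\ref{eq:int_J1_x7}) compared with (\ref{eq:int_J1_x5}), which simply reflects the minus sign in front of the $\int J_1/x^5$ term in the $n=3$ recurrence), but no additional properties of Bessel or Struve functions beyond Lemmas \ref{lem:int_J1_x_odd} and \ref{lem:int_J1_x3} are needed.
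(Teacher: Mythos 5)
Your proposal is correct and coincides with the paper's own proof: both apply Lemma \ref{lem:int_J1_x_odd} with $n=1$ (solved for $\int_{\rho}^{\infty}J_{1}(x)/x\,dx$), $n=2$, and $n=3$, then substitute the explicit evaluation of Lemma \ref{lem:int_J1_x3} (and, for the last identity, the just-derived (\ref{eq:int_J1_x5})). Your coefficient bookkeeping ($1/525=1/(35\cdot15)$, $1/1575=1/(35\cdot45)$, $1/3150=1/(35\cdot90)$) and the sign remark on the Struve bracket are accurate.
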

\begin{proof}
We shall prove identities (\ref{eq:int_J1_x})--(\ref{eq:int_J1_x7})
sequentially.

Applying Lemma \ref{lem:int_J1_x_odd} with $n=1$, we have
\[
\int_{\rho}^{\infty}\dfrac{J_{1}\left(x\right)}{x^{3}}\dd x=\frac{2}{3}\frac{J_{1}\left(\rho\right)}{\rho^{2}}+\frac{1}{3}\frac{J_{1}^{\prime}\left(\rho\right)}{\rho}-\frac{1}{3}\int_{\rho}^{\infty}\dfrac{J_{1}\left(x\right)}{x}\dd x,
\]
and, hence,
\begin{equation}
\int_{\rho}^{\infty}\dfrac{J_{1}\left(x\right)}{x}\dd x=2\frac{J_{1}\left(\rho\right)}{\rho^{2}}+\frac{J_{1}^{\prime}\left(\rho\right)}{\rho}-3\int_{\rho}^{\infty}\dfrac{J_{1}\left(x\right)}{x^{3}}\dd x.\label{eq:int_J1_x_prelim}
\end{equation}
Using Lemma \ref{lem:int_J1_x3}, (\ref{eq:int_J1_x_prelim}) becomes
(\ref{eq:int_J1_x}).

Application of Lemma \ref{lem:int_J1_x_odd} with $n=2$ yields
\begin{equation}
\int_{\rho}^{\infty}\dfrac{J_{1}\left(x\right)}{x^{5}}\dd x=\frac{4}{15}\frac{J_{1}\left(\rho\right)}{\rho^{4}}+\frac{1}{15}\frac{J_{1}^{\prime}\left(\rho\right)}{\rho^{3}}-\frac{1}{15}\int_{\rho}^{\infty}\dfrac{J_{1}\left(x\right)}{x^{3}}\dd x.\label{eq:int_J1_x5_prelim}
\end{equation}
With the help of Lemma \ref{lem:int_J1_x3}, (\ref{eq:int_J1_x5_prelim})
transforms into (\ref{eq:int_J1_x5}).

Finally, we use Lemma \ref{lem:int_J1_x_odd} with $n=3$ to obtain
\begin{equation}
\int_{\rho}^{\infty}\dfrac{J_{1}\left(x\right)}{x^{7}}\dd x=\frac{6}{35}\frac{J_{1}\left(\rho\right)}{\rho^{6}}+\frac{1}{35}\frac{J_{1}^{\prime}\left(\rho\right)}{\rho^{5}}-\frac{1}{35}\int_{\rho}^{\infty}\dfrac{J_{1}\left(x\right)}{x^{5}}\dd x.\label{eq:int_J1_x7_prelim}
\end{equation}
Substitution of the already proven identity (\ref{eq:int_J1_x5})
into (\ref{eq:int_J1_x7_prelim}) furnishes (\ref{eq:int_J1_x7}). 
\end{proof}
\begin{lem}
\label{lem:trig_ints}For $\alpha\in\mathbb{R}$, $m$, $n\in\mathbb{N}_{0}$,
we have the following identities
\begin{equation}
\int_{0}^{2\pi}\cos\left(\alpha\cos\theta\right)\cos^{2m+1}\theta\,\,\sin^{n}\theta \dd\theta=0,\label{eq:trig_int_cos1}
\end{equation}
\begin{equation}
\int_{0}^{2\pi}\cos\left(\alpha\cos\theta\right)\cos^{m}\theta\,\,\sin^{2n+1}\theta \dd\theta=0,\label{eq:trig_int_cos2}
\end{equation}
\begin{equation}
\int_{0}^{2\pi}\sin\left(\alpha\cos\theta\right)\cos^{m}\theta\,\,\sin^{2n+1}\theta \dd\theta=0,\label{eq:trig_int_sin1}
\end{equation}
\begin{equation}
\int_{0}^{2\pi}\sin\left(\alpha\cos\theta\right)\cos^{2m}\theta\,\,\sin^{n}\theta \dd\theta=0.\label{eq:trig_int_sin2}
\end{equation}
\end{lem}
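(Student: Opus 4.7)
The plan is to prove all four identities by elementary symmetry arguments on the unit circle, exploiting the $2\pi$-periodicity of each integrand to transport the interval of integration and then producing a sign-flipping change of variables that forces each integral to equal its own negative.

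For identities (\ref{eq:trig_int_cos2}) and (\ref{eq:trig_int_sin1}), I would observe that $\cos(\alpha\cos\theta)$ and $\sin(\alpha\cos\theta)$, as well as $\cos^{m}\theta$, are even functions of $\theta$, while $\sin^{2n+1}\theta$ is odd. Hence the integrand in each case is an odd, $2\pi$-periodic function. Using periodicity to rewrite $\int_{0}^{2\pi}$ as $\int_{-\pi}^{\pi}$ and then invoking oddness gives zero immediately.

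For identities (\ref{eq:trig_int_cos1}) and (\ref{eq:trig_int_sin2}), the parity in $\theta$ alone is not enough, so I would instead perform the change of variables $\theta \mapsto \pi - \theta$. Under this map $\cos\theta \mapsto -\cos\theta$ while $\sin\theta \mapsto \sin\theta$, so $\cos(\alpha\cos\theta)$ is preserved but $\sin(\alpha\cos\theta)$ picks up a minus sign; simultaneously $\cos^{2m+1}\theta$ flips sign while $\cos^{2m}\theta$ and $\sin^{n}\theta$ are invariant. For (\ref{eq:trig_int_cos1}), the sole sign flip comes from $\cos^{2m+1}\theta$; for (\ref{eq:trig_int_sin2}), the sole sign flip comes from $\sin(\alpha\cos\theta)$. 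In both cases the integrand is sent to its negative. Combining this with $2\pi$-periodicity (which allows the interval $[0,2\pi]$ to be identified with $[-\pi+c,\pi+c]$ for any $c$, and in particular ensures the substituted integral has the same bounds) shows the integral equals its own negative, hence vanishes.

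There is no significant obstacle here: all four identities reduce to a single observation about how $\cos$, $\sin$, $\cos(\alpha\cos\cdot)$, and $\sin(\alpha\cos\cdot)$ transform under $\theta\mapsto-\theta$ or $\theta\mapsto\pi-\theta$, combined with periodicity. The only care needed is to keep track correctly of which pair of substitutions (reflection through $0$ versus through $\pi/2$) produces the required sign flip in each of the four cases, which is what the parity of $m$ and $n$ in the statement is tailored to.
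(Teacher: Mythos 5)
Your proposal is correct, and it takes a genuinely different (and more economical) route than the paper for two of the four identities. For (\ref{eq:trig_int_cos2}) and (\ref{eq:trig_int_sin1}) your parity argument (even $\times$ even $\times$ odd integrand, shifted to $\left[-\pi,\pi\right]$) is essentially the paper's substitution $y=\cos\theta$ in disguise: both exploit oddness under $\theta\mapsto-\theta$, the paper realizing it through an integral with coincident limits. The real divergence is in (\ref{eq:trig_int_cos1}) and (\ref{eq:trig_int_sin2}). For (\ref{eq:trig_int_cos1}) the paper shifts the circle to $\left(-\pi/2,3\pi/2\right)$, splits it at $\pm\pi/2$, and substitutes $y=\sin\theta$ on each half, which forces some bookkeeping about the sign of $\cos\theta=\pm\sqrt{1-\sin^{2}\theta}$ on the two arcs; your reflection $\theta\mapsto\pi-\theta$ achieves the same sign flip in one step with no multivaluedness to track. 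More strikingly, for (\ref{eq:trig_int_sin2}) the paper does not use a symmetry at all: it integrates identity (\ref{eq:trig_int_cos1}) in the parameter $\alpha$ over $\left[0,\beta\right]$ and interchanges the order of integration, using $\int_{0}^{\beta}\cos\left(\alpha\cos\theta\right)d\alpha=\sin\left(\beta\cos\theta\right)/\cos\theta$ to drop the cosine power from $2m+1$ to $2m$ -- a clever but roundabout device, which moreover needs a Fubini-type justification. Your direct observation that $\sin\left(\alpha\cos\theta\right)$ alone flips sign under $\theta\mapsto\pi-\theta$ while $\cos^{2m}\theta$ and $\sin^{n}\theta$ are invariant disposes of (\ref{eq:trig_int_sin2}) immediately, so your argument is both more elementary and more unified: all four identities follow from the two reflections $\theta\mapsto-\theta$ and $\theta\mapsto\pi-\theta$ together with $2\pi$-periodicity, correctly matched to the parities of $m$ and $n$ in each case.
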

\begin{proof}
We shall prove identities (\ref{eq:trig_int_cos1})--(\ref{eq:trig_int_sin2})
one after another.

Using periodicity of the integrand, we can shift the interval from
$\left(0,2\pi\right)$ to $\left(-\pi/2,3\pi/2\right)$ and further
split it in two: 
\begin{align}
\int_{0}^{2\pi}\cos\left(\alpha\cos\theta\right)\cos^{2m+1}\theta\,\,\sin^{n}\theta \dd\theta & =\int_{-\pi/2}^{3\pi/2}\cos\left(\alpha\cos\theta\right)\cos^{2m+1}\theta\,\,\sin^{n}\theta \dd\theta\label{eq:trig_int_cos1_prelim}\\
 & =\int_{-\pi/2}^{\pi/2}\ldots+\int_{\pi/2}^{3\pi/2}\ldots.\nonumber 
\end{align}
Performing the change of variable $y=\sin\theta$, $\dd y=\cos\theta \dd\theta$
in each of the integrals on the second line of (\ref{eq:trig_int_cos1_prelim}),
we have
\[
\int_{-\pi/2}^{\pi/2}\cos\left(\alpha\cos\theta\right)\cos^{2m+1}\theta\,\,\sin^{n}\theta \dd\theta=\int_{-1}^{1}\cos\left(\alpha\sqrt{1-y^{2}}\right)\left(1-y^{2}\right)^{m}y^{n}\dd y,
\]
\begin{align*}
\int_{\pi/2}^{3\pi/2}\cos\left(\alpha\cos\theta\right)\cos^{2m+1}\theta\,\,\sin^{n}\theta \dd\theta & =\int_{1}^{-1}\cos\left(-\alpha\sqrt{1-y^{2}}\right)\left(-\sqrt{1-y^{2}}\right)^{2m}y^{n}\dd y\\
 & =-\int_{-1}^{1}\cos\left(\alpha\sqrt{1-y^{2}}\right)\left(1-y^{2}\right)^{m}y^{n}\dd y.
\end{align*}
Here, we used the fact that $\cos\theta=\sqrt{1-\sin^{2}\theta}>0$
for $\theta\in\left(-\pi/2,\pi/2\right)$ and $\cos\theta=-\sqrt{1-\sin^{2}\theta}<0$
for $\theta\in\left(\pi/2,3\pi/2\right)$ which makes 
both integral quantities on the second line of (\ref{eq:trig_int_cos1_prelim})
opposite to each other in sign and thus entails (\ref{eq:trig_int_cos1}).

To show (\ref{eq:trig_int_cos2}), a useful change of variable is
$y=\cos\theta$, $\dd y=-\sin\theta \dd\theta$. This, upon splitting the integration range to into $\left(0,\pi\right)$ and $\left(\pi,2\pi\right)$, leads to
\begin{align*}
\int_{0}^{2\pi}\cos\left(\alpha\cos\theta\right)\cos^{m}\theta\,\,\sin^{2n+1}\theta \dd\theta&=-\int_{1}^{-1}\cos\left(\alpha y\right)y^{m}\left(1-y^{2}\right)^{n}\dd y - \int_{-1}^{1}\cos\left(\alpha y\right)y^{m}\left(1-y^{2}\right)^{n}\dd y\\
&=\int_{-1}^{1}\cos\left(\alpha y\right)y^{m}\left(1-y^{2}\right)^{n}\dd y - \int_{-1}^{1}\cos\left(\alpha y\right)y^{m}\left(1-y^{2}\right)^{n}\dd y\\
&=0.
\end{align*}

Similarly, we have
\begin{align*}
\int_{0}^{2\pi}\sin\left(\alpha\cos\theta\right)\cos^{m}\theta\,\,\sin^{2n+1}\theta \dd\theta&=-\int_{1}^{-1}\sin\left(\alpha y\right)y^{m}\left(1-y^{2}\right)^{n}\dd y - \int_{-1}^{1}\sin\left(\alpha y\right)y^{m}\left(1-y^{2}\right)^{n}\dd y\\
&=0,
\end{align*}
which proves (\ref{eq:trig_int_sin1}).

Finally, to show (\ref{eq:trig_int_sin2}), we use the fact that identity
(\ref{eq:trig_int_cos1}) holds, in particular, for any $\alpha\in\left[0,\beta\right]$
with arbitrary $\beta\geq0$. Integrating it in $\alpha$ over this
interval and interchanging the order of integration (permissible by
the regularity of the integrand and the finite integration range),
we obtain
\[
0=\int_{0}^{\beta}\int_{0}^{2\pi}\cos\left(\alpha\cos\theta\right)\cos^{2m+1}\theta\,\,\sin^{n}\theta \dd\theta \dd\alpha=\int_{0}^{2\pi}\sin\left(\beta\cos\theta\right)\cos^{2m}\theta\,\,\sin^{n}\theta \dd\theta.
\]
Since the same reasoning also works $\beta\leq0$ by working with
an interval $\left[\beta,0\right]$, identity (\ref{eq:trig_int_sin2})
is thus proved up to a change of the notation $\beta$ to $\alpha$.
\end{proof}


\begin{thebibliography}{References}
\bibitem{AndPed1979} Andersen, F. H., Pedersen, L. B., ``Some relations between potential fields and the strength and centre of their sources'', Geophys. Prospect., 27 (4), 761--774, 1979.

\bibitem{BarChevHarLebLimMar2019} Baratchart, L., Chevillard, S.,
Hardin, D., Leblond, J., Lima, E., Marmorat, J. P., ``Magnetic moment
estimation and bounded extremal problems'', Inverse Problems and
Imaging, 13 (1), 2019.

\bibitem{BarChevLebHarLimPon2018} Baratchart, L., Chevillard, S.,
Leblond, J., Lima, E., Ponomarev, D., ``Asymptotic method for estimating
magnetic moments from field measurements on a planar grid'' (preprint),
HAL Id: hal-01421157, 2018.

\bibitem{BarHarLimSafWei2013} Baratchart, L., Hardin, D. P., Lima,
E. A., Saff, E. B., Weiss, B. P., ``Characterizing kernels of operators
related to thin-plate magnetizations via generalizations of Hodge
decompositions'', Inverse Problems, 29 (1), 2013.

\bibitem{BarLebLimPon2017}Baratchart, L., Leblond, J., Lima, E.,
Ponomarev, D., \textquotedbl Magnetization moment recovery using
Kelvin transformation and Fourier analysis\textquotedbl , J. Phys.
Conf. Ser., 904, 2017.

\bibitem{BarGuiHarNorSaf2020} Baratchart, L., Villalobos-Guill{\'e}n,
C., Hardin, D. P., Northington, M. C., Saff, E. B., ``Inverse potential
problems for divergence of measures with total variation regularization'',
Foundations of Computational Mathematics, 20 (5), 1273--1307, 2020.

\bibitem{BarLebNem2023} Baratchart, L., Leblond, J., Nemaire, M., ``Silent sources in $L^p$ and Helmholtz-type decompositions'', HAL preprint hal:03915548, 2023.

\bibitem{Bla1996} Blakely, R.J., ``Potential theory in gravity and magnetic applications'', Cambridge University Press, 1996.


\bibitem{Car2025} Caratori Tontini, F., ``Seamount Magnetism From Helbig's Integrals: Application to the Rano Rahi Seamount Field'', J. Geophys. Res.: Solid Earth, 130 (2), 2025.

\bibitem{CarPed2008} Caratori Tontini, F., Pedersen,  L. B., ``Interpreting magnetic data by integral moments'', Geophys. J. Int. 174 (3), 815--824, 2008.

\bibitem{FuLimVolTrub2020} Fu, R. R., Lima, E. A., Volk, M. W., Trubko,
R., ``High-sensitivity moment magnetometry with the quantum diamond
microscope'', Geochemistry, Geophysics, Geosystems, 21 (8), 2020.

\bibitem{Gun1975} Gunn, P. J., ``Linear transformations of gravity and magnetic fields'', Geophys. Prospect., 23 (2), 300--312, 1975.

\bibitem{Hel1963} Helbig, K., ``Some integrals of magnetic anomalies and their relation to the parameters of the disturbing body'', Zeitschrift f{\"u}r Geophysik, 29(2), 83--96, 1963.

\bibitem{LimWeiBarHarSaf2013} Lima, E. A., Weiss, B. P., Baratchart,
L., Hardin, D. P., Saff, E. B., ``Fast inversion of magnetic field
maps of unidirectional planar geological magnetization'', J. Geophys.
Res.: Solid Earth, 118 (6), 2723--2752, 2013.

\bibitem{LimWei2009} Lima, E. A., Weiss, B. P., ``Obtaining vector
magnetic field maps from single-component measurements of geological
samples'', Journal of Geophysical Research: Solid Earth, 114 (B6),
2009.

\bibitem{LimWei2016} Lima, E. A., Weiss, B. P., ``Ultra-high sensitivity
moment magnetometry of geological samples using magnetic microscopy'',
Geochemistry, Geophysics, Geosystems, 17 (9), 3754--3774, 2016.

\bibitem{LimWei2023} Lima, E. A., Weiss, B. P., Borlina, C. S., Baratchart, L., Hardin, D. P., ``Estimating the net magnetic moment of geological samples from planar field maps using multipoles'', Geochemistry, Geophysics, Geosystems, 24, 2023.

\bibitem{McKeFosHil2012} McKenzie, B., Foss, C., Hillan, D., ``Issues related to determination of the horizontal centre of magnetization'', ASEG Extended Abstracts, 2012(1), 1--4, 2012.

\bibitem{MedSil1995} Medeiros, W. E., Silva, J. B. C., ``Simultaneous estimation of total magnetization direction and 3-D spatial orientation'', Geophys., 60 (5), 1365--1367, 1995.

\bibitem{NIST} Olver, F. W. J., Lozier, D. W., Boisvert, R. F., Clark, C. W., ``NIST Handbook of Mathematical Functions'', Cambridge University Press, 2010.

\bibitem{Phil2005} Phillips, J. D., ``Can we estimate total magnetization directions from aeromagnetic data using Helbig's integrals?", Earth, Planets and Space, 57, 681--689, 2005.

\bibitem{Thesis} Ponomarev, D., ``Some inverse problems with partial
data'', Doctoral thesis, Universit\'{e} de Nice - Sophia Antipolis
(University of C\^{o}te d'Azur), 2016.

\bibitem{Pon2024} Ponomarev, D., ``A method to extrapolate the data for the inverse magnetisation problem with a planar sample'', Inverse Problems: Modeling and Simulation - Extended Abstracts of IPMS Conference 2024, 2025.


\bibitem{Schm-Cl1998} Schmidt, P.W., Clark, D.A., ``The calculation of magnetic components and moments from TMI: A case study from the Tuckers igneous complex, Queensland'', Exploration Geophysics, 29(4), 609--614, 1998.

\bibitem{SteWei2016}Stein, E. M., Weiss, G., ``Introduction to Fourier
Analysis on Euclidean Spaces'', Princeton University Press, 2016.

\bibitem{Strichartz} Strichartz, R., ``Guide to Distribution Theory
and Fourier Transforms'', CRC Press, 1994.

\bibitem{WeiLimFonBau2007} Weiss, B. P., Lima, E. A., Fong, L. E.,
Baudenbacher, F. J., ``Paleomagnetic analysis using SQUID microscopy'',
Journal of Geophysical Research: Solid Earth, 112 (B9), 2007.
\end{thebibliography}
\end{document}